\newcommand{\ignore}[1]{}
\newcounter{revcount}
\newcommand{\revhighlight}[2][]{%
\refstepcounter{revcount}%
 \ifodd\displayrev%
  {\color{blue}{[\therevcount~HIGHLIGHT, #1:] #2}}%
 \else%
  #2%
 \fi%
}
\newcommand{\revdelete}[2][]{%
\refstepcounter{revcount}%
\ifodd\displayrev%
  {\color{red}{[\therevcount~DELETED, #1:] #2}}%
 \else%
 \fi%
}
\newcommand{\revadd}[2][]{%
\refstepcounter{revcount}%
\ifodd\displayrev%
  {\color{teal}{[\therevcount~ADDED, #1:] #2}}%
 \else%
  #2%
 \fi%
}
\newcommand{\revreplace}[3][]{%
\refstepcounter{revcount}%
 \ifodd\displayrev%
  {\color{red}{[\therevcount~REPLACED, #1:] #2}}\ {\color{teal}{[BY:] #3}}%
 \else%
  #3%
 \fi%
}
\newtheorem{defn}{Definition}
\newtheorem{cor}{Corollary}
\newtheorem{prop}{Proposition}
\newtheorem{nota}{Notation}
\newtheorem{assu}{Assumption}
\newcommand{\mf}[1]{{\color{red}{ [MF: #1]}}}
\newcommand{\xw}[1]{{\color{blue}{ [XW: #1]}}}
\renewcommand{\H}{{\mathcal{H}}}
\newcommand{\Z}{{\mathcal{Z}}}
\newcommand{\A}{{\mathcal{A}}}
\newcommand{\B}{{\mathcal{B}}}
\newcommand{\chaincount}{\textbf{DROP chain count}}
\newcommand{\ProtPCpos}{\textbf{ADAPT PC pos}}
\newcommand{\chaintuple}[1]{\left(#1\right)_{\textbf{ADAPT chain tuple}}}
\newcommand{\chainindex}{\textbf{DROP chain index}}
\newcommand{\bits}{\{0,1\}}
\newcommand{\genesis}{{\cal G}}
\newcommand{\chain}{\mathsf{C}}
\newcommand{\Chains}{\mathcal{C}}
\newcommand{\chainset}{\mathbb{C}}
\newcommand{\Blocks}{\mathcal{B}}
\newcommand{\blockset}{\mathbb{B}}
\newcommand{\conblockset}{\mathbb{V}}
\newcommand{\head}{\mathsf{head}}
\newcommand{\past}{\mathsf{past}}
\newcommand{\eqdef}{\stackrel{\triangle}{=}}
\newcommand{\defeq}{\eqdef}
\newcommand{\party}{\mathsf{P}}
\newcommand{\epochlen}{R}
\newcommand{\slot}{\mathrm{sl}}
\newcommand{\thres}{T}
\newcommand{\VRF}{\mathsf{VRF}}
\newcommand{\vrflengthout}{\ell_{\mathsf{VRF}}}
\newcommand{\vvrf}{v^{\mathrm{vrf}}}
\newcommand{\vrfout}{y}
\newcommand{\vrfproof}{\pi}
\newcommand{\vkes}{v^{\mathrm{kes}}}
\newcommand{\vsig}{v^{\mathrm{sig}}}
\newcommand{\actvsl}{f}
\newcommand{\relstake}{\alpha}
\newcommand{\Stakedist}{\mathbb{S}}
\newcommand{\nonce}{\eta}
\newcommand{\slotnow}{{\slot^*}}
\newcommand{\jnow}{{j^*}}
\newcommand{\data}{d}
\newcommand{\blkproof}{{\pi_B}}
\newcommand{\nonceseed}{\rho}
\newcommand{\kessig}{\sigma}
\newcommand{\msg}[2]{\ensuremath{(\mathsf{#1}, #2)}}
\newcommand{\concat}{\,\|\,}
\newcommand{\functionality}{\mathcal{F}}
\newcommand{\Fdiffuse}{\functionality_{\mathsf{diff}}}
\newcommand{\Finit}{\functionality_{\mathsf{init}}}
\newcommand{\Fvrf}{\functionality_{\mathsf{vrf}}}
\newcommand{\Fsig}{\functionality_{\mathsf{sig}}}
\newcommand{\Fkes}{\functionality_{\mathsf{kes}}}
\newcommand{\RO}{\functionality_{\mathsf{ro}}}
\newcommand{\ro}{\mathsf{H}}
\newcommand{\GetValidTX}{\mathsf{GetValidTX}}
\newcommand{\TruncByEpoch}[2]{#1^{|#2}} 
\newcommand{\Included}{\mathcal{I}}
\newcommand{\maxvalid}{\mathsf{selectchain}}
\newenvironment{mfboxfig}[2]{
  \begin{figure}[t!]
    \newcommand{\FigCaption}{#1}
    \newcommand{\FigLabel}{#2}
    \begin{center}
      \begin{small}
      \setlist{nosep}
        \begin{tabular}{@{}|@{~~}l@{~~}|@{}}
          \hline
          \rule[-1.5ex]{0pt}{1ex}
          \begin{minipage}[b]{0.96\linewidth}
            \mfskip{-1mm}
             \caption{\FigCaption}
             \label{\FigLabel}
             \smallskip
            }{%
          \end{minipage}\\
          \hline
        \end{tabular}
      \end{small}
    \end{center}
    \vskip -8mm
  \end{figure}
}
\newcommand{\env}{\ensuremath{\mathcal{Z}}}
\newcommand{\mfskip}[1]{} 
\def\E{\mathbb{E}}
\def\V{\mathbb{V}}
\newcommand{\minotaur}{{\sf Minotaur} }
\newcommand{\minotaurnosp}{{\sf Minotaur}\ignorespaces}
\newcommand{\bitcoin}{{\sf Bitcoin} }
\newcommand{\bitcoinnosp}{{\sf Bitcoin}\ignorespaces}
\newcommand{\fruitchains}{{\sf FruitChain} }
\newcommand{\fruitchainsnosp}{{\sf FruitChain}\ignorespaces}
\renewcommand\footnotetextcopyrightpermission[1]{} 
\begin{document}
\fancyhead{}
\def\displayrev{0} 

\title{\minotaurnosp: Multi-Resource Blockchain Consensus
}

\author{Matthias Fitzi} \authornote{The first two authors contributed equally to this work. For correspondence on the paper, please contact Matthias Fitzi at matthias.fitzi@iohk.io.}
\affiliation{%
  \institution{IOG}
  \country{}
}
\email{matthias.fitzi@iohk.io}

\author{Xuechao Wang}\authornotemark[1]
\affiliation{%
  \institution{University of Illinois Urbana-Champaign}
  \country{}
}
\email{xuechao2@illinois.edu}

\author{Sreeram Kannan}
\affiliation{%
  \institution{University of Washington at Seattle}
  \country{}
}
\email{ksreeram@uw.edu}

\author{Aggelos Kiayias}
\affiliation{
  \institution{University of Edinburgh \& IOG}
  \country{}
}
\email{aggelos.kiayias@ed.ac.uk}

\author{Nikos Leonardos}
\affiliation{
  \institution{University of Athens}
  \country{}
}
\email{nikos.leonardos@gmail.com}

\author{Pramod Viswanath}
\affiliation{%
  \institution{Princeton University}
  \country{}
}
\email{pramodv@princeton.edu}

\author{Gerui Wang}
\affiliation{%
  \institution{Beijing Academy of Blockchain and Edge Computing}
  \country{}
}
\email{wanggerui@baec.org.cn}

\begin{abstract}
Resource-based consensus is the backbone of permissionless distributed ledger systems. 
The security of such protocols relies fundamentally on the level of resources actively engaged in the system. 
The variety of different resources (and related proof protocols, some times referred to as PoX in the literature) raises the fundamental question whether it is possible to utilize many of them in tandem and build {\em multi-resource} consensus protocols. The challenge in combining different resources is to achieve {\em fungibility} between them, in the sense that security would hold as long as the {\em  cumulative} adversarial power across all resources is bounded.  

In this work, we put forth \minotaurnosp, a multi-resource blockchain consensus protocol that combines proof-of-work (PoW) and proof-of-stake (PoS), and we prove it {\em  optimally} fungible. 
At the core of our design, \minotaur operates in epochs while continuously sampling the active computational power to provide a fair exchange between the two resources, work and stake. Further, we demonstrate the ability of \minotaur to handle a higher degree of work fluctuation as compared to the \bitcoin blockchain; we also generalize \minotaur to any number of resources.  
  
We demonstrate the simplicity of \minotaur via implementing a full stack client in Rust (available open source~\cite{Minotaurcode}). We use the client to test the robustness of \minotaur to variable mining power and combined work/stake attacks and demonstrate concrete empirical evidence towards the suitability of \minotaur to serve as the consensus layer of a real-world blockchain. 
\end{abstract}

\maketitle
\pagestyle{plain}

\section{Introduction}\label{sec:intro}








\noindent {\bf Resource-based Consensus}. The fundamental feature of the decentralized computing paradigm of permissionless blockchains is that participation in the consensus protocol is enabled by proving possession of a resource.
\bitcoin pioneered this idea based on proof-of-work (PoW), and its implied energy wastage inspired new designs based on alternative resources such as proof-of-stake (PoS), proof-of-space (PoSp) and proof-of-elapsed-time (PoD).
These different resources cover the variegated and multidimensional forms of `capital' that the participants bring.
Each  proof-of-X mechanism is interesting on their merit in appropriate settings (computation in PoW, memory and storage in PoSp and time/delay in PoD, tokens/capital in PoS).
Further, the mechanisms trigger different incentives even within the same resource format; focusing on PoW for example, we see that the hashcash algorithm~\cite{back2002hashcash} instantiated by the SHA256 function~\cite{sha256} implemented in \bitcoin has been replaced by others,
including scrypt~\cite{percival2016scrypt} and ethash~\cite{ethash}.
Each of these different proof-of-X mechanisms have led to different, and isolated, blockchain ecosystems. 

\noindent {\bf Combining different resources}. Given the diversity of incentivization embodied by  different resources, it is a natural question whether it is feasible to combine their features into a single blockchain design (e.g., hybrid PoW-PoS permissionless blockchains).
The key challenge to combining multiple resources is in determining the {\em exchange rate}, i.e., to what extent the different resources are translated into power of authority in the system
and to extract security from these resources in an optimal manner. 
Adapting the exchange rate dynamically to the participation levels in each resource type, resulting in a truly {\em fungible} notion of resource, is a basic and fundamental goal. 
By truly fungible, we mean that the security of the multi-resource consensus protocol is guaranteed as long as the honest players control a majority of the combined resources that consist of each resource type in the system --- concretely, 
$\sum^M_{i=1} \beta_i <M/2$ where $M$ is the number of resources
and $\beta_i$ the adversarial power in the $i$-th resource
(this property is appropriately generalized to any linear combination in  Definition~\ref{def:adv}).  
\revadd[Request~\ref{req:motivation}\label{ans:motivation}]{We note that achieving this type of fungibility is beneficial for the security of the underlying blockchain system as the adversary will be unable to launch a successful attack by commanding merely $51\%$ in one of the underlying resources.}
Next we give some examples of {\em non-fungible} hybrid PoW/PoS protocol designs.

\noindent {\bf First order hybrid PoW/PoS protocols.} Given the basic importance of incorporating multiple resources in a single blockchain design, there are several  designs of  hybrid PoW/PoS protocols in the literature~\cite{KinNad12,BLMR14,FDKT+20,DCFZ18,karakostas2021securing}. These  protocols, however, are either heuristic (lacking a formal security  analysis)~\cite{KinNad12,BLMR14,DCFZ18}, or make assumptions that break fungibility (e.g. honest majority in stake~\cite{FDKT+20,DCFZ18,karakostas2021securing} and/or static mining power~\cite{FDKT+20}).  
Indeed, with sufficient assumptions the problem of a hybrid protocol is trivially resolved. For instance, if we assume an honest majority in stake (at all  time), we can use a committee chosen randomly from the pool of stakeholders to assist a PoW ledger by regularly issuing {\em checkpoints}~\cite{karakostas2021securing}. However, the security of this scheme is solely guaranteed by the stakeholders, and the trust is entirely supported by PoS (not from  PoW).

\noindent {\bf Natural solutions}. 
In fact, if we assume a static setting (where both the total mining power and total `active' stake are fixed and known to the protocol designer), a simple solution is the following: PoW and PoS mining occur in parallel; and whichever PoW or PoS
succeeds first, goes ahead and extends the longest chain. However, there does not appear to be any straightforward (or otherwise) approach to extend this simple idea to the non-static setting, since we can no longer normalize the stake and work in the system when the total mining power is changing and unknown (\S\ref{sec:base}).  Even a fungible combination of work (i.e., total amount of honest work is more than total Byzantine work) emanating from two different hash functions (thus allowing  two  PoW blockchains using the same longest-chain consensus protocol to co-exist) has been unsolved.

\noindent {\bf Minotaur Protocol.} In this paper we present \minotaurnosp, a block\-chain protocol with proof of fungible work and stake.
At its core, \minotaur is a longest chain protocol that bases its block-proposer
schedule on our concept of \emph{virtual stake}  that fuses active {\em actual stake}
and {\em work stake}. 
%
The work-stake distribution is determined per epoch by measuring the participants' contributions in hashing power during a prior epoch. Work contribution is achieved by concurrent PoW mining of \emph{PoW blocks} (similar to `endorser inputs'~\cite{kiayias2017ouroboros} or `fruits'~\cite{pass2017fruitchains}) to be eventually referenced by main-chain blocks. The work-stake distribution is then assigned proportionally to a participant's share of endorser blocks referenced from the main chain during that epoch; and applying the fairness mechanisms in~\cite{GKL2015,pass2017fruitchains,kiayias2017ouroboros}, this assignment indeed truly represents the PoW-resource distribution among the participants.
\revadd{Note that this process can be seen as a fine-grained adaptation of PoW-based committee election such as in~\cite{PasShi16}.}
%
Figure~\ref{fig:minotaur} illustrates the protocol (a detailed description is given in \S\ref{sec:protocol}).

\noindent {\bf Minotaur security is optimal.} We show that \minotaur is  truly fungible between work and stake by showing it is secure when the sum of $\omega$ times the proportion of adversarial hash power ($\beta_w$) and $1-\omega$ times the proportion of the adversarial stake ($\beta_s$) is smaller than $1/2$, for any  $\omega \in [0,1]$, a weighing parameter of PoW/PoS (see Figure~\ref{fig:pow_pos} for $\omega = 0.5$). Figure~\ref{fig:pow_pos} also compares the security guarantee of \minotaur and a couple of non fungible  PoW/PoS protocols in the literature (the checkpointed ledger~\cite{karakostas2021securing}, the 2-hop blockchain~\cite{FDKT+20}, and a few finality gadgets~\cite{stewart2020grandpa,sankagiri2020blockchain,buterin2017casper,neu2021ebb}), with more details in Appendix~\ref{app:defend}.  We give a formal security analysis in \S\ref{sec:proof}. 
The new challenge that we have to tackle in our security analysis is to show a {\it fairness} guarantee in the work-stake conversion (e.g., a miner with 10\% mining power should gain 10\% work stake) in the \textit{non-static} setting and in the presence of an \textit{adversarial majority} of mining power. This requires a new understanding of how to bound the evolution of mining difficulty in the system compared to techniques used in previous works and the \bitcoin blockchain, and presents a significant barrier to surmount in our analysis.
One immediate consequence is that \minotaur can tolerate a 51\% mining adversary, as long as there is a supermajority in honest stake (and vice verse, a 51\% stake adversary). We prove our  security guarantee to be optimal, see \S\ref{sec:impos}, in the sense that otherwise the adversary will control the majority of the combined ``resource'' in the system and an analogy of Nakamoto's private chain attack~\cite{nakamoto2008bitcoin} could break the security (formal and detailed proof in \S\ref{sec:impos}). 
We also show that \minotaur is capable of tolerating fluctuations  of work much more effectively than \bitcoinnosp, cf. \S\ref{sec:fluctuation analysis}.

\begin{figure}
    \centering
    \begin{subfigure}{0.45\textwidth}
        \centering
        \includegraphics[width=\textwidth]{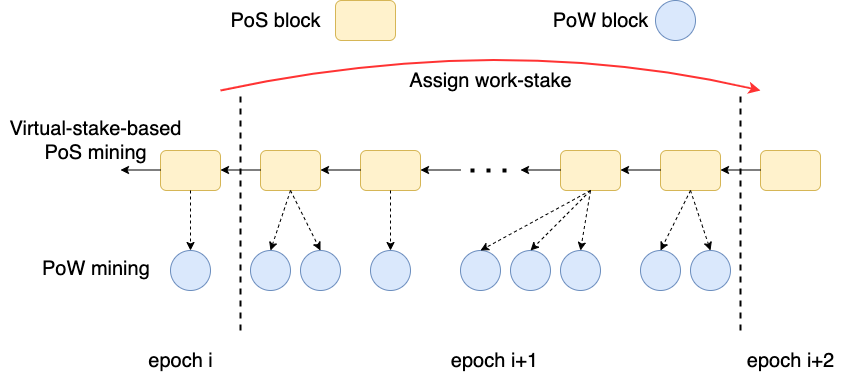}
        \caption{}
        \label{fig:minotaur}
    \end{subfigure}
    \hfill
    \begin{subfigure}{0.4\textwidth}
        \centering
        \includegraphics[width=\textwidth]{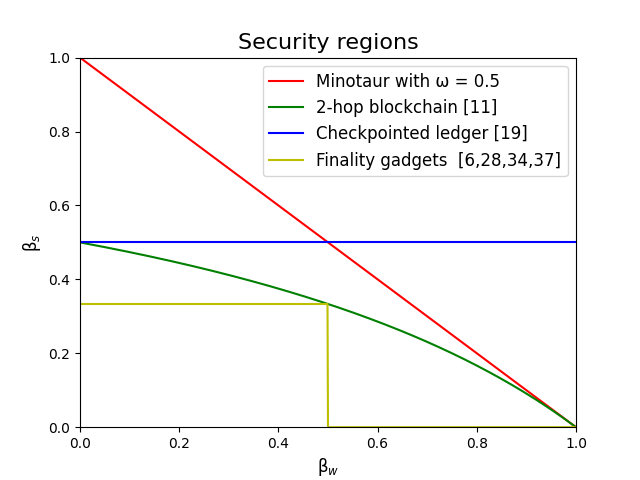}
        \caption{}
        \label{fig:pow_pos}
    \end{subfigure}
    \caption{(a) Architecture of \minotaur consensus protocol. (b) Comparison of security regions achieved by different hybrid PoW/PoS protocols (more details in Appendix~\ref{app:defend}).}
    \vspace{-0.5cm}
\end{figure}

\noindent {\bf Implementation.} We have implemented a prototype of a \minotaur client in about 6000 lines of Rust code~\cite{Minotaurcode} and provide experimental results to evaluate the \revdelete{concrete} performance of \minotaur under different scenarios. \revreplace{We also implemented a \bitcoin client as a benchmark.}{We also implemented \bitcoinnosp/\fruitchainsnosp/Ouroboros clients as benchmarks.} We use that client to validate also experimentally that \minotaur can survive more drastic variations of network hash power, while \bitcoinnosp/\fruitchains is no longer live under the same scenario. 
We also evaluate practical security concerns, like private chain, selfish mining and spamming attacks, on our \minotaur client. Our implementation shows that \minotaur performs well even under these attacks, and makes a stronger case for the practical viability and robustness of the system. \revreplace{More details on}{Details about} the implementation and the experiments \revreplace{can be found}{are given} in \S\ref{sec:exp}.

\revreplace[Request~\ref{req:motivation}\label{ans:motivation:2}]{ 
\noindent {\bf Launching considerations.} We point out that \minotaur provides also a secure and efficient method to bootstrap a PoW-based block\-chains. In general, it is more challenging to securely launch PoW blockchains, as a new blockchain would presumably start off with a relatively smaller total computation power and be vulnerable to the so-called ``51\% attacks''. On the other hand, PoS blockchains are easy to launch with existing techniques, such as proof-of-burn~\cite{karantias2020proof}, initial coin offering~\cite{li2018initial} and airdrop~\cite{airdrop}. The key advantage of  \minotaur is that it can launch in pure PoS mode (setting parameter $\omega=0$) and later transit into a pure PoW blockchain by gradually changing the weighing parameter $\omega$ to a higher value (\S\ref{sec:discuss}).
}
{ 
\noindent {\bf Motivation.}
Combining multiple resources helps to protect against scenarios where the adversary may gain control over the majority of one single resource but not over the combination of all involved resources.
For instance, to violate security in the PoW/PoS case with $\omega=0.5$, an adversary controlling half of the hash power would now additionally need to control half of the stake.

As a concrete example, consider the launch of a new blockchain. A PoS chain can be easily bootstrapped with existing techniques such proof-of-burn~\cite{karantias2020proof}, initial coin offering~\cite{li2018initial} and airdrop~\cite{airdrop}.
In contrast, the bootstrapping of a PoW chain is challenging, as the new system would presumably start off with a relatively small total computation power and thus be vulnerable to ``$51\%$ attacks''. Especially in view of the many PoW ledgers currently competing for the dedication of hashing power, a large established \bitcoin miner suddenly diverting their resources to the new system may easily result in their obtaining full control of the system.
To defend against such situations, \minotaur allows to launch a (projected) PoW blockchain in pure PoS mode (setting parameter $\omega=0$) and later transit into a pure PoW blockchain by gradually changing the weighing parameter $\omega$ to a higher value (\S\ref{sec:discuss}), thus allowing to safely ramp up the applied hashing power until a safe level of participation has been reached.
} 

\noindent {\bf Related work.}
The idea of hybrid PoW/PoS block production was first mentioned in~\cite{KinNad12}. A first concrete construction was given in~\cite{BLMR14} under the label `proof of activity', but without giving a rigorous security analysis. Their block-production mechanism essentially extends standard \bitcoin mining by having the mined block implicitly elect a set of stakeholders that are required to sign the block in order to validate it.

A similar construction was proposed in~\cite{FDKT+20} and proven secure under a majority of adversarial hashing power---however, their security proof still implies an honest majority of stake. In particular, and contrary to their initial claims, the protocol is not proven secure under the condition that \emph{any} minority of the combined resources is controlled by the adversary.

The work of~\cite{FDKT+20} was extended in~\cite{DCFZ18}, to adapt to dynamic participation of both, hashing power and stake, and in~\cite{TNDF+18}, to combine PoW with multiple resources (rather than just PoS)---both works lack a rigorous security analysis.

The application of  PoW/PoS hybrid block production for the goal of protecting early-stage PoW systems against initial periods of adversarial dominance in hashing power was explored in \cite{CXGL+18}. They propose to start the system with hybrid block production (where each resource contributes to a certain fraction of the blocks) and then slowly fade out the stake contribution to eventually turn the system into pure PoW. Also this protocol is not proven secure.

There also exist another class of  hybrid PoW/PoS protocols~\cite{neu2021ebb,sankagiri2020blockchain,buterin2017casper,stewart2020grandpa}, which uses a BFT protocol (PoS) to build a finality gadget/layer on the top of a Nakamoto-style longest chain protocol (PoW) to achieve important properties such as accountability and finality. However, these protocols require an honest majority (or even supermajority) on both the set of miners and the set of stakeholders hence they are not fungible. 


\ignore{
The 2-hop blockchain~\cite{FDKT+20} is also a hybrid PoW/PoS protocol, which aims to survive the 51\% mining attack. In a 2-hop blockchain, there are two coupled blockchains, one  based a PoW mechanism and another using a PoS-based approach, and these two blockchains are extended alternately. However, the protocol is only designed for the static computing power and stake setting, which makes it unpractical. In addition, \minotaur is more flexible as weighing PoW and PoS can be supported.  
}

\section{Preliminaries}\label{sec:prelim}

\subsection{Security model}

\noindent {\bf Time, slots, and synchrony.} We consider a setting where time is divided into discrete units called \textit{slots}. Players are equipped with (roughly synchronized) clocks that indicate the current slot. Slots are indexed by integers, and further grouped into epochs with fixed size $R$, i.e., epoch $e$ composes of slots $\{(e-1)R+1,(e-1)R+2,\cdots,eR\}$. And we assume that the real time window that corresponds to each slot has the following two properties: (1) The current slot is determined by a publicly-known and monotonically increasing function of current time; (2) Each player has access to the current time and any discrepancies between parties’ local time are insignificant in comparison with the duration of a slot.

We describe our protocols in the now-standard  $\Delta$-synchronous network model considered in~\cite{pass2017analysis,garay2020full,david2018ouroboros,badertscher2018ouroboros}, where there is an (unknown) upper bound $\Delta$ in the delay (measured in number of slots) that the adversary may inflict to the delivery of any message. 
Similar to~\cite{garay2020full,pass2017analysis}, the protocol execution proceeds in slots with inputs provided by an environment program denoted by $\Z(1^\kappa)$ to parties that execute the protocol $\Pi$, where $\kappa$ is a security parameter. 
The network is modeled as a diffusion functionality similar to those in~\cite{garay2020full,pass2017analysis}: it allows message ordering to be controlled by the adversary $\A$, i.e., $\A$ can inject messages for selective delivery but cannot change the contents of the honest parties’ messages nor prevent them from being delivered within $\Delta$ slots of delay — a functionality parameter. Specially, for $\Delta = 1$, the network model is reduced to the so-called {\it lock-step synchronous model}, where messages are guaranteed to be delivered within one slot.

\noindent {\bf Protocol player.} The protocol is executed by two types of players, PoW-miners (miners for short) and PoS-holders (stakeholders for short), who generate different types of blocks, PoW blocks and PoS blocks. Note that we allow for any relation among the sets of miners and stakeholders, including the possibility that all players play both roles, or the two types of players are disjoint.

\noindent{\bf Random oracle.} For PoW mining, we abstract the hash function as a random-oracle functionality. It accepts queries of the form $(\texttt{compute}, x)$ and $(\texttt{verify}, x, y)$. For the first type of query, assuming $x$ was never queried before, a value $y$ is sampled from $\{0, 1\}^\kappa$ and it is entered to a table $T_H$. If $x$ was queried before, the pair $(x, y)$ is recovered from $T_H$. In both cases, the value $y$ is provided as an answer to the query. For the second type of query, a lookup operation is performed on the table. Honest miners are allowed to ask one query per slot of the type \texttt{compute} and unlimited queries of the type \texttt{verify}. The adversary $\A$ is given a bounded number of \texttt{compute} queries per slot and also unlimited number of \texttt{verify} queries. The bound for the adversary is determined as follows. Whenever a corrupted miner is activated the bound is increased by 1; whenever a query is asked the bound is decreased by 1 (it does not matter which specific corrupted miner makes the query).

\noindent{\bf Adversarial control of resources.} We assume a Byzantine adversary who can decide on the spot how many honest/corrupted miners 
are activated adaptively. 
\revadd[Request~\ref{req:adaptivity}\label{ans:adaptivity}]{Note that we allow instantaneously adaptive corruption on miners, which means that both the number of honest miners and the number of corrupted miners in each slot are chosen by the adversary. For stakers, the adversary is only ‘moderately’ adaptive (participant corruption only takes effect after a certain delay), just like in Ouroboros classic~\cite{kiayias2017ouroboros}.}
In a slot $r$, the number of honest miners that are active in the protocol is denoted by $h_r$ and the number of all active miners in slot $r$ is denoted by $n_r$. Note that $h_r$ can only be determined by examining the view of all honest miners and is not a quantity that is accessible to any of the honest miners individually. In order to obtain meaningful concentration bounds on the number of PoW blocks in a long enough window, we set a lower bound $\alpha_0$ on the fraction of honest mining power, i.e., $h_r \geq \alpha_0 n_r$ for all $r$. Note that $\alpha_0$ does not have to be $1/2$ as required by \bitcoinnosp, it can be a small positive constant.
Sudden decreases of total mining power could hurt the liveness of the protocol due to too few PoW blocks mined in one epoch. Therefore, we need to restrict the fluctuation of the number of honest/adversarial queries in a certain limited fashion. Suppose $\Z, \A$ with fixed coins 
produces a sequence of honest miners $h_r$, where $r$ ranges over all slots of the entire execution, we define the following notation.

\begin{defn}[from~\cite{garay2017bitcoin}]
\label{def:respecting}
For $\gamma \in [1,\infty)$, we call a sequence $(x_r), r \in [LR]$, as $(\gamma, s)$-respecting if for any set $S \subseteq [0, LR]$ of at
most $s$ consecutive slots, 
$$\max_{r \in S}\  x_r \leq \gamma \min_{r\in S} \  x_r,$$
We say that $\Z$ is $(\gamma, s)$-respecting if for all $\A$ and coins for $\Z$ and $\A$, both the sequences of $(h_r)$ and $(n_r)$ are $(\gamma, s)$-respecting.
\end{defn}

Finally, \minotaur achieves consensus via proof of fungible work and stake. The following is the formal definition of fungibility and our major assumptions on the adversarial power.

\begin{defn}[Fungibility of resources]
\label{def:adv}
For a time window $W$, let $\beta_s^{W}$ be the maximum fraction of adversarial stake in $W$, where the maximum is taken over all views of all honest players across all slots in $W$; and let $\beta_w^{W}$ be the maximum fraction of adversarial mining power over all slots in $W$ (i.e., $\beta_w^{W} = 1 - \min_{r \in W}h_r/n_r$). For $\theta \in [0,1]$, we say the adversary $\A$ is $(\theta,m,\omega)$-bounded if for any time window $W$ with at most $m$ slots, we have $\omega\beta_w^W + (1-\omega)\beta_s^W \leq \theta$. We say a blockchain protocol achieves fungibility of work and stake if it is secure against such an adversary.
\end{defn}

To keep the paper simple, the \minotaur protocol, as described, will rely on  Assumption~\ref{ass:main1} below
that initializes the protocol with honest majority of stake;
however this assumption is not essential; we refer the reader to \S\ref{sec:discuss} for a discussion on how to remove it. 

\begin{assu}[Initialization]
\label{ass:main1}
During the initialization phase of the protocol, we assume:
\begin{enumerate}
    \item[1.1] The initial stake distribution has honest majority, i.e., $\beta_s^0 \leq 1/2-\sigma$, where $\beta_s^0$ is the fraction of initial stake controlled by the adversary.
    \item[1.2] We have a good estimate of the initial honest mining power $h_1$. In particular, let $\tilde h_1$ be the estimate of $h_1$, we have $h_1/(1+\delta)\gamma^2 \leq \tilde h_1 \leq \gamma^2 h_1 /(1-\delta)\alpha_0$.
\end{enumerate}
\end{assu}

\revadd{
Post initialization, we relax the above to our intended fungible resource assumption described below. }

\begin{assu}[Execution]
\label{ass:main2}
During the execution phase of the protocol, we assume:
\begin{enumerate}
    \item[2.1] Stake-work bound: The adversary $\A$ is $(1/2-2\sigma,2R,\omega)$-bounded.
    \item[2.2] Work fluctuation bound: The environment $\Z$ is $(\gamma,2R)$-respecting.
    \item[2.3] Work participation bound: For any slot $r \in [LR]$, $h_r \geq \alpha_0 n_r$.
\end{enumerate}
\end{assu}

\noindent {\bf Participation model.} \minotaur can be constructed based on different PoS longest chain protocols. Various versions of \minotaur will take different subsets of the following assumptions on the participation model:
\begin{enumerate}
    \item[(P1)] Honest stakeholders are always online\footnote{This assumption can be easily relaxed. Namely, it is sufficient for an honest stakeholder to come online at the beginning of each epoch, determine whether it belongs to the slot leader set for any slots within this epoch, and then come online and act on those slots while maintaining online presence at least every $k$ slots. See Appendix H of~\cite{david2018ouroboros} for more detail.};
    \item[(P2)] Honest miners who mined PoW blocks in epoch $e$ will stay online in epoch $e+2$;\footnote{This can be relaxed similarly as P1.}
    \item[(P3)] In case an honest party joins after the beginning of the protocol, its initialization chain $\mathcal{C}$ provided by the environment should match an honest party’s chain which was active in the previous slot.
\end{enumerate}

\subsection{Blockchain security properties}
\label{sec:prop}

\begin{nota}
\label{notation:chains}
We denote by $\mathcal{C}^{\lceil \ell}$ the chain resulting from ``pruning'' the blocks with timestamps within the last $\ell$ slots. If $\mathcal{C}_1$ is a prefix of $\mathcal{C}_2$, we write $\mathcal{C}_1 \prec \mathcal{C}_2$. The latest block in the chain $\mathcal{C}$ is called the head of the chain and is denoted by head$(\mathcal{C})$. We denote by $\mathcal{C}_1 \cap \mathcal{C}_2$ the common prefix of chains $\mathcal{C}_1$ and $\mathcal{C}_2$. Given a chain $\mathcal{C}$ and an interval $S$ (or $[r_1,r_2]$), let $\mathcal{C}(S)$ (or $\mathcal{C}[r_1,r_2]$) be the segment of $\mathcal{C}$ containing blocks with timestamps in $S$ (or $[r_1,r_2]$). We say that a chain $\mathcal{C}$ is held by or belongs to an honest node if it is one of the best chains in its view. 
\end{nota}

\begin{defn}[Common Prefix (CP)]
\label{def:common}
The common prefix property with parameter $\ell_{\rm cp} \in \mathbb{N}$ states that for any two honest nodes holding chains $\mathcal{C}_1$, $\mathcal{C}_2$ at slots $r_1$, $r_2$, with $r_1 \leq r_2$, it holds that $\mathcal{C}_1^{\lceil \ell_{\rm cp}} \prec \mathcal{C}_2$.
\end{defn}

\begin{defn}[Existential Chain Quality ($\exists$CQ)]
\label{def:quality}
The existential chain quality property with parameter $\ell_{\rm cq} \in \mathbb{N}$ states that for any chain $\mathcal{C}$ held by any honest party at slot $r$ and any interval $S \subseteq [0,r]$ with at least $\ell_{\rm cq}$ consecutive slots, there is at least one honestly generated block in $\mathcal{C}(S)$.
\end{defn}

Our goal is to generate a robust transaction ledger that satisfies {\em persistence}  and {\em liveness} as defined in~\cite{kiayias2017ouroboros}.

\begin{defn}
    \label{def:public_ledger}
    A protocol $\Pi$ maintains a robust public transaction ledger if it organizes the ledger as a blockchain of transactions and it satisfies the following two properties:
    \begin{itemize}
        \item (Persistence) Consider the confirmed ledger $L_1$ on any node $p_1$ at any slot $r_1$, and the confirmed ledger $L_2$ on any node $p_2$ at any slot $r_2$ (here $u_1$ ($r_1$) may or may not equal $u_2$ ($r_2$)).  If $r_1 +\Delta < r_2$, then $L_1$ is a prefix of $L_2$.
        \item (Liveness) Parameterized by $u \in \mathbb{R}$, if a transaction {\sf tx} is received by all honest nodes for more than $u$ slots, then all honest nodes will contain {\sf tx} in the same place in the confirmed ledger.
    \end{itemize}
\end{defn}
\section{Impossibility result}
\label{sec:impos}

Consider any protocol $\Pi$ executed by two types of players, miners and stakeholders, and let $D_\Pi$ be the region of $(\beta_w, \beta_s)$ such that $\Pi$ can generate a robust public transaction ledger under an adversary controlling a $\beta_w$ fraction of mining power and a $\beta_s$ fraction of stake (see Figure~\ref{fig:pow_pos}).

\begin{theorem}
\label{thm:converse}
Any two points, $X_1 = (p_1, q_1)$ and $X_2 = (p_2, q_2)$ such that $p_1 + p_2 \geq 1$ and $q_1 + q_2 \geq 1$, cannot co-exist in $D_\Pi$. 
\end{theorem}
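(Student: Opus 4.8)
The plan is to derive the claim from a Nakamoto‑style private‑chain attack, bootstrapped from the assumed security at the two points. Write $X_1=(p_1,q_1)$ and $X_2=(p_2,q_2)$ and suppose, for contradiction, that both lie in $D_\Pi$. The quantitative object I would use is the chain‑growth rate $g(a,b)$: the long‑run per‑slot rate at which a single monolithic coalition controlling an $a$‑fraction of the hash power and a $b$‑fraction of the stake, running the honest code of $\Pi$ in isolation from genesis, extends its private chain. Two structural facts are used throughout: $g$ is nondecreasing in each coordinate (a larger coalition simulates a smaller one); and in any real execution the honest parties — distributed across a $\Delta$‑delay network and facing an adversary — grow their chain no faster than the monolithic, unobstructed coalition with the same resources.

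\emph{Step 1: a chain‑growth separation.} I would first prove that if $(p,q)\in D_\Pi$ then $g(1-p,1-q)>g(p,q)$, and moreover with a constant gap. Since $(p,q)\in D_\Pi$, the execution with honest parties holding $(1-p,1-q)$ and the adversary holding $(p,q)$ yields a robust ledger (Definition~\ref{def:public_ledger}); in particular it satisfies common prefix with a fixed parameter $\ell_{\rm cp}$ (Definition~\ref{def:common}) and chain growth (a consequence of liveness). Let the adversary play the private‑chain strategy: from genesis it privately runs the honest code of $\Pi$ on its own $(p,q)$ resources and diffuses nothing. Its private chain grows at rate $g(p,q)$, while the honest chain grows at rate at least $g^{\mathrm{hon}}$ for some $g^{\mathrm{hon}}\le g(1-p,1-q)$ that is bounded below by chain growth. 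If $g(p,q)\ge g^{\mathrm{hon}}$, the length difference (private minus honest) is a random walk of nonnegative drift, hence exceeds $\ell_{\rm cp}$ infinitely often; diffusing the private chain at such a slot produces a chain that forks from genesis and is more than $\ell_{\rm cp}$ longer, so honest parties switch to it and $\mathcal{C}_1^{\lceil \ell_{\rm cp}}\prec\mathcal{C}_2$ fails — contradicting common prefix. Hence $g(p,q)<g^{\mathrm{hon}}\le g(1-p,1-q)$, and tracking constants yields a uniform gap.

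\emph{Step 2: combining the two points.} Applying Step 1 with $(p,q)=X_2\in D_\Pi$ gives $g(1-p_2,1-q_2)>g(p_2,q_2)$ with a constant gap. Now consider the execution that $X_1\in D_\Pi$ forces to be secure: honest parties hold $(1-p_1,1-q_1)$ and the adversary holds $(p_1,q_1)$. The two hypotheses $p_1+p_2\ge 1$ and $q_1+q_2\ge 1$ say exactly that $(p_1,q_1)\ge(1-p_2,1-q_2)$ and $(1-p_1,1-q_1)\le(p_2,q_2)$ coordinatewise. So by monotonicity the adversary here can privately build a chain at rate $\ge g(1-p_2,1-q_2)$ — it internally simulates the honest‑$(1-p_2,1-q_2)$ coalition from the Step‑1 execution for $X_2$ — while the honest chain grows at rate $\le g(p_2,q_2)$. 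Since the gap is constant, over a long enough execution the adversary's private chain overtakes the honest chain by an arbitrary amount; releasing it then yields a chain held by an honest party that over a window of more than $\ell_{\rm cq}$ slots contains no honestly generated block, violating existential chain quality (Definition~\ref{def:quality}) (equivalently, one mounts the standard double‑spend to break persistence). Either way $\Pi$ fails to maintain a robust ledger in this execution, contradicting $X_1\in D_\Pi$.

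\emph{Main obstacle.} The crux is Step 1: upgrading ``common prefix holds'' to the strict, gap‑bearing separation $g(p,q)<g(1-p,1-q)$. This needs (i) a concentration argument so that chain lengths over long windows are close to rate times window; (ii) care that benign randomness alone cannot push the private chain $\ell_{\rm cp}$ ahead, so that a strictly positive drift advantage is genuinely being excluded; and (iii) — because $\Pi$ is an arbitrary blockchain protocol combining work and stake — an argument that the adversary's from‑genesis private chain evolves its recorded stake distribution and its mining difficulty exactly according to the honest rules, so that its rate is really $g(p,q)$; the work component in particular interacts with $\Pi$'s difficulty‑adjustment rule and must be controlled much as in the protocol's main security analysis. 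A minor, more cosmetic point is that the ``release and overtake'' step presumes a longest‑/heaviest‑chain fork choice; for other fork‑choice rules one argues directly from the failure of $\exists$CQ or persistence on the adversarial chain.
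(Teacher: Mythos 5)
Your proposal takes a genuinely different route from the paper, and it has a gap that I do not think can be closed in the generality the theorem requires. Theorem~\ref{thm:converse} quantifies over \emph{every} protocol $\Pi$ run by miners and stakeholders: Definition~\ref{def:public_ledger} only asks that the ledger be organized as a blockchain, not that chain selection be governed by any resource-monotone ``weight'' of the chain. Your entire argument is built on the chain-growth functional $g(a,b)$ and on the premise that a privately grown chain, once it is longer (or heavier), displaces the honest chain when released. That premise fails for, e.g., a checkpointed or BFT-finalized ledger in which honest parties simply never adopt a chain lacking honest votes or checkpoints; your closing caveat (``argue directly from the failure of $\exists$CQ or persistence on the adversarial chain'') does not rescue this, because if the fork-choice rule ignores the private chain, its existence violates nothing --- $\exists$CQ and CP are statements about chains \emph{held by honest parties}. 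There are also secondary soft spots even in the longest-chain case: Step~1 only yields strictness of $g(p,q)<g(1-p,1-q)$, not the uniform constant gap you invoke in Step~2 (and over a polynomial horizon with negligible failure probability you need quantitative control); and the claim that the isolated adversarial coalition really achieves rate $g(p,q)$ requires taming the protocol's difficulty adjustment and stake evolution on the private chain, which you flag but do not resolve.

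The paper avoids all of this with a short black-box reduction: split the universe into two players, Alice with $(p_1,\,1-q_2)$ and Bob with $(1-p_1,\,q_2)$ of the (mining power, stake). The hypotheses $p_1+p_2\geq 1$ and $q_1+q_2\geq 1$ mean that whichever of the two is corrupt, the adversary's resources are dominated coordinatewise by $X_1$ or by $X_2$ respectively, so security at both points would yield a robust ledger among $n=2$ parties tolerating $f=1$ Byzantine fault, contradicting the $n\geq 2f+1$ lower bound of~\cite{garay2020sok}. This indistinguishability-style argument is what lets the theorem apply to arbitrary $\Pi$. If you want to keep your attack-based narrative, you would have to restrict the theorem's scope to protocols with a monotone fork-choice rule, which is a strictly weaker statement than the one being proved.
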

\begin{proof}
We show that if there exists a protocol $\Pi$ secure under both points $X_1$ and $X_2$, then we can use $\Pi$ to implement a robust transaction ledger with two players, one of them being Byzantine, which is impossible.

Let Alice and Bob be two players, one of them possibly malicious. Let Alice control a $p_1$ fraction of mining power and a $1-q_2$ fraction of stake. Let Bob control a $1-p_1$ fraction of mining power and a $q_2$ fraction of stake.
In the case where Alice is malicious, a $p_1$ fraction of mining power and a $1-q_2 \leq q_1$ fraction of stake are malicious, which is dominated by point $X_1$. In the case where Bob is malicious, a $1-p_1 \leq p_2$ fraction of mining power and a $q_2$ fraction of stake are malicious, which is dominated by point $X_2$. Assuming that $\Pi$ implements a robust public transaction ledger implies that also the protocol among the two players does while one of them is malicious. However, by Theorem~1 in~\cite{garay2020sok}, tolerating $f=1$ malicious player requires at least $2f+1=3$ players. This is a contradiction.
\end{proof}

Figure~\ref{fig:possible_regions} gives a few examples of possible {\it maximum} security regions that satisfy the constraint by Theorem~\ref{thm:converse}. By maximum, we mean the region cannot be enlarged. All these regions are bounded by a \textit{non-increasing} curve that is \textit{symmetrical} about the point $(1/2,1/2)$. Thus, all these security regions have area $1/2$, which is an analogy of the $1/2$ fault tolerance in the single-resource systems. In this work, we focus on achieving the security regions bounded by a linear curve (the yellow and red ones in Figure~\ref{fig:possible_regions}) and leave the achievability of other possible regions (e.g., the green and blue ones) to future work. \revadd{Note that the security region achieved by the checkpointed ledger~\cite{karakostas2021securing} is also optimal (albeit not fungible), and we will see that this region can also be achieved by \minotaur with $\omega= 0$.}

\begin{figure}
    \centering
    \includegraphics[width=0.4\textwidth]{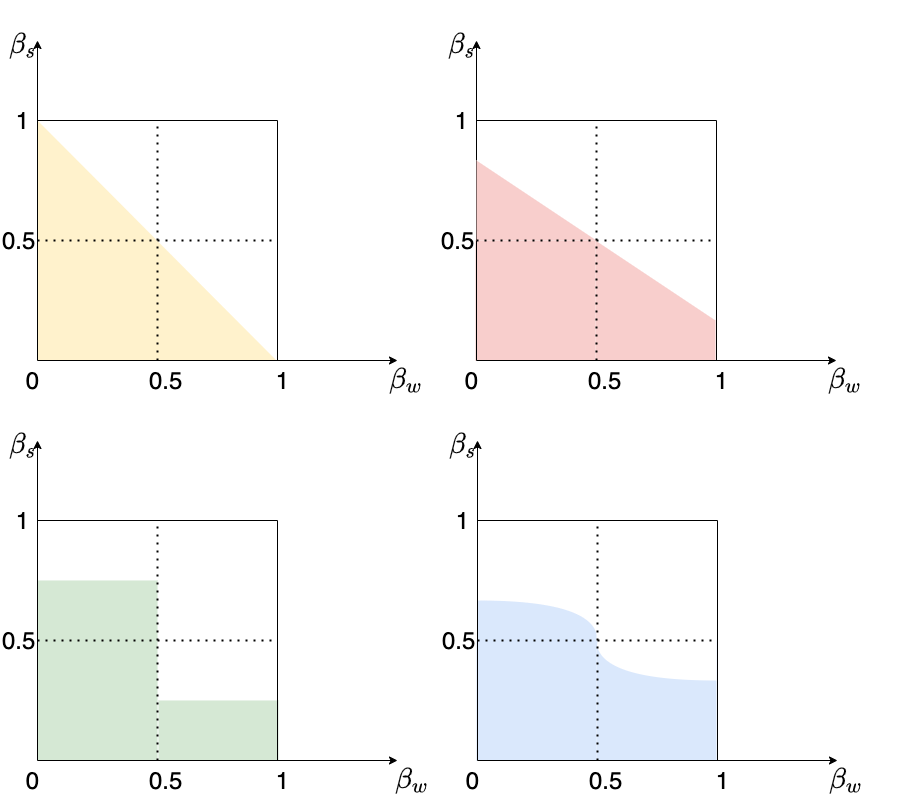}
    \caption{A few examples of possible security regions satisfying the constraint by Theorem~\ref{thm:converse}.}
    \label{fig:possible_regions}
    \vspace{-0.5cm}
\end{figure}

\begin{cor}
\label{cor:adv}
No protocol $\Pi$ can generate a robust public transaction ledger under a $(1/2,m,\cdot)$-bounded adversary for any $m$.
\end{cor}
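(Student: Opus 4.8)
The plan is to derive Corollary~\ref{cor:adv} directly from Theorem~\ref{thm:converse} by exhibiting two points in $D_\Pi$ that would violate the co-existence constraint. Suppose, for contradiction, that some protocol $\Pi$ maintains a robust public transaction ledger under every $(1/2,m,\cdot)$-bounded adversary for some fixed $m$; here the ``$\cdot$'' means the statement holds for \emph{all} weighing parameters $\omega \in [0,1]$, so in particular $\Pi$ is secure against $(1/2,m,\omega)$-bounded adversaries for the two extreme choices $\omega = 1$ and $\omega = 0$.

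First I would unpack what security against those two extremes gives us. Taking $\omega = 1$, the bound $\omega \beta_w^W + (1-\omega)\beta_s^W \le 1/2$ collapses to $\beta_w^W \le 1/2$ with no constraint on $\beta_s^W$; hence an adversary with a $1/2$ fraction of mining power and a full $1$ fraction of stake is $(1/2,m,1)$-bounded, so the point $X_1 = (1/2, 1)$ lies in $D_\Pi$. Symmetrically, taking $\omega = 0$, the bound becomes $\beta_s^W \le 1/2$ with $\beta_w^W$ unconstrained, so the point $X_2 = (1, 1/2)$ lies in $D_\Pi$. (One should check the quantifier order: Definition~\ref{def:adv} bounds the maximum over all windows of at most $m$ slots, and a static adversary holding these fractions throughout the execution trivially meets the bound in every such window, so these points are genuinely in the security region in the sense used by Theorem~\ref{thm:converse}.)

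Now I would simply verify the hypotheses of Theorem~\ref{thm:converse} on $X_1 = (p_1,q_1) = (1/2,1)$ and $X_2 = (p_2,q_2) = (1,1/2)$: we have $p_1 + p_2 = 1/2 + 1 = 3/2 \ge 1$ and $q_1 + q_2 = 1 + 1/2 = 3/2 \ge 1$. Theorem~\ref{thm:converse} then says $X_1$ and $X_2$ cannot both lie in $D_\Pi$, contradicting the previous paragraph. Hence no such $\Pi$ exists, which is exactly the statement of the corollary.

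I do not anticipate a genuine technical obstacle, since all the real work is already in Theorem~\ref{thm:converse}; the only subtlety worth a sentence of care is making sure the ``for any $m$'' and the ``$\cdot$'' over $\omega$ are handled correctly — the corollary is a statement for each fixed $m$, and for that fixed $m$ we only need the two extreme values of $\omega$, with the static adversaries above being $(1/2,m,\omega)$-bounded for \emph{every} window length, in particular for windows of at most $m$ slots. One could alternatively phrase the argument without invoking the two $\omega$-extremes by noting that the point $(1/2,1/2)$ is $(1/2,m,\omega)$-bounded for all $\omega$ and applying Theorem~\ref{thm:converse} to $X_1 = X_2 = (1/2,1/2)$ — but wait, that gives $p_1+p_2 = 1$ and $q_1+q_2=1$, which also satisfies the hypotheses and is arguably the cleanest route; I would present whichever is shortest, likely this last one, since it needs only a single point and the single observation that $(1/2,1/2)$ is always in the bounded class.
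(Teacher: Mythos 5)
Your final argument --- observing that a static adversary with $\beta_w=\beta_s=1/2$ is $(1/2,m,\omega)$-bounded for every $\omega$ and every window, and applying Theorem~\ref{thm:converse} with $X_1=X_2=(1/2,1/2)$ (whose proof goes through verbatim for coincident points) --- is correct and is exactly the reasoning the paper intends, which states the corollary as an immediate consequence of Theorem~\ref{thm:converse} without a separate proof. One caution about the first route you sketch: under the natural reading of the ``$\cdot$'' (the impossibility holds for \emph{each} fixed $\omega$, which is what makes Assumption~\ref{ass:main2}.1 ``necessary''), exhibiting $(1/2,1)$ and $(1,1/2)$ in $D_\Pi$ only rules out protocols secure against both the $\omega=0$ and $\omega=1$ classes simultaneously and says nothing about, e.g., $\omega=1/2$ (the point $(1/2,1)$ is not $(1/2,m,1/2)$-bounded), so the single-point version is not merely shorter but is the one that actually proves the stated corollary.
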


Corollary~\ref{cor:adv} thus implies that Assumption~\ref{ass:main2}.1 is not only sufficient but also necessary.

\section{Baseline approaches}
\label{sec:base}

In this section, we provide a couple of natural designs of hybrid PoW/PoS protocols, which fail to achieve the full goal of \minotaurnosp.

\noindent {\bf Idea 1: Securing PoW chain via checkpointing.} The checkpointed ledger~\cite{karakostas2021securing} employs an external set of parties or a committee chosen randomly from the pool of stakeholders to assist a PoW ledger by finalizing blocks shortly after their creation. The finalized blocks are called checkpoints and the final ledger is formed by the chain of checkpoints. This mechanism can also secure a PoW ledger in the presence of adversarial mining majorities, but the security is solely guaranteed by the external set or the stakeholders.

Other checkpointing protocols~\cite{neu2021ebb,sankagiri2020blockchain,buterin2017casper,stewart2020grandpa} achieve properties that are not possible with pure PoW protocols, such as accountability and finality.
\revreplace[Request~\ref{req:51}\label{ans:51}]{ 
However, these protocols require an honest majority (or even supermajority) on \emph{both} the set of miners and the set of stakeholders.
}
{
However, in contrast to our requirement to tolerate any adversarial minority of the \emph{combined} resources, these protocols are based on the stronger assumption of honest majority (or even supermajority) of \emph{both} the set of miners and the set of stakeholders.
} 

\noindent {\bf Idea 2:  Smooth interpolation among PoW and PoS blocks.} In the static setting (where both the total mining power and total active stake are fixed and known to the protocol designer), there is a simple protocol that can also achieve the regions defined by the red line in Figure~\ref{fig:pow_pos}. In this protocol, PoW and PoS mining occur in parallel, following the longest chain rule. In the initialization phase, we tune the mining targets such that the mining rate (i.e., number of blocks produced per unit time) of PoW blocks is $\omega f$ and the mining rate of PoS blocks is $(1-\omega)f$. In the execution phase, whichever miners or stakeholders succeeds first, it goes ahead and extends the longest chain \revadd{which accepts both types of blocks.}

Compared with a pure PoS protocol (e.g., Ouroboros Praos\revreplace{~\cite{badertscher2018ouroboros}}{~\cite{david2018ouroboros}}), the adversary \revdelete{in this protocol }has strictly smaller action space because it cannot equivocate with the PoW blocks. Thus, the security of this protocol follows directly from the security of Ouroboros Praos via either the forkable string argument~\cite{badertscher2018ouroboros} or the Nakamoto block method developed in~\cite{dembo2020everything}. We remark that the existence of this simple protocol makes the 2-hop blockchain~\cite{FDKT+20} less interesting as it only works in the static setting and does not even allow weighing PoW and PoS.

\revhighlight[Request~\ref{req:51}\label{ans:51:2}]{
However, it is very hard to extend this simple idea to the non-static setting, particularly with variable mining power. A natural approach to support variable mining power is to have fixed-length epochs and adjust the mining target of PoW blocks every epoch as in \bitcoinnosp. From the simple protocol described above, we learn that in order to guarantee security in the non-static setting, we need to make sure that the ratio of total mining rates of PoW blocks and PoS blocks is a constant $\omega/(1-\omega)$ in every epoch. However, this is impossible to achieve in the variable mining power setting because the adversary can always decide to hide or release its PoW blocks so that there is no way to estimate the total mining power accurately. Inaccurate PoW mining target adjustment could lead to a different weighing parameter $\omega$ of the security region than the desired one. More importantly, the value of $\omega$ can be easily manipulated by the adversary (via selectively publishing its PoW blocks), and is unknown to the honest players.
}
\section{The full protocol}
\label{sec:protocol}



In this section, we provide a detailed description of our protocol \minotaurnosp. The protocol is built on an epoch-based PoS longest chain protocol \revreplace{(e.g., Ouroboros~\cite{kiayias2017ouroboros}, Ouroboros Praos~\cite{david2018ouroboros}, Ouroboros Genesis~\cite{badertscher2018ouroboros})}{e.g., Ouroboros Classic~\cite{kiayias2017ouroboros}, Praos~\cite{david2018ouroboros}, or Genesis~\cite{badertscher2018ouroboros}}. Recall that in an epoch-based PoS protocol, time is divided into multiple epochs, each with a fixed number of slots. In each slot, one or multiple stakeholders are selected as block proposers by a PoS `lottery' process. In this process, the probability of being selected is proportional to the relative stake a stakeholder has in the system, as reported by the blockchain itself. \revadd{See Appendix~\ref{app:ouroboros} for more details.}

In contrast to the original Ouroboros protocols, the block-pro\-poser schedule accounts for both resources by means of \emph{virtual stake}, which is a combination of the \emph{actual stake} (representing stake as in the original protocol), and \emph{work stake} representing the share in block-production rights attributed to the PoW resource.

Additionally, PoW miners participate by mining \emph{endorser blocks} (along the lines of `endorser inputs'~\cite{kiayias2017ouroboros} or `fruits'~\cite{pass2017fruitchains}). These endorser blocks are not directly appended to the main chain, but are to be referenced by future main-chain blocks (i.e., PoS blocks scheduled by means of virtual stake). Each epoch is assigned a certain amount of work stake that is assigned to the PoW miners who succeed in mining PoW blocks;
the work stake assigned to the respective miners is proportional to their contribution of PoW blocks referenced from the main chain during an epoch.
PoS block production rights per epoch are then assigned by considering
the sum of actual stake (contributed by tokens) and work stake
(contributed by PoW blocks).

Note that the purpose of the PoW endorser blocks is to measure work fairly, not implying that they also need to carry the ledger transactions. Transaction inclusion is orthogonal to this aspect, and can still be implemented along the lines of \bitcoin (transaction inclusion in the main-chain blocks) or Input-Endorsers/\fruitchains (transaction inclusion in the endorser blocks), or variants thereof.

We now treat the underlying PoS protocol as a black-box and present the protocol in detail. The protocol runs in fixed-time epochs with $R$ slots each.
\begin{itemize}
\item PoW mining: In slot $sl$ of epoch $e$, a miner mines a PoW block if
$$H(pk,sl,h,mr,nonce) < T_e,$$
where $pk$ is the public key of the miner, $h$ is the hash of the last confirmed PoS block (according to the confirmation rule of the underlying PoS protocol), $mr$ is the Merkle root of the payload, $T_e$ is the mining target in epoch $e$. Like in PoW blockchains, miners try different values of $nonce$ to solve this hash puzzle. Following the notation in~\cite{garay2020full}, we define the \textit{difficulty} of this PoW block to be $1/T_e$. Besides possible transactions, also the public key of the miner is included in the payload. A PoW block is called recent in current slot $sl_{\rm now}$ if it refers to a confirmed PoS block mined no earlier than slot $sl_{\rm now}-sl_{\rm re}$, where $sl_{\rm re}$ is called the \emph{recency parameter}. 
\item PoS mining: in slot $sl$ of epoch $e$, one or multiple stakeholders are selected to propose PoS blocks extending the best chain (according to the chain selection rule). The selection of PoS block proposers uses the same mechanism as in the underlying PoS protocol. But for each node, the probability of being selected is proportional to its relative ``virtual'' stake, instead of the relative actual stake. The virtual stake is the sum of actual stake and work stake. At the beginning of each epoch, we set the total work stake equal to $\frac{\omega}{1-\omega}$ times the total actual stake in the system. And the work stake distribution used in epoch $e$ is set to be the distribution of block difficulties from PoW blocks referred by PoS blocks in epoch $e-2$.
\item Adjust the PoW mining target: For epoch $e$, $T_e$ is adjusted according to $D_{e-1}^{\rm total}$ the total difficulty of PoW blocks referred by PoS blocks in slots $[(e-2)*R-k + 1, (e-1)*R-k]$, i.e., $T_e = f^{(w)}R/D_{e-1}^{\rm total}$, where $f^{(w)}$ is a protocol parameter representing the expected PoW mining rate in number of blocks per slot. 
(Specially, for $e=2$, $T_2 = f^{(w)}(R-k)/D_{1}^{\rm total}$.)
Note that around the boundary of two epochs $e-1$ and epoch $e$, the adversary has the option to use targets from both epochs, as PoW blocks don’t have accurate timestamps.
\item Chain selection rule: due to a long-range attack (see Appendix~\ref{app:long_range}), the longest chain rule would fail. We use the following chain selection rules, which have different security guarantees.
\begin{itemize}
    \item {\sf maxvalid-mc} 
    (from Ouroboros Praos~\cite{david2018ouroboros}, needs a trusted third party for bootstrapping, i.e., assumption (P3)): it prefers longer chains, unless the new chain forks more than $k$ blocks relative to the currently held chain (in which case the new chain would be discarded).
    \item {\sf maxvalid-bg} (from Ouroboros Genesis~\cite{badertscher2018ouroboros}, supports bootstrapping from genesis block): it prefers longer chains, if the new chain $\mathcal{C}'$ forks less than $k$ blocks relative to the currently held chain $\mathcal{C}$. If $\mathcal{C}'$ did fork more than $k$ blocks relative to $\mathcal{C}$, $\mathcal{C}'$ would still be preferred if it grows more quickly in the $s$ slots following the slot associated with the last common block of $\mathcal{C}$ and $\mathcal{C}'$.
\end{itemize}

\end{itemize}
\section{Security Analysis}
\label{sec:proof}

\subsection{Main theorems}

Under different security models and assumptions, we prove the security (in particular, {\it persistence} and {\it liveness} as defined in \S\ref{sec:prop}) of \minotaur constructed with three various versions of Ouroboros PoS protocols.

\begin{theorem}
\label{thm:main1}
Under Assumption~\ref{ass:main1}, Assumption~\ref{ass:main2} and assumptions (P1)-(P3), when executed in a lock-step synchronous model (i.e., $\Delta$-synchronous model with $\Delta = 1$), \minotaur constructed with Ouroboros~\cite{kiayias2017ouroboros} generates a transaction ledger that satisfies {\it persistence} and {\it liveness} with overwhelming probability.
\end{theorem}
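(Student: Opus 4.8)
My plan is to reduce the security of \minotaur (built on Ouroboros Classic) to the security of the underlying PoS longest-chain protocol by establishing that the \emph{virtual stake} used for block-proposer election faithfully tracks the true combined resource distribution---so that the adversary's effective share of block-production rights stays below $1/2$ in every sufficiently long window. The proof then proceeds in three layers. First I would set up the notation from \S\ref{sec:prelim}: translate Assumption~\ref{ass:main2}.1 into a bound on adversarial virtual stake, condition on the $(\gamma,2R)$-respecting property of $\Z$, and fix the key parameters ($f^{(w)}$, $R$, $k$, $sl_{\rm re}$) relative to $\sigma$, $\alpha_0$, $\gamma$. The heart of the argument is the \emph{fairness / difficulty-control lemma}: I would show by a martingale/Chernoff argument over the PoW process that, in each epoch $e$, the total difficulty of \emph{recent} PoW blocks referenced by the main chain in the relevant window is, with overwhelming probability, within a $(1\pm\delta)$ factor of $f^{(w)} R$ times the true honest-plus-adversarial mining rate, and moreover that each miner's referenced difficulty is proportional (up to $(1\pm\delta)$) to its hashing share. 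This uses (i) existential chain quality of the \emph{previous} epoch's chain to guarantee that honest PoW blocks actually get referenced (so the adversary cannot suppress honest work), and (ii) the random-oracle query bound to cap adversarial difficulty, together with the recency parameter to stop the adversary from injecting stale high-difficulty blocks mined against an easier target. Combining this with the target-adjustment rule $T_e = f^{(w)}R/D_{e-1}^{\rm total}$ gives that the realized PoW block rate in epoch $e$ is $\approx \omega f$ and the PoS block rate $\approx (1-\omega)f$, so the total virtual-stake leader rate is a bounded constant.

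Second, I would run an inductive argument over epochs: assuming the work-stake distribution fed into epoch $e$ (derived from epoch $e-2$'s referenced PoW blocks) is $(1\pm\delta)$-accurate and the adversary's virtual-stake fraction in the window containing epoch $e$ is $<1/2-\sigma'$, invoke the security of Ouroboros Classic (common prefix, chain growth, existential chain quality in the lock-step $\Delta=1$ model) to conclude that epoch $e$'s main chain satisfies CP and $\exists$CQ with the stated parameters; then the fairness lemma applied within epoch $e$ re-establishes the hypothesis for epoch $e+2$. The base case is handled by Assumption~\ref{ass:main1}: 1.1 gives honest stake majority at genesis and 1.2 gives a good enough estimate $\tilde h_1$ of $h_1$ to set $T_2$ (and the epoch-1 work stake) accurately, so the first two epochs are secure directly. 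Participation assumptions (P1)--(P3) are used exactly where Ouroboros Classic needs them: (P1) for honest stakeholders being online, (P2) to guarantee that the miners whose epoch-$e$ blocks determine epoch-$(e+2)$ work stake are still around to actually exercise those rights, and (P3) for the {\sf maxvalid-mc} bootstrapping.

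Third, once persistence and liveness of the PoS chain are in hand with concrete parameters $\ell_{\rm cp}, \ell_{\rm cq}$, the robust-transaction-ledger conclusion (Definition~\ref{def:public_ledger}) follows in the standard way: persistence from common prefix, liveness from chain growth plus existential chain quality, with the $\Delta=1$ synchrony making the bookkeeping clean.

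The main obstacle, and where I expect the real work to lie, is the fairness/difficulty-control lemma in the \emph{non-static, adversarial-mining-majority} regime flagged in the introduction: standard \bitcoin-style analyses (e.g.~\cite{garay2017bitcoin}) control difficulty assuming an honest majority of hashing power, whereas here $\beta_w$ may exceed $1/2$. I would need a self-bootstrapping argument that simultaneously bounds how fast $T_e$ can drift and how much referenced difficulty the adversary can contribute---crucially using that adversarial PoW blocks only count toward work stake once referenced by an \emph{honestly-extended} main chain (so they inherit the honest chain's quality guarantees) and that the recency parameter $sl_{\rm re}$ plus the $(\gamma,2R)$-respecting bound jointly prevent the adversary from amortizing a long private mining effort against a stale easy target. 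Getting the constants to close---i.e., showing the accumulated $(1\pm\delta)$ slack over the induction does not blow up $\omega$ out of tolerance---will require carefully choosing $R$ large relative to $k$, $sl_{\rm re}$, $\Delta$, and $1/\sigma$, and is the delicate quantitative core of the theorem.
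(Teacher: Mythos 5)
Your proposal follows essentially the same route as the paper: a per-epoch invocation of Ouroboros single-epoch security (CP and $\exists$CQ via forkable strings), a FruitChains-style fairness/difficulty-control argument using the recency parameter, martingale concentration, and inductively-maintained ``good epoch'' target bounds, and an induction over epochs with a two-epoch lag seeded by Assumption~\ref{ass:main1}, concluding with persistence from CP and liveness from CP plus $\exists$CQ. You also correctly locate the technical core in the fairness lemma under variable difficulty and adversarial mining majority, which is exactly where the paper's Lemmas~\ref{lem:typical} and~\ref{lem:good} do the work.
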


\begin{theorem}
Under Assumption~\ref{ass:main1}, Assumption~\ref{ass:main2} and assumptions (P1)-(P3), when executed in a $\Delta$-synchronous model, \minotaur constructed with Ouroboros Praos~\cite{david2018ouroboros} generates a transaction ledger that satisfies {\it persistence} and {\it liveness} with overwhelming probability.
\end{theorem}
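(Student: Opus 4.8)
The plan is to establish the security of \minotaur instantiated with Ouroboros Praos in the $\Delta$-synchronous model by reducing it, epoch by epoch, to the security of Ouroboros Praos itself, where the ``effective'' stake distribution that Praos operates on in epoch $e$ is the \emph{virtual-stake} distribution prescribed by the protocol. I would proceed by induction on the epoch index: assuming that common prefix, chain growth, and existential chain quality hold for the \minotaur chain up to the end of epoch $e-1$, I show (i) that the virtual-stake distribution used in epoch $e$ grants the honest parties a majority with a constant margin, and (ii) that the PoW mining target $T_e$ keeps the realized PoW block-production rate inside a bounded range; these two facts allow me to invoke the Praos guarantees for epoch $e$ (with chain-selection rule {\sf maxvalid-mc} under assumptions (P1)--(P3)), which closes the induction. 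The base case is discharged by Assumption~\ref{ass:main1}: part~1.1 gives an honest majority of the initial actual stake --- hence of virtual stake, since no work stake is counted in epoch~1 --- while part~1.2 gives an estimate of $h_1$ precise enough to place $T_2$ in the safe range. The passage from the three chain properties to \emph{persistence} and \emph{liveness} of the ledger is then standard, as in~\cite{kiayias2017ouroboros,david2018ouroboros}.

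The technical core is a ``typical execution'' analysis of the PoW layer in the spirit of the variable-difficulty backbone~\cite{garay2017bitcoin,garay2020full}, but carried out in the presence of an \emph{adversarial mining majority}. For a window $S$ of slots I define difficulty-weighted random variables counting the PoW blocks mined honestly and adversarially, and, using that $\Z$ is $(\gamma,2R)$-respecting and that $h_r\ge\alpha_0 n_r$ for every slot $r$, I establish by Azuma--Hoeffding-type concentration that, with overwhelming probability, over any window of $\Omega(R)$ slots: (a) the honestly accumulated difficulty concentrates around its expectation; (b) the adversarial difficulty is within a bounded factor of its budget-determined expectation; and hence (c) of the PoW difficulty \emph{referenced from the main chain} in epoch $e-2$, the honest share is at least $1-\beta_w^W$ up to a vanishing error. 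Property~(c) --- the \emph{fairness of the work-stake conversion} --- relies on the facts that honest PoS leaders reference every recent valid PoW block they have seen (within $\Delta$ slots), that by (P2) honest miners active in epoch $e-2$ are still online to exercise the resulting work stake in epoch $e$, and that the existential-chain-quality part of the induction hypothesis guarantees enough honest PoS blocks in epoch $e-2$ to actually carry those references (with the recency parameter $sl_{\rm re}$ chosen as a suitable multiple of $\Delta$ and $R$). In tandem I must bound the trajectory of $T_e=f^{(w)}R/D_{e-1}^{\rm total}$: since $D_{e-1}^{\rm total}$ can undercount only by withheld or unreferenced honest PoW blocks --- controlled by (c) --- and the honest hashing rate varies by at most a factor $\gamma$ over any $2R$ slots, the realized PoW rate in each epoch stays pinned to a range of the form $[\Theta(f^{(w)}),\, O(\gamma^2 f^{(w)}/\alpha_0)]$, which is precisely what keeps the concentration bounds meaningful across the whole (unbounded) execution.

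Granting fairness, the per-epoch honest virtual-stake fraction follows from a short computation: the protocol fixes the total work stake to $\frac{\omega}{1-\omega}$ times the total actual stake, so the total virtual stake equals $\frac{1}{1-\omega}$ times the total actual stake; the honest actual stake is a $(1-\beta_s^W)$ fraction of the actual stake, and by (c) the honest work stake is a $(1-\beta_w^W)$ fraction (up to $o(1)$) of the work stake; hence the honest virtual-stake fraction is at least $(1-\omega)(1-\beta_s^W)+\omega(1-\beta_w^W)-o(1)=1-\bigl(\omega\beta_w^W+(1-\omega)\beta_s^W\bigr)-o(1)\ge 1/2+2\sigma-o(1)$ by Assumption~\ref{ass:main2}.1 over the relevant window of $2R$ slots (the $2R$ matching the two-epoch lag between reading and using the distribution). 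Because the distribution used in epoch $e$ is read off a portion of the chain already stabilized by common prefix, the adversary cannot inflate its virtual-stake share beyond its true resource fractions; and because stake corruptions take effect only after a delay (the ``moderately adaptive'' model, as in~\cite{kiayias2017ouroboros}), while the instantaneously-adaptive control over miners is harmless for the PoW concentration --- which rests only on the per-slot constraint $h_r\ge\alpha_0 n_r$ and the $(\gamma,2R)$-respecting property of $\Z$ --- the VRF-based Praos leader election in epoch $e$ retains its honest-majority security, which yields the three chain properties for epoch $e$ and completes the induction. The $\Delta$-synchronous (rather than lock-step) setting is accommodated by taking $k$, $sl_{\rm re}$, and the ledger-confirmation depth as appropriate multiples of $\Delta$, exactly as in the Praos analysis.

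The main obstacle is the \emph{circular dependency} between the two layers: fairness of the work-stake conversion, which I need in order to certify the honest majority of virtual stake and hence PoS security, itself presupposes enough honest PoS blocks in the relevant epoch to reference the honest PoW blocks, i.e.\ it presupposes PoS chain quality. Untangling this is exactly what the two-epoch lag and the epoch-by-epoch induction accomplish, but it forces a careful joint choice of parameters ($R$ large relative to $k$, $\Delta$, $sl_{\rm re}$, and to every concentration error term) so that the inductive step never erodes the honest-majority margin past $1/2$. The second --- and genuinely new, relative to \bitcoin and to static hybrids --- difficulty is controlling the mining-difficulty trajectory when the adversary can strategically withhold and later release PoW blocks to distort $D_{e-1}^{\rm total}$ and thereby the inferred work-stake distribution and the effective value of $\omega$; the $(\gamma,2R)$-respecting assumption together with the fairness bound (c) is what ultimately confines $T_e$ to a safe range, and turning this into a statement that survives an unbounded number of epochs is where most of the effort lies.
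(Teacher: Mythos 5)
Your proposal follows essentially the same route as the paper: an epoch-by-epoch induction that alternates a single-epoch Ouroboros/Praos argument (honest majority of \emph{virtual} stake $\Rightarrow$ CP and $\exists$CQ) with a FruitChain-style fairness argument proved via a typical-execution/martingale analysis of variable-difficulty mining under $h_r\ge\alpha_0 n_r$ and the $(\gamma,2R)$-respecting condition, glued together by the two-epoch lag and the computation showing the honest virtual-stake fraction stays above $1/2+\sigma$ (the paper's Appendix computes exactly your bound, with the fairness slack instantiated as $\sigma=4\epsilon$ rather than $o(1)$). Your identification of the two central difficulties --- the circularity between PoS chain quality and work-stake fairness, resolved by the two-epoch lag, and the control of the mining-target trajectory under an adversarial mining majority --- matches the paper's own emphasis, so this is the same proof in all essentials.
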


\begin{theorem}
Under Assumption~\ref{ass:main1}, Assumption~\ref{ass:main2} and assumptions (P1)-(P2), when executed in a $\Delta$-synchronous model, \minotaur constructed with Ouroboros Genesis~\cite{badertscher2018ouroboros} generates a transaction ledger that satisfies {\it persistence} and {\it liveness} with overwhelming probability.
\end{theorem}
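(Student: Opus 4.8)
The plan is to reduce the security of \minotaur built over Ouroboros Genesis to the security of plain Ouroboros Genesis, by establishing the invariant that in every epoch $e$ the \emph{virtual stake} distribution driving the PoS lottery has an honest majority bounded away from $1/2$ by a constant. Once this invariant holds, persistence and liveness follow from the black-box guarantees of Ouroboros Genesis in the $\Delta$-synchronous model: the only structural change \minotaur introduces---PoW endorser blocks referenced by main-chain blocks---merely \emph{restricts} the adversary (PoW blocks cannot be equivocated and do not enter the chain-selection metric), exactly as in the static baseline of \S\ref{sec:base}. Relative to the Praos-based theorem, two things change. First, chain selection uses {\sf maxvalid-bg} instead of {\sf maxvalid-mc}, so newly joining parties bootstrap from the genesis block via the chain-density rule rather than from a trusted checkpoint; this is precisely what lets us drop assumption (P3). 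Second, we must check that this density-based bootstrapping stays sound when the lottery weights are virtual stake rather than actual stake.

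The honest-virtual-stake-majority invariant is proved by induction on the epoch index. The base case is Assumption~\ref{ass:main1}: clause 1.1 gives honest majority of actual stake at initialization, and clause 1.2 gives an estimate of $h_1$ good enough to seed the PoW target $T_2$. For the inductive step, assume Ouroboros Genesis is secure---common prefix, $\exists$CQ, and chain growth with the parameters of \S\ref{sec:prop}---through epoch $e-1$, and show the virtual stake used in epoch $e$ has honest majority. The work-stake distribution for epoch $e$ is the distribution of difficulties of recent PoW blocks referenced by main-chain blocks of epoch $e-2$. By chain growth and chain quality of the PoS chain, every sufficiently long window contains honestly generated main-chain blocks, and these reference every recent honestly mined PoW block; combining this with a \fruitchainsnosp-style argument, the fraction of referenced PoW difficulty attributable to honest miners concentrates around $\sum_r h_r / \sum_r n_r$ over epoch $e-2$, up to the slack permitted by the $(\gamma,2R)$-respecting condition. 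Weighting the honest work-stake share by $\omega$ and the honest actual-stake share---which is faithful because persistence through epoch $e-1$ means the stake read off the chain is the true stake---by $1-\omega$, Assumption~\ref{ass:main2}.1 (the $(1/2-2\sigma,2R,\omega)$-bound) yields honest virtual-stake majority in epoch $e$ with a constant margin, closing the induction.

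The technically novel step, and the one I expect to be the main obstacle, is controlling the evolution of the PoW target $T_e$ under an \emph{adversarial majority} of hashing power and non-static participation. Since $T_e = f^{(w)} R / D_{e-1}^{\rm total}$ is recomputed from the total difficulty of PoW blocks \emph{referenced from the main chain} in a fixed window (offset by $k$ so those blocks are already stable), and only \emph{recent} blocks count via the recency parameter $sl_{\rm re}$, the adversary can neither inflate $D_{e-1}^{\rm total}$ by hoarding and later releasing stale blocks, nor push it below the honest contribution, which Assumption~\ref{ass:main2}.3 lower-bounds by an $\alpha_0$ fraction of the total. The $(\gamma,2R)$-respecting condition then caps how far the honest query rate can drift between the measurement window and epoch $e$, so a coupled Chernoff/martingale estimate over the relevant slot windows shows $T_e$ stays within a fixed multiplicative band of the ideal target; hence the per-epoch PoW-block count, and with it the accuracy of the work-stake estimate, concentrates. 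This is exactly where the analysis diverges from \bitcoinnosp's difficulty-adjustment proofs, which crucially assume an honest majority of work: here we instead exploit that \minotaur needs only the \emph{relative} difficulty shares among miners to be faithful, not an honest majority of work itself.

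Finally, for the Genesis-specific part I would argue that {\sf maxvalid-bg}'s density comparison is a sound bootstrapping rule for \minotaurnosp. Given any candidate chain, a joining party can recompute every target $T_e$ and every virtual-stake distribution internally from the chain itself, so chain validity---including the target recalculations and the virtual-stake lottery proofs---is a purely local check. Honest virtual-stake majority in every epoch, just established, implies via the Genesis chain-growth lower bound that the honest chain has the density the rule selects for, while any adversarial fork deviating by more than $k$ blocks has strictly smaller density in the $s$ slots after the fork point, since the adversary's virtual-stake share is below $1/2$ there too. Thus new parties converge to the honest chain without (P3), and persistence and liveness for all parties follow from the Genesis guarantees with the virtual-stake honest-majority parameter substituted for the usual stake parameter.
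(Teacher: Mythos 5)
Your proposal is correct and follows essentially the same route as the paper: establish the honest virtual-stake-majority invariant by induction (via the FruitChain fairness argument and control of the target $T_e$), then reduce {\sf maxvalid-bg} to the already-analyzed case by showing the honest chain wins the density comparison in the $s$ slots after any deep fork, so joining parties converge without (P3). The paper's own proof is a two-step sketch deferring to the Ouroboros Genesis analysis (it additionally uses a ``virtual'' zero-stake always-online node to formalize convergence of newly joined parties, a device you omit but do not need for the argument's substance).
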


\subsection{Security with Ouroboros}
\label{sec:mc}


We first provide a proof sketch for Theorem~\ref{thm:main1}. All the parameters used in our analysis are listed in Table~\ref{tbl:notation}.
\begin{table}[h]
{\small
\centering
\begin{tabular}{ |c l| }
\hline
    $h_r$ & number of honest PoW queries in slot $r$ \\
    $n_r$ & number of total PoW queries in slot $r$ \\
    $\alpha_0$ & lower bound on the fraction of honest mining power \\
    $sl_{\rm re}$ & recency parameter for PoW blocks \\
    $\Delta$ & network delay in slots \\
    $\kappa$ & security parameter; length of the hash function output  \\
    $R \in \mathbb{N}$ &  duration of an epoch in number of slots\\
    $(\gamma, s)$ & restriction on the fluctuation of the number of \\
    & honest queries across slots (Definition~\ref{def:respecting}) \\
    $f^{(s)}$ & expected PoS mining rate in number of blocks per slot \\
    $f^{(w)}$ & expected PoW mining rate in number of blocks per slot \\
    $\epsilon$ & quality of concentration of random variables \\
    $\sigma$ & PoW fairness parameter \\
    & \& advantage of honest parties (Assumption~\ref{ass:main1}.1 and \ref{ass:main2}.1) \\
    $\delta$ & ``goodness'' parameter of an epoch/slot (Definition~\ref{def:good}) \\
    $\lambda$ & typicality parameter of the execution \\
    $\ell$ & minimum number of slots for concentration bounds\\
    $L$ & total number of epochs in the execution\\
    $\omega$ & the weighing parameter \\ 
 \hline
\end{tabular}
\caption{The parameters used in our analysis.}
\label{tbl:notation}
}
\end{table}

\begin{proof}[Proof sketch.]
The proof relies on two important arguments:
\begin{enumerate}
    \item[1)] {\bf Single-epoch argument.} Given honest majority in the virtual stake (normalized work-stake + normalized actual-stake) used in one epoch, we prove the security properties (CP and $\exists$CQ) for this single epoch. For this, we use the so-called {\it forkable strings} technique developed in the Ouroboros papers~\cite{badertscher2018ouroboros,david2018ouroboros,kiayias2017ouroboros}.
    \item[2)] {\bf FruitChain argument.} Given CP and $\exists$CQ of the PoS chain in an epoch, we prove {\it fairness} of PoW blocks in this epoch, i.e., a miner controlling a $\phi$ fraction of the computational resources will contribute a $\phi$ fraction of work. 
\end{enumerate}

Using these building blocks, we prove the security with {\sf maxvalid-mc} inductively. Under the honest majority assumption in the initial stake distribution (Assumption~\ref{ass:main1}.1), we can prove security of the PoS chain in epochs 1\&2. Applying the \fruitchains argument, we get fairness of PoW blocks in epochs 1\&2. Combining with Assumption~\ref{ass:main2}.1, i.e., $\omega\beta_w+(1-\omega)\beta_s < 1$ (for any long enough window), we have honest majority in the virtual stake distribution used in epoch 3\&4. And the proof goes on till the last epoch (Figure~\ref{fig:proof_sketch}).
\end{proof}

\begin{figure*}
    \centering
    \includegraphics[width=0.8\textwidth]{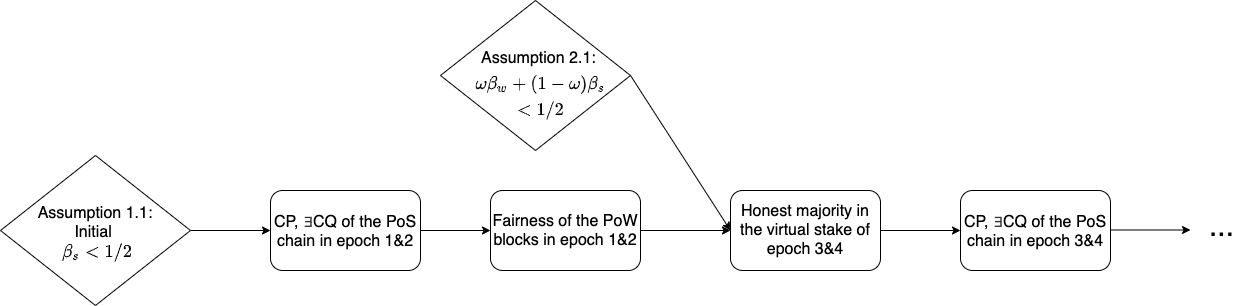}
    \vspace{-0.4cm}
    \caption{Proof sketch for \S\ref{sec:mc}.}
    \vspace{-0.4cm}
    \label{fig:proof_sketch}
\end{figure*}

\subsubsection{Single-epoch security}

\begin{theorem}[Theorem 3 from~\cite{david2018ouroboros}]
\label{thm:single}
Let $\kappa,R,\Delta \in \mathbb{N}$ and $\sigma \in (0,1)$. Let $\beta_v$ be the fraction of adversarial virtual stake satisfying 
$$\beta_v \leq 1/2 - \sigma$$
for some positive constant $\sigma$.
Then the probability that the adversary violates CP with parameter $\ell_{\rm cp} = \kappa$ and $\exists$CQ with parameter $\ell_{\rm cq} = \kappa$ throughout a period of $R$ slots is no more than $R e^{-\Omega(\kappa)}$. The constant hidden by the $\Omega(\cdot)$-notation depends on $\sigma$.
\end{theorem}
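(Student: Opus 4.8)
The plan is to recognize this statement as precisely the single-epoch analogue of the main security theorem of Ouroboros Praos~\cite{david2018ouroboros}, with \emph{virtual stake} playing the role of stake in the leader-election lottery; so one route is to invoke that theorem almost verbatim, after observing that the block-proposer schedule of \minotaur \emph{restricted to a fixed epoch} is, as a random process, identical to Praos run with that epoch's virtual-stake distribution (the work-stake and actual-stake components only affect how the probabilities are apportioned among parties, not the structure of the lottery). To argue it from first principles, I would follow the forkable-strings methodology of~\cite{kiayias2017ouroboros,david2018ouroboros}. First, fix the epoch and record its execution as a characteristic string $w$ of length $R$: for each slot, mark it adversarial if some adversarial party is elected, honest-and-unique if exactly one honest party is elected and its block is not ``swallowed'' by $\Delta$-delay, and empty otherwise. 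Because the VRF outputs driving the lottery are independent of the adversary's choices until each lottery is opened, the distribution of $w$ is (essentially) a product measure whose per-slot probabilities are determined by $f^{(s)}$, $\Delta$, and $\beta_v$.

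Next I would reduce the two failure modes to combinatorial events on $w$. A violation of common prefix with parameter $\ell_{\rm cp}=\kappa$ forces, by the standard chain-to-fork surgery, the existence of a \emph{fork} of $w$ whose two longest tines diverge before the last $\kappa$ honest slots — i.e., some substring of $w$ of length $\geq \kappa$ is \emph{$\Delta$-forkable}. A violation of $\exists$CQ with parameter $\ell_{\rm cq}=\kappa$ forces a window of $\kappa$ consecutive slots contributing no honest block to the chain, which via chain-growth considerations again implies a long adversarially-dominated (hence forkable) substring. So it suffices to bound $\Pr[\,w \text{ has a } \Delta\text{-forkable substring of length} \geq \kappa\,]$ and take a union bound over the at most $R$ possible starting slots; this is the source of the factor $R$ in the stated bound.

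The probabilistic core is then to show a random length-$\kappa$ substring is $\Delta$-forkable only with probability $e^{-\Omega(\kappa)}$. I would track the pair (\emph{reach}, \emph{margin}) of the string as slots are revealed: adversarial slots increment both, a uniquely-honest slot pushes the margin toward $0$ and (once reach has hit $0$) decrements reach, and the string is forkable iff the margin ever becomes non-negative after reach has reset. Under $\beta_v \leq 1/2-\sigma$, and after discounting the $O(\Delta)$ honest slots per block that delay can neutralize — which is exactly why $f^{(s)}$ must be tuned small relative to $\sigma/\Delta$, so that the \emph{effective} uniquely-honest-slot rate still exceeds the adversarial rate by a constant — the margin is a supermartingale with negative drift $-\Omega(\sigma)$. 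A Chernoff bound on the count of adversarial versus effectively-uniquely-honest slots in the window, together with an Azuma-type bound on this supermartingale, yields the $e^{-\Omega(\kappa)}$ estimate with the hidden constant a function of $\sigma$ (and of $f^{(s)},\Delta$), matching the statement.

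The step I expect to be the main obstacle is the combinatorial analysis of $\Delta$-forkable strings in the semi-synchronous model: unlike the lock-step case, honest slots are not automatically ``useful'' — the adversary can delay messages by up to $\Delta$ slots, collapsing a cluster of honest slots into a single effective one, and the VRF lottery additionally permits several honest (or adversarial) leaders in one slot, so one must carefully define ``$\Delta$-right-isolated, uniquely honest'' slots and prove these still occur at a rate strictly above $\beta_v$. Pinning down this gap — and hence the negative drift of the margin process — is the delicate part; everything downstream (the reach/margin martingale, Azuma, the union bound over $R$) is routine and is precisely what~\cite{david2018ouroboros} carries out, so in the paper we simply cite it.
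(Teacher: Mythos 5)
Your proposal matches the paper's treatment: the paper gives no internal proof of this theorem, importing it verbatim from Ouroboros Praos and justifying the transfer with exactly the observation you lead with — that the block-proposer lottery in a fixed epoch depends only on the (virtual-)stake distribution, so the Praos analysis applies with virtual stake substituted for actual stake. Your subsequent sketch of the forkable-strings/reach-margin machinery is a faithful summary of the cited proof rather than a departure from it, so the two approaches coincide.
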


\revadd[Request~\ref{req:virtualstake}\label{ans:virtualstake}]{In \minotaurnosp, the main-chain blocks are scheduled as a function of the virtual-stake distribution only. The Ouroboros analysis thus directly translates by replacing its actual stake by virtual stake.}

\subsubsection{FruitChain argument}

\label{sec:fruits}

Recall that $h_r$ is the number of honest PoW queries in slot $r$ and $n_r$ is the number of total PoW queries in slot $r$. For a set of slots $S$, we define $h(S) = \sum_{r\in S}h_r$ and $n(S)= \sum_{r\in S} n_r$. In order to obtain meaningful concentration bounds on the number of PoW blocks in one epoch, there is a lower bound $\alpha_0$ on the fraction of honest mining power, i.e., $h_r \geq \alpha_0 n_r$ for all $r$. 

In the analysis of this subsection, we assume the main chain (PoS chain) satisfies properties CP with parameter $\ell_{\rm cp} = \kappa$ and $\exists$CQ with parameter $\ell_{\rm cq} = \kappa$.
By the common prefix property, for a PoS chain $\mathcal{C}$ held by an honest node at slot $r$, the prefix $\mathcal{C}^{\lceil \kappa}$ are stabilized, so to mine a PoW block at slot $r$ an honest miner will refer the tip of $\mathcal{C}^{\lceil \kappa}$ as the last confirmed PoS block. And we set the recency parameter $sl_{\rm re} = 3\kappa + \Delta$, i.e., a PoW block $B_w$ is recent w.r.t. a chain $\mathcal{C}$ at slot $r$ if the confirmed PoS block referred by $B_w$ is in $\mathcal{C}$ and has timestamp at least $r-3\kappa-\Delta$. With this selection of the recency parameter, we can prove the following key property of the protocol: any PoW block mined by an honest miner will be incorporated into the stabilized chain (and thus never lost). We refer to this as the {\it Fruit Freshness Lemma}—PoW blocks stay ``fresh'' sufficiently long to be incorporated. 

\begin{lemma}[Fruit Freshness]
\label{lem:fresh}
Suppose the PoS chain satisfies properties CP with parameter $\ell_{\rm cp} = \kappa$ and $\exists$CQ with parameter $\ell_{\rm cq} = \kappa$. Then, if $sl_{\rm re} = 3\kappa+\Delta$, an honest PoW block mined at slot $r$ will be included into the stabilized chain before slot $r+r_{\rm wait}$, where $r_{\rm wait} = 2\kappa+\Delta$.
\end{lemma}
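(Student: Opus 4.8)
The plan is to show that an honest PoW block mined at slot $r$ references a confirmed PoS block that is stable and, within $r_{\rm wait} = 2\kappa+\Delta$ slots, this PoW block itself becomes part of the stabilized prefix of every honest party's chain — so it is never lost and it is "recent" when it gets referenced.

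First I would set up the timeline. When an honest miner mines a PoW block $B_w$ at slot $r$, it is working off its own best PoS chain $\mathcal{C}$ and references the tip of $\mathcal{C}^{\lceil \kappa}$ as the last confirmed PoS block; call this PoS block $B$, with timestamp $t_B \geq r - \kappa$ (since the pruned suffix spans at most $\kappa$ slots). By common prefix with parameter $\ell_{\rm cp} = \kappa$, $B$ lies on a prefix that will remain on the chain of every honest party forever after, so $B$ is genuinely stabilized. The honest miner diffuses $B_w$ immediately, so by slot $r + \Delta$ every honest party has received $B_w$. Next I would invoke $\exists$CQ with parameter $\ell_{\rm cq} = \kappa$: consider the window of $\kappa$ slots starting just after $r + \Delta$; there must be an honestly generated PoS block $B'$ in this window on the eventual stabilized chain. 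Any honest party producing a PoS block at a slot $\geq r+\Delta$ extending a chain containing $B$ will reference $B_w$ (it is available, it refers to the confirmed block $B$, and $B$ has timestamp $\geq r - \kappa \geq (\text{slot of } B') - (\kappa + \Delta + \kappa) $, so $B_w$ is still within the recency window $sl_{\rm re} = 3\kappa + \Delta$). Hence $B_w$ gets referenced by an honest main-chain block $B'$ whose timestamp is at most $r + \Delta + \kappa$.

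Then I would apply common prefix once more: the honest block $B'$ that references $B_w$ is itself buried under at most $\kappa$ further slots of chain growth before it stabilizes, so by slot $(r + \Delta + \kappa) + \kappa = r + 2\kappa + \Delta$ the block $B'$ — and therefore its reference to $B_w$ — is in the stabilized prefix held by every honest party. This gives inclusion into the stabilized chain before slot $r + r_{\rm wait}$ with $r_{\rm wait} = 2\kappa + \Delta$, as claimed. One should double-check the two arithmetic points: (i) the recency check passes, i.e., the timestamp of $B$ is within $sl_{\rm re} = 3\kappa + \Delta$ of the referencing slot — here $t_B \geq r - \kappa$ and the referencing slot is at most $r + \Delta + \kappa$, a gap of at most $\kappa + \Delta + \kappa = 2\kappa + \Delta \leq 3\kappa + \Delta$; and (ii) the $\exists$CQ window and the final CP burial both fit inside the $2\kappa + \Delta$ budget, which they do since $\Delta + \kappa + \kappa = 2\kappa + \Delta$.

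The main obstacle I anticipate is handling the interaction between the adversary's freedom to delay message delivery by up to $\Delta$ and the precise bookkeeping of which honest PoS block first references $B_w$: we must argue that the \emph{first} honest main-chain block on the surviving chain appearing after $B_w$ propagates will in fact reference it, which requires that honest parties always include every available recent PoW block they know of (a protocol rule we can assume), and that the $\exists$CQ-guaranteed honest block indeed lies on the chain that ultimately stabilizes rather than on a transient fork. This is exactly the place where $\exists$CQ (not merely chain growth) is needed, and where the choice $sl_{\rm re} = 3\kappa+\Delta$ is tuned so that the block $B_w$ has not yet expired by the time that honest block is produced. The rest is routine slot-counting.
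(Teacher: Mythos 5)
Your overall structure matches the paper's proof — diffuse $B_w$ by slot $r+\Delta$, use $\exists$CQ to find an honest main-chain block in the window of length $\kappa$ after $r+\Delta$, check recency, and use CP to conclude that block is stabilized by slot $r+2\kappa+\Delta$ — but there is one genuinely wrong step. You claim the referenced confirmed block $B$, the tip of $\mathcal{C}^{\lceil\kappa}$, has timestamp $t_B \geq r-\kappa$ ``since the pruned suffix spans at most $\kappa$ slots.'' Pruning removes the blocks with timestamps in the last $\kappa$ slots, so it gives only the \emph{upper} bound $t_B \leq r-\kappa$; it gives no lower bound at all, since the chain could contain no block for an arbitrarily long stretch before slot $r-\kappa$ unless something forces one to exist. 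The missing ingredient is a \emph{second} application of $\exists$CQ (the paper uses $\exists$CQ here as well as for the referencing block): applied to the interval $[r-2\kappa,\,r-\kappa]$ of the miner's chain, it guarantees an honest block there, hence $t_B \geq r-2\kappa$.

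This error is not cosmetic, because it silently consumes all the slack in the recency check. With the corrected bound, the gap between $t_B$ and the timestamp of the referencing honest block (at most $r+\kappa+\Delta$) is up to $(r+\kappa+\Delta)-(r-2\kappa)=3\kappa+\Delta$, which meets the recency requirement $sl_{\rm re}=3\kappa+\Delta$ exactly — this is precisely why the protocol sets $sl_{\rm re}=3\kappa+\Delta$ rather than the $2\kappa+\Delta$ your arithmetic would suggest suffices. The remaining concerns you flag (that the $\exists$CQ-guaranteed honest block lies on a chain that stabilizes, and that an honest leader includes every known recent PoW block not already referenced by an ancestor) are handled in the paper by quantifying over any chain $\mathcal{C}$ held by an honest node at slot $r_2=r+2\kappa+\Delta$, locating the honest block $B_s$ with timestamp in $[r_2-2\kappa,\,r_2-\kappa)$ so that $B_s$ lies in $\mathcal{C}^{\lceil\kappa}$ and is therefore stable by CP, and observing that $B_s$ \emph{or an ancestor of} $B_s$ must include $B_w$; with those points and the corrected timestamp bound, your argument closes.
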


\begin{proof}
Suppose an honest PoW block $B_w$ is mined at slot $r_0$ while the PoS chain $\mathcal{C}_0$ is adopted, then $B_w$ will reference the tip of $\mathcal{C}_0^{\lceil \kappa}$ as the last confirmed PoS block. Further, by the $\exists$CQ property, the tip of $\mathcal{C}_0^{\lceil \kappa}$ has timestamp $r_1 \geq r_0 - 2\kappa$. By slot $r_0 +\Delta$, all honest nodes will receive $B_w$. Let $r_2 = r_0 + 2\kappa +\Delta$ and $\mathcal{C}$ be any chain held by an honest node at slot $r_2$, then again by $\exists$CQ, there exists \revdelete{at least one}{an} honest block $B_s$ on $\mathcal{C}$ whose timestamp $r_3$ is in the interval $[r_2-2\kappa, r_2-\kappa)$. We check that $B_w$ is still recent at slot $r_3$ as $$r_3 - r_1 < (r_2 - \kappa) - (r_0 - 2\kappa) = 3\kappa + \Delta  = sl_{\rm re}.$$
As $B_s$ is an honest block mined after $r_0+\Delta$, $B_s$ or an ancestor of $B_s$ must include $B_w$. And since $B_s$ is stabilized in $\mathcal{C}$ at slot $r_2$, we have that $r_{\rm wait} = r_2 - r_0 = 2\kappa +\Delta$.
\end{proof}



In a $(\gamma,s)$-respecting environment, we have the following useful proposition.

\begin{prop}[Proposition 2 from \cite{garay2020full}]
\label{prop:envbounds}
In a $(\gamma, s)$-respecting environment, let $U$ be a set of at most $s$ consecutive slots and $S \subseteq U$. Then, for any $h \in \{h_r : r \in U\}$ and any $n \in \{n_r : r \in U\}$ we have
\begin{align*}
    \frac{h}{\gamma} &\leq \frac{h(S)}{|S|} \leq \gamma h,~~~~ \frac{n}{\gamma} \leq \frac{n(S)}{|S|} \leq \gamma n, \\
    h(U) &\leq\Bigl(1 + \frac{\gamma |U \setminus S|}{|S|}\Bigr)h(S),~~~~ n(U) \leq\Bigl(1 + \frac{\gamma |U \setminus S|}{|S|}\Bigr)n(S). 
\end{align*}
\end{prop}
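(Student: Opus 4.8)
The plan is to unwind the definition of a $(\gamma,s)$-respecting environment and push the resulting $\max$/$\min$ inequalities through the sums defining $h(S), n(S), h(U), n(U)$. The key observation is that since $U$ consists of at most $s$ consecutive slots, Definition~\ref{def:respecting} applies directly to $U$, so $\max_{r\in U} h_r \leq \gamma \min_{r\in U} h_r$ (and likewise for $(n_r)$); every quantity of interest is sandwiched between these two extremes, and averages of the summands always lie between their min and their max.

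First I would establish the two-sided bound on $h(S)/|S|$. Fix $h = h_{r_0}$ for some $r_0 \in U$. For every $r \in S \subseteq U$ we have $h_r \leq \max_{r'\in U} h_{r'} \leq \gamma \min_{r'\in U} h_{r'} \leq \gamma h$; summing over the $|S|$ slots of $S$ gives $h(S) \leq \gamma |S| h$, i.e.\ $h(S)/|S| \leq \gamma h$. Symmetrically, $h_r \geq \min_{r'\in U} h_{r'} \geq \gamma^{-1}\max_{r'\in U} h_{r'} \geq h/\gamma$, so $h(S) \geq |S|\,h/\gamma$ and $h(S)/|S| \geq h/\gamma$. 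Running the identical argument with $(n_r)$ in place of $(h_r)$ yields the bounds on $n(S)/|S|$.

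Next I would handle the comparison between $h(U)$ and $h(S)$. Write $h(U) = h(S) + h(U\setminus S)$ and bound the second term. Each slot $r \in U\setminus S$ satisfies $h_r \leq \max_{r'\in U} h_{r'} \leq \gamma \min_{r'\in U} h_{r'}$, and since $\min_{r'\in U} h_{r'} \leq \min_{r'\in S} h_{r'} \leq h(S)/|S|$ (the minimum over $S$ is at most the average over $S$), we get $h_r \leq \gamma\, h(S)/|S|$. Summing over the $|U\setminus S|$ slots gives $h(U\setminus S) \leq (\gamma |U\setminus S|/|S|)\, h(S)$, hence $h(U) \leq \bigl(1 + \gamma |U\setminus S|/|S|\bigr) h(S)$; the same computation with $(n_r)$ gives the bound on $n(U)$.

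There is no real obstacle here; the only point requiring care is to always take the extremal values over the consecutive block $U$ — to which the respecting property applies — rather than over the possibly non-consecutive subset $S$, and to recall that an average of nonnegative terms always lies between their minimum and maximum.
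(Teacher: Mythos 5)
Your proof is correct: the paper itself gives no argument for this proposition (it is imported verbatim as Proposition~2 of the cited Garay--Kiayias--Leonardos work), and your derivation is exactly the standard one --- apply the $(\gamma,s)$-respecting bound to the consecutive block $U$, sandwich every $h_r$ between $\max_U$ and $\min_U$, and use that the minimum over $S$ is at most the average $h(S)/|S|$ to control $h(U\setminus S)$. The one point you rightly flag --- that the respecting property must be invoked on $U$ (consecutive, of size at most $s$) rather than on the possibly non-consecutive $S$ --- is indeed the only place where care is needed, and you handle it correctly.
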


Recall that $T_e$ is the mining target in epoch $e$ determined by the stabilized segment of the chain from epoch $e-1$. In order to obtain meaningful concentration bounds on the number of PoW blocks, we need $T_e$ to be `reasonable' for each epoch. Similar to~\cite{garay2017bitcoin,garay2020full}, we define a notation of `good' epochs as follows. By abuse of notation, we write the expected PoW block rate $f^{(w)}$ simply as $f$ in the analysis below.

\begin{defn}
\label{def:good}
Epoch $e$ is \emph{good} if $(1-\delta)\alpha_0 f/\gamma \leq p h^{(e)} T_e \leq (1+\delta) \gamma f$, where $p = 1/2^\kappa$ and $h^{(e)} = h_{(e-1)R+1}$, i.e., the number of honest queries in the first slot of epoch $e$. A slot $r$ is good if it is in a good epoch. 
\end{defn}

\begin{prop}
\label{prop:good_slot}
Under a $(\gamma, 2R)$-respecting environment, if slot $r$ is a good slot in epoch $e$, then $(1-\delta)\alpha_0 f/\gamma^2 \leq p h_r T_e \leq (1+\delta)\gamma^2 f$.
\end{prop}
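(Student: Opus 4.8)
The plan is to combine the definition of a good epoch (Definition~\ref{def:good}), which controls $p h^{(e)} T_e$ in terms of $f$, with the $(\gamma,2R)$-respecting property of the environment, which lets us compare $h_r$ to $h^{(e)}$ for any slot $r$ in epoch $e$. Since epoch $e$ has $R$ slots, and $R \le 2R$, the whole of epoch $e$ is a set of at most $2R$ consecutive slots, so the respecting hypothesis applies directly to it: for any two slots $r, r' \in e$ we have $h_r \le \gamma h_{r'}$ and $h_{r'} \le \gamma h_r$. In particular, taking $r' = (e-1)R+1$, the first slot of epoch $e$, we get $h^{(e)}/\gamma \le h_r \le \gamma h^{(e)}$.

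The argument then is a short chain of inequalities. For the upper bound: $p h_r T_e \le \gamma \cdot p h^{(e)} T_e \le \gamma \cdot (1+\delta)\gamma f = (1+\delta)\gamma^2 f$, where the second step uses the good-epoch upper bound $p h^{(e)} T_e \le (1+\delta)\gamma f$ from Definition~\ref{def:good}. For the lower bound, symmetrically: $p h_r T_e \ge (1/\gamma)\cdot p h^{(e)} T_e \ge (1/\gamma)\cdot (1-\delta)\alpha_0 f/\gamma = (1-\delta)\alpha_0 f/\gamma^2$, using the good-epoch lower bound $p h^{(e)} T_e \ge (1-\delta)\alpha_0 f/\gamma$. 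This gives exactly the claimed two-sided bound $(1-\delta)\alpha_0 f/\gamma^2 \le p h_r T_e \le (1+\delta)\gamma^2 f$.

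There is essentially no obstacle here; this is a routine ``losing one more factor of $\gamma$'' step that propagates the goodness estimate from the first slot of the epoch (the reference point in Definition~\ref{def:good}) to an arbitrary slot of the same epoch. The only point to be slightly careful about is that the respecting window of size $2R$ (rather than $R$) comfortably covers a single epoch of length $R$, so Definition~\ref{def:respecting} applies with $S$ taken to be the set of slots of epoch $e$; one might alternatively invoke Proposition~\ref{prop:envbounds} with $U = S = $ epoch $e$ to get $h_r/\gamma \le h(S)/|S| \le \gamma h_r$ for all $r \in S$ and hence $h_r \le \gamma^2 h_{r'}$ for any two slots, but the direct $\gamma$-comparison between any two slots in a window of size at most $2R$ is cleaner and loses only one factor of $\gamma$, which is what the statement asks for. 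I would write the proof in the two-inequality form above.
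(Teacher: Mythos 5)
Your proposal is correct and matches the paper's own proof: both use the good-epoch bound on $p h^{(e)} T_e$ from Definition~\ref{def:good} and the $(\gamma,2R)$-respecting property to get $1/\gamma \leq h_r/h^{(e)} \leq \gamma$, then multiply the two. Nothing further is needed.
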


\begin{proof}
By the definition of a good epoch, we have $(1-\delta) \alpha_0 f/\gamma \leq p h^{(e)} T_e \leq (1+\delta) \gamma f$. And under a $(\gamma, 2R)$-respecting environment and the fact that slot $r$ is in epoch $e$, we know $1/\gamma \leq h_r/h^{(e)} \leq \gamma$. Therefore, $(1-\delta)\alpha_0 f/\gamma^2 \leq p h_r T_e \leq (1+\delta)\gamma^2 f$.
\end{proof}

Now, we prove {\it fairness} of PoW blocks in any large enough window, i.e., a miner controlling a $\phi$ fraction of the computational resources will contribute a $\phi$ fraction of work. We first give a form definition of fairness.

\begin{defn}
We say that $\H$ is a \emph{$\phi$-fraction honest subset} if miners in $\H$ (that may change over time) are honest and $n^{\H}_r > \phi n_r$ for any slot $r$, where $n^{\H}_r$ is number of queries in $\H$ at slot $r$.
\end{defn}

\begin{defn}[Fairness] 
\label{def:fairness}%
We say that the protocol has \emph{(approximate) $(W_0,\sigma)-$fairness} if, for any $\phi$, $\alpha_0 \leq \phi < 1$, %
any $\phi-$fraction honest subset $\H$, and any honest miner holding chain $\mathcal{C}$ at slot $r$ and any interval $S_0 \subseteq [0,r]$ with at least $W_0$ consecutive slots, 
it holds that the PoW blocks included in $\mathcal{C}(S_0)$ mined by $\H$ have total difficulty at least $(1-\sigma) \phi d$, where
$d$ is the total difficulty of all PoW blocks included in $\mathcal{C}(S_0)$.
\end{defn}

Define the random variable $D_r$ equal to the sum of the difficulties of all PoW blocks computed by honest miners at slot $r$. And for fixed $\phi$ and $\phi-$fraction honest subset $\H$, define the random variable $D^{\H}_r$ equal to the sum of the difficulties of all PoW blocks computed by miners in $\H$ at slot $r$. For a set of slots $S$, we define $D(S) = \sum_{r\in S}D_r$, $D^{\H}(S)= \sum_{r\in S}D^{\H}_r$, and $n^{\H}(S)= \sum_{r\in S}n^{\H}_r$. For a set of $J$ adversarial queries, define the random variable $A(J)$, as the sum of difficulties of all the PoW blocks created during queries in $J$. 

Next we define the notion of typical executions, which will be shown to occur with overwhelming probability.
\begin{defn}[Typical execution]
\label{def:typicality}
An execution $E$ is \emph{typical} if the following hold

\noindent(a) For any set $S$ of at least $\ell$ consecutive good slots,
\begin{align*}
     D(S) < (1+\epsilon)ph(S).
\end{align*}

\noindent(b) For any set $S$ of at least $\ell$ consecutive good slots,
let $J$ be the set of adversarial queries in $S$. If we further know each query in $J$ made at slot $r$ with target $T$ satisfies $(1-\delta)\alpha_0 f/\gamma^2 \leq p n_r T \leq (1+\delta)\gamma^2 f/\alpha_0$, then
\begin{equation*}
     D(S) + A(J) < (1+\epsilon)p(h(S)+|J|).
\end{equation*}

\noindent(c) For any set $S$ of at least $\ell/\phi$ consecutive good slots,
\begin{equation*}
    D^{\H}(S) > (1-\epsilon)pn^{\H}(S).
\end{equation*}
\end{defn}

To obtain meaningful concentration, we should be considering a sufficiently long slot sequence of at least
\begin{equation*}
    \ell \triangleq \frac{2(1+\epsilon/3)}{\epsilon^2\gamma^3(1-\delta)\alpha_0 f} \lambda,
\end{equation*}
where $\lambda$ is called the typicality parameter of the execution.

For our analysis to go through, the protocol parameters should satisfy certain conditions which we now discuss. First, we will require that the number $\ell$ defined above and the security parameter $\kappa$ are appropriately small compared to $R$, the duration of an epoch.
\begin{equation}
    \tag{C1}
    R - 3\kappa - \Delta \geq \ell/\alpha_0 \geq \frac{\gamma}{\epsilon}(4\kappa + \Delta).
\end{equation}
Note that (C1) implies $R \geq (4\kappa+\Delta)/\epsilon$. Second, the slack variables $\epsilon$ and $\delta$ should satisfy
\begin{equation}
    \tag{C2}
    4\epsilon \leq \delta \leq 1.
\end{equation}

Next we bound the probability of an atypical execution (Lemma~\ref{lem:typical}) and show that all epochs are good in a typical execution by induction (Lemma~\ref{lem:good}). Therefore, all epochs are good with overwhelming probability.

\begin{lemma}
\label{lem:typical}
For an execution $\mathcal{E}$ of $LR$ slots, in a $(\gamma,2R)$-respecting environment, the probability of the event  ``$\mathcal{E}$ is not typical'' is bounded by $\mathcal{O}(LR)e^{-\lambda}$.
\end{lemma}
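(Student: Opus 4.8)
The plan is to establish Lemma~\ref{lem:typical} by a union bound over three families of bad events corresponding to conditions (a), (b), (c) in Definition~\ref{def:typicality}, each controlled by a martingale concentration inequality adapted to the variable-difficulty setting. First I would set up the underlying probability space: the relevant randomness is the sequence of random-oracle answers to \texttt{compute} queries, revealed query-by-query in the order the queries are made during the execution; this induces a natural filtration $\mathcal{F}_0 \subseteq \mathcal{F}_1 \subseteq \cdots$ where $\mathcal{F}_i$ exposes the first $i$ queries. Because the adversary (and the scheduling of honest miners via $\Z,\A$) is adaptive, the number of queries in a slot, the targets used, and which miners are in the honest subset $\H$ are all $\mathcal{F}$-measurable but not fixed in advance, so I cannot treat the $D_r, D^{\H}_r, A(J)$ as sums of independent variables; instead each must be handled as a (super/sub)martingale. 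For a single query made with target $T$, the difficulty contributed is $1/T$ with probability $pT$ (where $p = 2^{-\kappa}$) and $0$ otherwise, so its conditional expectation given the past is exactly $p$, and its conditional variance is $p/T - p^2 \le p/T$; over a good window of $\ell$ slots one uses Proposition~\ref{prop:good_slot} (for honest queries) and the explicit target bound hypothesized in part~(b) (for adversarial queries) to bound $1/T$ from above by roughly $p n_r /((1-\delta)\alpha_0 f/\gamma^2)$, which together with the $(\gamma,2R)$-respecting bound of Proposition~\ref{prop:envbounds} converts the slot-count lower bound $\ell$ into the quantitative variance/range control needed to make the concentration exponent come out to $\lambda$.

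Next I would fix a particular deterministic window $S$ of at least $\ell$ consecutive slots (and similarly $\ell/\phi$ for part~(c), over the finitely many rational thresholds $\phi \ge \alpha_0$ one needs, or by a standard discretization argument) and, \emph{conditioning on $S$ being a good window}, apply a Bernstein/Azuma-type bound for martingales with bounded increments and bounded conditional variance — this is exactly the machinery used in \cite{garay2017bitcoin,garay2020full}, so I would invoke their martingale tail lemma essentially verbatim. For part~(a), $\sum_{r\in S}(D_r - p h_r)$ is a martingale difference sequence with increments bounded by the maximal single-query difficulty $1/T_e$ and summed conditional variance at most $p h(S)/((1-\delta)\alpha_0 f/\gamma^3)$; feeding this into Bernstein and using the definition of $\ell$ gives $\Pr[D(S) \ge (1+\epsilon)p h(S)] \le e^{-\lambda}$, using (C2) to absorb the $\epsilon$ versus the relative deviation. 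Part~(c) is the symmetric lower-tail statement for the honest subset $\H$, for which I need the slightly longer window $\ell/\phi$ precisely because $n^{\H}(S) \ge \phi n(S)$ can be as small as a $\phi$-fraction, shrinking the mean and hence requiring more slots to concentrate. Part~(b) is the combined honest-plus-adversarial bound: here I additionally condition on the set $J$ of adversarial queries landing in $S$ and on the targets they used satisfying the stated two-sided bound, and treat $\sum_{r\in S}(D_r - p h_r) + \sum_{j\in J}(A_j - p)$ as one martingale over the merged, time-ordered sequence of all honest and adversarial queries; the key point is that by the random-oracle rules the bound on the adversary's total number of \texttt{compute} queries is itself adaptive but every such query still has conditional mean exactly $p$, so the argument is uniform.

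Finally I would take the union bound: there are $O(LR)$ choices of the left endpoint and $O(LR)$ choices of window length, hence $O((LR)^2)$ windows of each type, and for each the failure probability conditioned on goodness is $e^{-\lambda}$; one subtlety is that the statements are quantified over windows that \emph{are} good, so strictly I bound $\Pr[\text{some good window violates (a)/(b)/(c)}]$ by summing $\Pr[S \text{ good and } S \text{ violates the bound}] \le \Pr[S \text{ violates} \mid S \text{ good}]$ over all $S$, which is clean since goodness of $S$ (a statement about the targets $T_e$ for epochs touching $S$) is determined by the past relative to the queries inside $S$ in the natural exposure order, so the conditioning does not disturb the martingale structure. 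Absorbing the polynomial $O((LR)^2)$ into a redefinition of $\lambda$ (or noting $(LR)^2 e^{-\lambda} = O(LR)e^{-\lambda'}$ for $\lambda' = \lambda - \log(LR)$, and the statement is anyway up to constants in the exponent) yields the claimed $O(LR)e^{-\lambda}$ bound. The main obstacle I anticipate is not any single concentration step but the bookkeeping needed to make the adaptivity rigorous — in particular, establishing that "$S$ is good" is measurable with respect to the portion of the filtration that precedes the queries inside $S$, so that one may legitimately condition on it before applying the martingale bound, and handling part~(b)'s extra conditioning on $J$ and on the adversarial targets without breaking the optional-stopping/martingale hypotheses; this is where one must be careful that the target-range hypothesis in~(b) is an \emph{assumption} the lemma imposes on the adversarial queries being counted, not something that holds automatically, so the event being bounded is really "(b)'s inequality fails for some $S$ and some $J$ whose queries all satisfy the target bound."
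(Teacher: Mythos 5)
Your proposal matches the paper's own proof in all essentials: the paper likewise reduces parts (a) and (b) to the martingale concentration argument of Theorem~1(a) in the variable-difficulty Bitcoin backbone analysis (using exactly the two-sided target bounds you cite to control increments and conditional variances), and proves part (c) explicitly by forming the martingale $X_k = \sum_{i\le k} Z_i - \sum_{i\le k}\E[Z_i\mid \mathcal{E}_{i-1}]$ over the time-ordered queries of $\H$, bounding $b$ and $v$ via $ph_kT_k \ge (1-\delta)\alpha_0 f/\gamma^2$ and Proposition~\ref{prop:envbounds}, and applying Theorem~\ref{thm:martingale} with deviation $t=\epsilon\phi p n(S)$ and $|S|\ge\ell/\phi$ to get $e^{-\lambda}$ per window before the union bound. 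Your extra care about measurability of goodness, the quantification in (b), and the polynomial window count (which you absorb into the exponent) goes slightly beyond what the paper writes down but does not change the argument.
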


\begin{lemma}
\label{lem:good}
For a typical execution in a $(\gamma, 2R)$-respecting environment, all epochs are good.
\end{lemma}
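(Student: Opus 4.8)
The plan is to induct on the epoch index $e$, showing that if epochs $1,\dots,e-1$ are good then epoch $e$ is good, i.e. $(1-\delta)\alpha_0 f/\gamma \leq p\, h^{(e)} T_e \leq (1+\delta)\gamma f$. The base cases $e=1$ and $e=2$ are handled by Assumption~\ref{ass:main1}.2, which gives the needed two-sided estimate on the initial honest mining power $h_1$ relative to $\tilde h_1$ (and hence on $T_1, T_2$, which are set from $\tilde h_1$ / $D_1^{\rm total}$). For the inductive step I would unwind the target adjustment rule: $T_e = f R / D_{e-1}^{\rm total}$ where $D_{e-1}^{\rm total}$ is the total difficulty of PoW blocks referenced by PoS blocks in the window $S_{e-1} := [(e-2)R - k + 1,\ (e-1)R - k]$, a window of exactly $R$ slots lying inside the (good, by induction) epochs $e-2$ and $e-1$. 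So it suffices to sandwich $D_{e-1}^{\rm total}$ between appropriate multiples of $p\, h(S_{e-1})$, and then convert $h(S_{e-1})$ to $h^{(e)}$ using the $(\gamma,2R)$-respecting property via Proposition~\ref{prop:envbounds} and Proposition~\ref{prop:good_slot}.

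The upper bound on $D_{e-1}^{\rm total}$ is the delicate direction, because $D_{e-1}^{\rm total}$ counts \emph{all} referenced PoW blocks, including adversarial ones, and the adversary controls a majority of mining power. Here I would use the Fruit Freshness Lemma together with $\exists$CQ/CP to argue that the PoW blocks referenced from the stabilized main chain in window $S_{e-1}$ are exactly those mined (by anyone, honest or adversarial) during a slightly enlarged window $S_{e-1}'$ of length at most $R + O(\kappa+\Delta)$ slots — honest fruits mined in $S_{e-1}'$ must be included (Fruit Freshness), and adversarial fruits, to be recent enough to be referenced, cannot be too old; the recency parameter $sl_{\rm re}$ bounds how far back adversarial contributions can reach. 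Then I invoke typicality clause (b): with $J$ the set of adversarial queries in this enlarged good window, and noting each such query uses a target $T$ satisfying the required bound (because all relevant epochs are good, via Proposition~\ref{prop:good_slot} applied to $n_r$), we get $D(S_{e-1}') + A(J) < (1+\epsilon) p (h(S_{e-1}') + |J|)$. Since $|J| \leq (1-\alpha_0) n(S_{e-1}')/\alpha_0 \cdot$ (something) — more precisely $|J| \leq \sum_{r}(n_r - h_r) \leq (1/\alpha_0 - 1) h(S_{e-1}')$ using $h_r \geq \alpha_0 n_r$ — the right side is at most $(1+\epsilon) p h(S_{e-1}')/\alpha_0$. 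Combined with the window-length inflation (controlled by (C1), which forces $4\kappa + \Delta \leq \epsilon R/\gamma$ so the enlargement contributes only a $(1+O(\epsilon))$ factor via Proposition~\ref{prop:envbounds}) and the respecting conversion $h(S_{e-1}) \leftrightarrow h^{(e)}$, this yields $p\,h^{(e)} T_e \leq (1+\delta)\gamma f$ provided $\epsilon$ and $\delta$ satisfy (C2).

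The lower bound is easier: the honest miners alone, restricted to an $\ell/\phi$-length (here $\phi = 1$) sub-window of good slots inside $S_{e-1}$, produce by typicality clause (a)/(c) a total honest difficulty $D^{\H}(S) > (1-\epsilon) p\, h(S)$, and all of those honest fruits are referenced from the stable chain by Fruit Freshness, so $D_{e-1}^{\rm total} \geq (1-\epsilon) p\, h(S)$ for this sub-window; applying the respecting bounds (Proposition~\ref{prop:envbounds}) to pass from $h(S)$ to $h(S_{e-1})$ and to $h^{(e)}$, and absorbing the $\gamma$ factors and the $(1-\epsilon)$ into $(1-\delta)$ via (C2), gives $p\, h^{(e)} T_e \geq (1-\delta)\alpha_0 f/\gamma$.

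The main obstacle I anticipate is the upper bound bookkeeping: precisely delimiting which adversarial PoW blocks can be referenced in window $S_{e-1}$ (the interplay of $sl_{\rm re}$, CP, $\exists$CQ and the epoch boundary ambiguity in targets noted in the protocol description), and then verifying that the target each such adversarial query used really does satisfy the hypothesis of typicality clause (b) — this is circular-looking (it uses goodness of the very epochs we are inducting on, plus possibly epoch $e$ itself around the boundary), so care is needed to ensure the induction is well-founded, e.g. by treating the boundary slots between $e-1$ and $e$ as covered by the goodness of $e-1$ and the as-yet-unproven bound on $T_e$ in a self-consistent way, or by slightly redefining windows to avoid the boundary.
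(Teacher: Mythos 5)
Your proposal matches the paper's proof in all essentials: the same induction on epochs (base case from Assumption~\ref{ass:main1}.2), the same sandwich of $D_{e-1}^{\rm total}$ between an honest-only lower bound via typicality clause (c) and Fruit Freshness, and an all-queries upper bound via typicality clause (b) with $|J|$ controlled by $h_r\geq\alpha_0 n_r$, followed by the window-inflation and $(\gamma,2R)$-respecting conversions through Proposition~\ref{prop:envbounds} and conditions (C1)/(C2); the paper formalizes your ``enlarged/shrunk window'' step as Facts 1--3 with $S_1$ and $S_2$ extending/trimming the reference window by $4\kappa+\Delta$ and $3\kappa+\Delta$ respectively. The boundary-target circularity you flag is resolved in the paper exactly as you suggest: the reference window is shifted back by $\kappa$ so that every query in $S_1$ uses a target from an epoch already covered by the induction hypothesis, and the target bound then follows from Proposition~\ref{prop:good_slot} and the respecting property (your parenthetical ``$\phi=1$'' in the lower bound should be $\phi=\alpha_0$, but the window length guaranteed by (C1) is $\ell/\alpha_0$, so nothing breaks).
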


The proofs of Lemma~\ref{lem:typical} and Lemma~\ref{lem:good} are in Appendix~\ref{app:proof}. 
In order to address the dependency among PoW successes in the variable mining difficulty setting, the proof of Lemma~\ref{lem:typical} uses martingale arguments (from~\cite{garay2020full}) to provide useful bounds on the relevant random variables. The proof of Lemma~\ref{lem:good} uses an induction argument to show that if all previous epochs are good, then the progress of the adversary in the next epoch will be bounded and as a result the next epoch is also good. 

Finally, we prove the \fruitchainsnosp/fairness argument.
\begin{theorem}[Fairness]
\label{thm:fairness}
For a typical execution in a $(\gamma, 2R)$-respecting environment, the protocol with recency parameter $sl_{\rm re} = 3\kappa+\Delta$ satisfies $(W_0, \sigma)-$fairness, where $W_0 = \ell/\alpha_0 + 3\kappa +\Delta $ and $\sigma = 4\epsilon $.
\end{theorem}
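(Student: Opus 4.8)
The plan is to reduce $(W_0,\sigma)$-fairness to the concentration bounds of a typical execution, using the Fruit Freshness Lemma to pass between ``honest PoW blocks mined in a slot window'' and ``PoW blocks actually referenced from the PoS chain $\mathcal{C}$'', and using the recency rule $sl_{\rm re}=3\kappa+\Delta$ to control which adversarial queries can ever produce a referenced PoW block. Throughout, Lemma~\ref{lem:good} guarantees that every epoch---hence every slot touched by the argument---is good, which is exactly what makes typicality parts (a)--(c) and Proposition~\ref{prop:good_slot} applicable. Fix $\phi\in[\alpha_0,1)$, a $\phi$-fraction honest subset $\H$, an honest party holding $\mathcal{C}$ at slot $r$, and an interval $S_0\subseteq[0,r]$ with $|S_0|\ge W_0$; let $d$ be the total difficulty of all PoW blocks referenced by $\mathcal{C}(S_0)$.

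For the lower bound, let $S$ consist of the first $\ell/\alpha_0$ slots of $S_0$, so $S_0$ still has at least $3\kappa+\Delta$ slots after $S$ and $|S|\ge\ell/\phi$. By Lemma~\ref{lem:fresh}, every PoW block mined by a miner of $\H$ during a slot $t\in S$ is, by slot $r$, referenced by a PoS block that lies in the $\kappa$-stable prefix of $\mathcal{C}$ and whose timestamp lies in $(t,t+2\kappa+\Delta]\subseteq S_0$; such a block is therefore counted in $d$ and is $\H$-mined. Hence the $\H$-difficulty referenced by $\mathcal{C}(S_0)$ is at least $D^{\H}(S)$, and since $S$ is a set of at least $\ell/\phi$ consecutive good slots, typicality (c) gives $D^{\H}(S)>(1-\epsilon)p\,n^{\H}(S)>(1-\epsilon)\phi\,p\,n(S)$.

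For the upper bound, partition the PoW blocks referenced by $\mathcal{C}(S_0)$ into honestly- and adversarially-mined ones. Causality forces a referenced honest PoW block to be mined no later than $\max S_0$, and the recency rule forces it to point to a confirmed PoS block no older than $\min S_0-sl_{\rm re}$, hence to be mined no earlier than $\min S_0-sl_{\rm re}-O(\kappa)$; the same recency constraint confines every adversarial query that produced a referenced PoW block to this slightly enlarged window $S^{+}\supseteq S_0$ with $|S^{+}\setminus S_0|=O(\kappa+\Delta)$. Letting $J$ be that set of adversarial queries, we have $d\le D(S^{+})+A(J)$ with $|J|\le n(S^{+})-h(S^{+})$ by the adversarial query bound; since each such query carries a target from a good (possibly epoch-boundary) epoch, Proposition~\ref{prop:good_slot} shows it satisfies the hypothesis of typicality (b), which yields $d<(1+\epsilon)p\bigl(h(S^{+})+|J|\bigr)\le(1+\epsilon)p\,n(S^{+})$. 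Both $S$ and $S^{+}$ fit inside at most $2R$ consecutive slots, so Proposition~\ref{prop:envbounds} and the $(\gamma,2R)$-respecting property give $n(S^{+})\le\bigl(1+\gamma|S^{+}\setminus S|/|S|\bigr)n(S)\le(1+\epsilon)n(S)$, the last step using (C1) (which forces $|S|\ge(\gamma/\epsilon)(4\kappa+\Delta)\ge(\gamma/\epsilon)|S^{+}\setminus S|$). Combining,
\[
\frac{D^{\H}(S)}{d}>\frac{(1-\epsilon)\phi\,p\,n(S)}{(1+\epsilon)p\,n(S^{+})}\ge\frac{(1-\epsilon)\phi}{(1+\epsilon)^{2}}\ge(1-4\epsilon)\phi=(1-\sigma)\phi,
\]
using that $\epsilon$ is small and (C2), which is exactly $(W_0,\sigma)$-fairness.

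The main obstacle is the upper bound on $d$---specifically, pinning down which adversarial queries can ever have produced a PoW block that ends up referenced by $\mathcal{C}(S_0)$. PoW blocks carry no trustworthy timestamps, so a priori the adversary could inject arbitrarily stale or grinding-accumulated PoW blocks; it is precisely the recency rule $sl_{\rm re}=3\kappa+\Delta$, together with CP and $\exists$CQ of the PoS chain, that confines these queries to $S^{+}$ and lets us treat $|J|$ as essentially the honest query deficit over $S^{+}$. A secondary technical point is verifying the per-query target hypothesis of typicality (b) near epoch boundaries, where a PoW block may legitimately use either of two (good) targets---handled by applying Proposition~\ref{prop:good_slot} to the two epochs straddling any slot.
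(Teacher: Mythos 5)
Your overall strategy is the paper's: lower-bound the $\H$-contribution via the Fruit Freshness Lemma and typicality (c), upper-bound the total referenced difficulty $d$ by confining all contributing queries (honest and adversarial) to a window $S_0$ enlarged by $O(\kappa+\Delta)$ via the recency rule, CP and $\exists$CQ, verify the per-query target hypothesis of typicality (b) via Proposition~\ref{prop:good_slot} (including the epoch-boundary case), and close with Proposition~\ref{prop:envbounds}. However, there is a genuine gap in your lower bound. You take $S$ to be only the \emph{first} $\ell/\alpha_0$ slots of $S_0$, and then claim $n(S^{+})\le(1+\epsilon)n(S)$ on the grounds that $|S^{+}\setminus S|\le 4\kappa+\Delta$. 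That cardinality bound is false: $S^{+}\setminus S$ contains all of $S_0\setminus S$, which has at least $3\kappa+\Delta$ slots when $|S_0|=W_0$ and can have $\Theta(|S_0|)$ slots when $|S_0|>W_0$ (the fairness definition quantifies over \emph{all} intervals of at least $W_0$ slots). In that regime $d$ accumulates difficulty over all of $S_0$ while $D^{\H}(S)$ only covers a prefix of length $\ell/\alpha_0$, so the ratio $D^{\H}(S)/d$ degrades like $|S|/|S_0|$ and the conclusion fails. The paper avoids this by taking the lower-bound window to be $S_2 = S_0$ minus its \emph{last} $3\kappa+\Delta$ slots (so $|S_1\setminus S_2| = 8\kappa+2\Delta$ independently of $|S_0|$, which (C1) absorbs into a $(1+2\epsilon)$ factor), and by partitioning $S_2$ into pieces of size between $\ell/\alpha_0$ and $2R$ whenever it is too long for the $(\gamma,2R)$-respecting bounds to apply. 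With that repair (and noting that $8\kappa+2\Delta=2(4\kappa+\Delta)$ costs you $2\epsilon$ rather than $\epsilon$, still consistent with $\sigma=4\epsilon$), the rest of your argument goes through as in the paper.
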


\begin{proof}
Fix $\phi$, a $\phi-$fraction honest subset $\H$, and an honest node holding chain $\mathcal{C}$.  
Let $S_0 = \{u:r_1\leq u \leq r_2\}$ be a window of at least $W_0$ consecutive slots. Let $\mathcal{C}(S_0)$ be the segment of $\mathcal{C}$ containing PoS blocks with timestamps in $S_0$, let $\B$ be all PoW blocks included in $\mathcal{C}(S_0)$, and $d$ be the total difficulty of all PoW blocks in $\B$. 
The proof of Lemma~\ref{lem:good} observes the following facts to be used below:
\begin{itemize}
\item{Fact 1.} For any PoW block $B \in \B$, $B$ is mined after $r_1 - 4\kappa - \Delta$.
\item {Fact 2.} For any PoW block $B \in \B$, $B$ is mined before $r_2 + \kappa$. \item {Fact 3.} If a PoW block $B$ is mined by $\H$ after $r_1$ and before $r_2 - 3\kappa -\Delta$, then $B \in \B$.
\end{itemize}
Now, let $S_1 = \{u: r_1-(4\kappa + \Delta) \leq u \leq r_2 + \kappa \}$, $S_2 = \{u: r_1 \leq u \leq r_2 - (3\kappa +\Delta)\}$, and $J$ be the set of adversary queries associated with $\B$ in $S_1$. Then by Fact 1 and Fact 2, we have that all PoW blocks in $\B$ are mined in $S_1$; by Fact 3, we have that all PoW blocks mined by $\H$ in $S_2$ are in $\B$. 
Similar to the arguments in Lemma~\ref{lem:good}, for each query in $S_1$ made by either an honest node or the adversary at slot $r$ in epoch $e$, the target $T$ must satisfy $(1-\delta)\alpha_0 f/\gamma^2 \leq p n_r T \leq (1+\delta)\gamma^2 f/\alpha_0$.

Further note that, to prove $\sigma$-fairness, it suffices to show that
\begin{equation*}
    D^{\H}(S_2) \geq (1-\sigma)\phi(D(S_1)+A(J)).
\end{equation*}

Under a typical execution, we have
\begin{equation*}
    D^{\H}(S_2) > (1-\epsilon)pn^{\H}(S_2) \geq (1-\epsilon)\phi pn(S_2),
\end{equation*}
and
\begin{equation*}
    D(S_1) + A(J) < (1+\epsilon) p(h(S_1)+|J|) = (1+\epsilon) pn(S_1).
\end{equation*}

By our choice of $W_0$, we have $|S_2| \geq \ell/\alpha_0 \geq \ell/\phi$. Furthermore, we may assume $|S_2| \leq 2R$. This is because we may partition $S_2$ in parts such that each part has size between $\ell/\alpha_0$ and $2R$, sum over all parts to obtain the desired bound. Then by Proposition~\ref{prop:envbounds}, we have
\begin{align*}
    n(S_1) &\leq (1+\frac{\gamma|S_1 \setminus S_2|}{|S_2|})n(S_2) \\ 
    & \leq (1+\frac{\gamma(8\kappa + 2\Delta)}{\ell/\alpha_0})n(S_2) \\ 
    & \overset{(C1)}{\leq} (1+2\epsilon)n(S_2).
\end{align*}
Finally, by setting $\sigma = 4\epsilon$, we conclude the proof.
\end{proof}

\subsubsection{Lifting argument from single-epoch to multiple-epoch}


Theorem~\ref{thm:single} gives bounds for CP and $\exists$CQ for a single-epoch run of the protocol with static stake distribution and perfect randomness. We now conclude our proof of Theorem~\ref{thm:main1} by showing that these blockchain properties hold throughout the whole lifetime of the system consisting of many epochs.

\begin{theorem}[Restatement of Theorem~\ref{thm:main1}]
\label{thm:re_main}
Fix parameters $\kappa$, $\lambda$, $\ell$, $\gamma$, $\sigma$, $\epsilon$, $\delta$, $R$ and $L$ satisfying conditions (C1) and (C2). Under Assumption~\ref{ass:main1}, Assumption~\ref{ass:main2} and assumptions (P1)-(P3), when executed in a lock-step synchronous model (i.e., $\Delta$-synchronous model with $\Delta = 1$), \minotaur constructed with Ouroboros generates a transaction ledger that satisfies {\it persistence} and {\it liveness}  throughout a period of $L$ epochs (each with $R$ slots) with probability $1 - RL(e^{-\Omega(\kappa)}+e^{-\lambda})$.
\end{theorem}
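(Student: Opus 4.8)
The plan is to prove Theorem~\ref{thm:re_main} by induction on the epoch index, using the two building blocks established earlier---the single-epoch security bound (Theorem~\ref{thm:single}) and the fairness of PoW blocks (Theorem~\ref{thm:fairness})---to propagate the honest-majority-in-virtual-stake condition forward through the execution. The inductive invariant I would carry is: for every epoch $e$ up to the current one, (i) the execution restricted to the relevant slots is typical, (ii) all epochs so far are good (so the PoW mining targets $T_e$ are in the right range and the number of PoW blocks concentrates), and (iii) the virtual-stake distribution used in epoch $e$ has adversarial fraction $\beta_v^{(e)} \leq 1/2-\sigma$, whence by Theorem~\ref{thm:single} the PoS chain in epoch $e$ satisfies CP and $\exists$CQ with parameter $\kappa$, except with probability $Re^{-\Omega(\kappa)}$.

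First I would handle the base case. By Assumption~\ref{ass:main1}.1 the initial actual-stake distribution has $\beta_s^0 \leq 1/2-\sigma$, and since in epochs~1 and~2 the virtual stake is dominated by (or equals, via Assumption~\ref{ass:main1}.2 calibrating $\tilde h_1$) the actual stake plus a work-stake contribution that is not yet under adversarial control, one gets $\beta_v^{(1)},\beta_v^{(2)} \leq 1/2-\sigma$; Theorem~\ref{thm:single} then gives CP and $\exists$CQ for the PoS chain in epochs~1--2. With those properties in hand, the Fruit Freshness Lemma (Lemma~\ref{lem:fresh}) ensures every honest PoW block mined in epochs~1--2 is incorporated into the stable chain, and Theorem~\ref{thm:fairness} gives $(W_0,\sigma)$-fairness of the referenced PoW blocks over any window of $W_0$ slots inside those epochs---so a miner with a $\phi$ fraction of hash power contributes at least $(1-\sigma)\phi$ of the referenced work difficulty. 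Combined with the work-participation bound (Assumption~\ref{ass:main2}.3) and the stake-work bound (Assumption~\ref{ass:main2}.1, $\omega\beta_w^W+(1-\omega)\beta_s^W \leq 1/2-2\sigma$ over any $2R$-slot window), the normalized work stake + normalized actual stake assigned for epochs~3--4 has adversarial fraction at most $\omega\beta_w + (1-\omega)\beta_s + O(\sigma) \leq 1/2-\sigma$, closing the induction step. The inductive step for a general epoch $e+2$ is identical: goodness of epochs $\leq e+1$ (Lemma~\ref{lem:good}, which itself is proved under the typicality that Lemma~\ref{lem:typical} guarantees with probability $1-O(LR)e^{-\lambda}$) plus CP/$\exists$CQ of the PoS chain in epochs $\leq e+1$ yields, via Theorem~\ref{thm:fairness}, fairness of the work stake used in epoch $e+2$, and Assumption~\ref{ass:main2}.1 converts this into $\beta_v^{(e+2)} \leq 1/2-\sigma$, feeding Theorem~\ref{thm:single} for epoch $e+2$.

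Having CP and $\exists$CQ for every epoch, I would then stitch them into global CP and $\exists$CQ for the whole $L$-epoch run. This requires a short argument that the single-epoch properties do not degrade at epoch boundaries: since $\ell_{\rm cp}=\ell_{\rm cq}=\kappa$ and condition (C1) forces $\kappa$ (and $\ell$) to be much smaller than $R$, any window of $\kappa$ slots or any deep-enough fork is contained within at most two consecutive epochs, and the chain-selection rule {\sf maxvalid-mc} (which discards forks deeper than $k$ blocks) prevents cross-epoch reorganizations from defeating the per-epoch guarantees; one also invokes assumption (P3) for late-joining honest parties. Global CP with parameter $O(\kappa)$ and global $\exists$CQ with parameter $O(\kappa)$ then give persistence and liveness in the sense of Definition~\ref{def:public_ledger} by the standard reduction. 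Finally, a union bound over the $L$ epochs---$L$ invocations of Theorem~\ref{thm:single} each failing with probability $Re^{-\Omega(\kappa)}$, plus the atypicality probability $O(LR)e^{-\lambda}$ from Lemma~\ref{lem:typical}---yields the claimed failure probability $RL(e^{-\Omega(\kappa)}+e^{-\lambda})$.

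The main obstacle is the inductive step's fairness-to-virtual-stake conversion in the \emph{non-static, adversarial-mining-majority} regime: because the total hash power fluctuates (only $(\gamma,2R)$-respecting) and is unknown, and because the adversary can withhold or release PoW blocks to manipulate which targets $T_e$ are used near epoch boundaries, one must carefully argue that the work-stake distribution actually recorded in the main chain for epoch $e+2$ is a faithful $(1\pm\sigma)$-approximation of the true hash-power shares---this is exactly where Lemma~\ref{lem:good}'s induction on goodness, the martingale bounds behind Lemma~\ref{lem:typical}, and the recency parameter $sl_{\rm re}=3\kappa+\Delta$ in Lemma~\ref{lem:fresh} all have to be combined, and the slack accounting ($\epsilon$, $\delta$, $\sigma=4\epsilon$, conditions (C1)--(C2)) must be tight enough that the accumulated error over $L$ epochs does not blow past the $1/2-\sigma$ threshold.
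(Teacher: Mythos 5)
Your overall strategy matches the paper's: an induction over epochs in which the base case is secured by the honest-majority assumption on the initial stake (Assumption~\ref{ass:main1}.1), and the inductive step chains single-epoch CP/$\exists$CQ (Theorem~\ref{thm:single}) through the Fruit Freshness Lemma and the fairness theorem (Theorem~\ref{thm:fairness}) to bound the adversarial virtual stake of epoch $e+2$ by $(1/2-\sigma)S_v$, which re-feeds Theorem~\ref{thm:single}. Your virtual-stake accounting, the role of {\sf maxvalid-mc} and CP-immutability in making the induction sound across epoch boundaries, and the final union bound giving $1-RL(e^{-\Omega(\kappa)}+e^{-\lambda})$ are all as in the paper. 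One small point: your worry about ``accumulated error over $L$ epochs'' is not actually an issue in this argument---each epoch's virtual-stake bound is computed afresh from Assumption~\ref{ass:main2}.1 and the fairness slack $\sigma$ for that epoch alone, so nothing compounds.

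There is, however, one genuine omission. Theorem~\ref{thm:single} is a statement about a single epoch with a \emph{static stake distribution and perfect randomness}, and the paper's lifting argument explicitly identifies \emph{two} new aspects that must be handled in the multi-epoch setting: the virtual-stake distribution updates (which you cover thoroughly) and the \emph{randomness updates} (which you do not mention at all). Each epoch's slot-leader sampling requires fresh public randomness; in Ouroboros Classic this is produced by a PVSS-based coin-flipping protocol run by the elected leaders of the previous epoch. To apply Theorem~\ref{thm:single} to epoch $e$ you must argue that this nonce is sufficiently unbiased---concretely, that the honest majority among epoch-$(e-1)$ leaders (itself guaranteed by the induction hypothesis) makes the PVSS output simulate a beacon, with the only relaxation being that the adversary may learn the output early, and that this leakage is harmless because the virtual-stake distribution for epoch $e$ is already fixed before the leakage occurs. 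Without this step the induction does not close: an adversary who could grind or bias the epoch nonce could amplify its effective leadership share beyond $\beta_v^{(e)}$, invalidating the invocation of Theorem~\ref{thm:single} even when the virtual-stake bound holds.
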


This part of the analysis proceeds similarly as in Section 5 of~\cite{kiayias2017ouroboros} and hence we only sketch it in Appendix~\ref{app:proof}.
Applying the improved analysis in~\cite{bkmqr2020} (instead of the one in~\cite{kiayias2017ouroboros}), we obtain the refined bound
$1 - RL(e^{-\Omega(\delta^3\kappa)}+e^{-\lambda})$ for the statement of Theorem~\ref{thm:re_main}.
In Appendix~\ref{app:tight}, we analyze a private-chain attack and show a corresponding lower bound.
It reveals that $\delta^2\kappa$ needs to be bounded below by a constant and, consequently,
the dependency on $\delta^3$ cannot be improved much further.

\subsection{Security with Ouroboros Praos/Genesis}

By {\sf maxvalid-bg}, an honest node prefers longer chains, if the new chain $\mathcal{C}'$ forks less than $k$ blocks relative to the currently held chain $\mathcal{C}$. If $\mathcal{C}'$ did fork more than $k$ blocks relative to $\mathcal{C}, \mathcal{C}'$ would still be preferred if it grows more quickly in the $s$ slots following the slot associated with the last common block of $\mathcal{C}$ and $\mathcal{C}'$. In this section, we analyze the security of the protocol with {\sf maxvalid-bg}, but drop assumption (P3) (i.e., allow bootstrapping from genesis).

\begin{proof}[Proof sketch] This part is identical to the analysis in Ouroboros Genesis (Section 4.3-4.4 of~\cite{badertscher2018ouroboros}), so we only sketch it here. The proof proceeds in two steps:
\begin{enumerate}
    \item[1)] For an honest node $h_1$ that is always online, we show that when replacing {\sf maxvalid-mc} with {\sf maxvalid-bg}, the overall execution of the protocol in $h_1$'s view remains the same except with negligible probability. That is to show that with overwhelming probability, whenever $h_1$ receives a new chain $\mathcal{C}'$ that forks more than $k$ blocks from $h_1$'s local chain $\mathcal{C}$, $h_1$ will always favor $\mathcal{C}$, i.e., $\mathcal{C}$ grows more quickly than $\mathcal{C}'$ in the first $s$ slots right after the fork.
    \item[2)] For a newly joined node $h_2$, we consider a `virtual' node $h_2'$ that holds no stake, but was participating in the protocol since the beginning and was honest all the time. Then we show $h_2$ will always adopt the same chains as $h_2'$ after joining the network.\qedhere
\end{enumerate}%
\end{proof}

\subsection{Comparison to Bitcoin}
\label{sec:fluctuation analysis}

In this paragraph, we observe that \minotaur (when executed in the pure PoW case, $\omega=1$) is more robust against fluctuations in PoW participation than \bitcoinnosp.
In particular, we demonstrate that \bitcoin may not enjoy liveness if the number of
parties is allowed to halve every two weeks---while \minotaur does.
The theoretical results of this section are accompanied by experiments;
see \S\ref{sec:ratedrop} and Figure~\ref{fig:variable_diff}. 
\revadd{We point out that although \bitcoin mining power has gone up exponentially since its inception, other PoW blockchains that use the same difficulty adjustment rule may experience such drops in mining power.\footnote{For example, ETC dropped from 19 TH/s to 1.5 TH/s in 7 months period \url{https://2miners.com/etc-network-hashrate}, but more impressively ERGO dropped from 23.65 to 0.5 TH/s in just 2 weeks' time \url{https://2miners.com/erg-network-hashrate}.}
 }

Consider \bitcoin for the case that, initially, there are $n$ honest parties and
that the target is $T$. Using the model and notation of \cite{garay2017bitcoin} with $q=1$,
$m$ is the number of blocks in an epoch and $p=1/2^\kappa$. Thus, each slot is
successful with probability $f=1-(1-pT)^n\approx pTn$.
We will show that for $s=\lfloor m/pTn\rfloor$, liveness may fail in
a $(2,s)$-respecting environment.
In such an environment, the adversary is allowed to halve the number of
parties every $s\approx m/f$ slots (note that $m/f$ is the expected duration
of an epoch, which for \bitcoin is two weeks).

The attack proceeds in stages of $s$ slots, during which the adversary does
not mine nor delays messages.
Thus, in the beginning of each slot all parties have a chain of the same
length.
The adversary halves the number of parties at the beginning of each stage.
Let $N_i$ denote the number of honest parties in stage $i$, with $N_0=n$ and
$N_i=n/2^i$ for $i>0$.
The expected number of successful slots in stage $i$ is
\[
	s\cdot[1-(1-pT)^{N_i}]
	<spTN_i
	=\frac{spTn}{2^i}
	\le\frac m{2^i}
.\]

Consider the stage $j$ for which 
\[
	k-1<\frac m{2^j}\le2(k-1)
.\]
The expected number of blocks computed in the first $j$ stages is 
\[
	\frac m2+\frac m{2^2}+\cdots+\frac m{2^{j}}
	=m-\frac m{2^{j}}
	<m-k+1
\]
and at most $k-1$ in stage $j+1$.
Recalling that the median of a binomial distribution with mean $\mu$ is at
most $\lceil\mu\rceil$, we obtain that with probability at least $1/4$ at most
$m$ blocks have been computed by the end of stage $j+1$ and at most $k-1$ of
them have been computed in the last stage.
Assuming the liveness parameter $u<s$, any transaction provided to all honest parties for the
first $u$ slots of the final stage, will be in depth less than $k$. 
It follows that liveness does not hold in the final stage of the attack.

\begin{proposition}
\bitcoinnosp's ledger does not satisfy liveness in a $(2,s)$-respecting environment, for any $u<s$ and $s<m/pTn$.
\end{proposition}

In contrast, \minotaur does not suffer from such fluctuation. Intuitively,
this is because it inherits its security from the underlying proof-of-stake
protocol. In particular, the epochs are of fixed duration, implying that target
recalculation occurs regularly. This is evident from Condition~C1, which
allows much greater values for $R$ and $\gamma$.



\section{Experiments}
\label{sec:exp}







We implemented a prototype instantiation of a \minotaur client in Rust and the code can be found at \cite{Minotaurcode}. \revreplace{We also implemented a \bitcoin client as a benchmark.}{We also implemented \bitcoinnosp, \fruitchains and Ouroboros clients as benchmarks. Particularly, for better comparison we implemented the variable difficulty version of \fruitchains recently proposed in~\cite{wang2021securing}, which uses the same difficulty adjustment as \bitcoinnosp.} In this section, we briefly describe the architecture of our implementation and present experimental results to evaluate the concrete performance of \minotaur under different scenarios.

\subsection{Architecture}
\label{sec:arch}

We implemented the Input-Endorsers/\fruitchains variant of \minotaurnosp, where transactions are exclusively included in the PoW blocks. In this way, \minotaur can potentially achieve optimal throughput up to the capacity of the network as shown by the implementation of Prism~\cite{yang2019prism}, a pure PoW consensus protocol with similar blockchain structure. The system architecture is illustrated in Figure~\ref{fig:system-architecture}. Functionally it can be divided into the following three modules:

\begin{enumerate}
    \item \textit{Block Structure Manager}, which maintains the client's view of the blockchain, and communicates with peers to exchange new blocks.
    \item \textit{Miner}, which assembles new PoW blocks.
    \item \textit{Staker}, which assembles new PoS blocks.
\end{enumerate}

\noindent 
The goal of a \minotaur client is to maintain up-to-date information of the blockchain. They are stored in the following three data structures:

\begin{enumerate}
    \item \textit{Block Structure Database}, stores the graph structure of the blockchain (i.e., a directed acyclic graph (DAG) formed with PoW and PoS blocks).
    \item \textit{Memory Pool}, stores the set of transactions that have not been mined into any PoW block.
    \item \textit{PoW Block Pool}, stores the set of PoW blocks that have not been referenced by any PoS block.
\end{enumerate}


At the core of the \textbf{Block Structure Manager} is an \textit{event loop} which sends and receives network messages to/from peers, and a \textit{worker thread pool} which handles those messages. When a new block arrives, the worker thread first checks its proof-of-work/proof-of-stake, and if valid, then proceeds to relay the block to peers that have not received it. Next, the worker thread checks whether all blocks referenced by the new block, e.g. its parent, are already present in the Block Structure Database. If not, it buffers the block in an in-memory queue and defers further processing until all the referenced blocks have been received. Finally, the worker validates the block (e.g., verifying transaction signatures), and inserts the block into the Block Structure Database. If the block is a PoW block, the Block Structure Manager checks the Memory Pool against the transactions in this new block and removes any duplicates or conflicts from the Memory Pool, and also puts the block into the PoW Block Pool.

The \textbf{Miner} module assembles new PoW blocks. It is implemented as a busy-spinning loop. At the start of each round, it polls the Block Structure Database and the Memory Pool to update the block it is mining. When a new PoW block is mined, it will be inserted into the Block Structure Database, then sent to peers by the Block Structure Manager. The Memory Pool and PoW Block Pool will also be updated accordingly.

The \textbf{Staker} module works similarly, but assembles new PoS blocks. At the start of each round, it polls the Block Structure Database and the PoW Block Pool to update the block it is assembling. When a new PoS block is generated, it will be inserted into the Block Structure Database, then sent to peers by the Block Structure Manager. The PoW Block Pool will also be updated accordingly.

\begin{figure}
    \centering
    \resizebox{0.7\columnwidth}{!}{\tikzstyle{ledger} = [draw, fill=blue!20, rectangle, 
    minimum height=4em, minimum width=6em, text centered, text width=5em]
\tikzstyle{blockchain} = [draw, fill=red!20, rectangle, minimum height=4em, minimum width=6em, text centered, text width=5em]
\tikzstyle{miner} = [draw, fill=green!20, rectangle, minimum height=4em, minimum width=6em, text centered, text width=5em]

\tikzstyle{database} = [draw, fill=yellow!20, rectangle, rounded corners, text width=5em, minimum height=3em, minimum width=6em, text centered]

\begin{tikzpicture}[auto, node distance=2.8cm,>=latex']
    \node [database] (blockchaindb) {Block Structure Database};
    \node [blockchain, above=1cm of blockchaindb] (worker) {Block Structure manager};
    \node [ledger, left of=blockchaindb] (staker) {Staker};
    \node [miner, right of=blockchaindb] (miner) {Miner};
    \node [database, right of=worker] (mempool) {Memory Pool};
    \node [database, left of=worker] (tranpool) {PoW Block Pool};
    \node [above=0.5cm of worker] (newblk) {New Blocks};
    \node [right of=newblk] (newtx) {New Transactions};
    \draw [<->] (blockchaindb) -- node[name=a] {} (worker);
    \draw [<->] (blockchaindb) -- node[name=b] {} (miner);
    \draw [<->] (blockchaindb) -- node[name=c] {} (staker);
    \draw [<-] (worker) -- node[name=d] {} (miner);
    \draw [<-] (worker) -- node[name=d] {} (staker);
    \draw [<->] (miner) -- node[name=e]{} (mempool);
    \draw [<->] (miner) -- node[name=e]{} (tranpool);
    \draw [->] (worker) -- node[name=f]{} (mempool);
    \draw [->] (worker) -- node[name=g]{} (tranpool);
    \draw [<->] (staker) -- node[name=i]{} (tranpool);
    \draw [<->] (newblk) -- node[name=j]{} (worker);
    \draw [->] (newtx) -- node[name=k]{} (mempool);

\end{tikzpicture}}
    \caption{Architecture of our \minotaur client implementation.}
    \vspace{-0.5cm}
    \label{fig:system-architecture}
\end{figure}
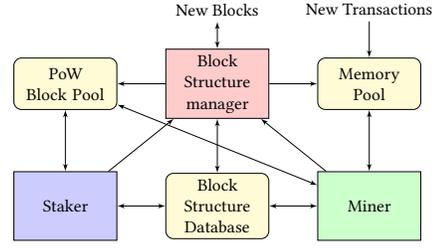

\subsection{Performance under variable mining power}\label{sec:ratedrop}


In \S\ref{sec:fluctuation analysis}, we have seen that \minotaur can survive more drastic variations of network hash power. We design the following experiment on our \minotaurnosp/\bitcoinnosp/\fruitchains codebase to verify this argument.

Recall that all three protocols have epoch-based target adjustment rules, varying the difficulty target of block mining based on the median inter-block time from the previous epoch. However, the epoch length in \minotaur is defined as a fixed number of slots instead of a fixed number of blocks (e.g., 2016 blocks in \bitcoinnosp). This definition makes sense in \minotaur because the main-chain blocks, which are PoS blocks, always have accurate timestamps (even if they are proposed by an adversary). In this experiment, we will see that this small change allows \minotaur to enjoy better liveness than \bitcoin and \fruitchains when the total mining power is decreasing.

In our experiment, we set the epoch length in \bitcoinnosp/\fruitchains to be 400 blocks and the expected duration of an epoch is 2 minutes; while the epoch length of \minotaur is 2 minutes and the expected number of blocks in a epoch is 400 blocks. Then \bitcoinnosp, \fruitchains and \minotaur are experiencing the same mining power variation, i.e., starting with an insufficient mining rate (100 blocks per minute) and then halving the mining rate every epoch (2 minutes in the experiments). From Figure~\ref{fig:variable_diff}, we can see that the mining target never has a chance to be adjusted in \bitcoinnosp/\fruitchains as the length of the main chain can never reach 400 blocks, which leads to a liveness failure; while the mining targets are adjusted every epoch (2 minutes) in \minotaur to make sure that the number of PoW blocks referenced by the main chain grows linearly. 

\begin{figure}
    \centering
    \includegraphics[width=0.4\textwidth]{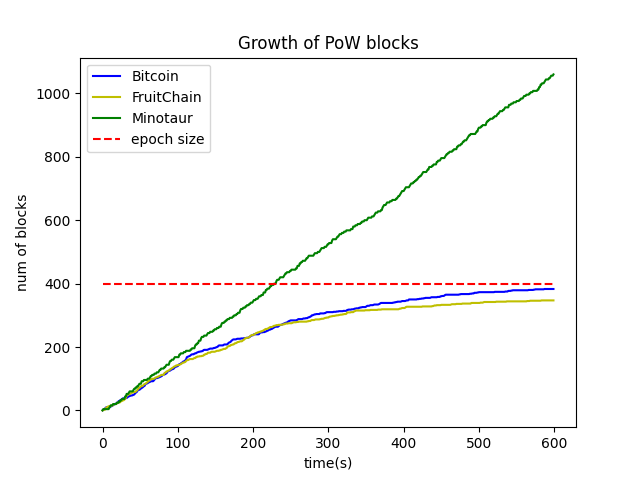}
    \vspace{-0.3cm}
    \caption{Behavior of \bitcoinnosp, \fruitchains and \minotaur under decreasing mining power. When the total mining power is halved every two minutes, the main chains of \bitcoin and \fruitchains stop growing as they can never reach the epoch size (400 blocks), while the number of PoW blocks referenced by the main chain grows linearly in \minotaurnosp.}
    \label{fig:variable_diff}
    \vspace{-0.4cm}
\end{figure}

\subsection{Performance under attacks}
In the following experiments, we evaluate how \minotaur performs in the presence of active attacks. Specifically, we consider three types of attacks: private chain, spamming and selfish mining attacks. \revadd{At a high level, a private chain attack aims to violate the persistence of the protocol.} Spamming attack aims to reduce network throughput, while selfish mining aims to compromise the PoW fairness of the protocol (i.e., reducing the work stake of honest miners).

\revadd[Request~\ref{req:revamp}\label{ans:privatechain}]{\noindent {\bf Private chain attack.} In a private chain attack, an attacker tries to privately generate an alternate chain faster than the public honest chain to displace a confirmed block~\cite{nakamoto2008bitcoin} (e.g., a block buried $k$-deep in the longest chain). In a pure PoW/PoS protocol (e.g., \bitcoinnosp/Ouroboros), an attacker with majority of mining/stake power can always succeed with the private chain attack, no matter how large the confirmation depth is. In contrast, \minotaur can survive such an attack, as long as the honest players control a majority of the combined resources.

To verify this key security property, we implemented the private chain attack on our code base, where an attacker can have various combinations of stake and mining power. The experiments of the private chain attack are as follows: the attacker (with $\beta_s$ fraction of stake and $\beta_w$ fraction of mining power) tries to generate a private chain starting at a specific block (e.g., the first block in some epoch) to overwrite the honest public chain when the private chain becomes longer. We run the attack for one epoch and count the length of the longest private chain (pruning honest blocks in the prefix if there is any) whose length is no less than that of the honest chain generated at the same time. Note that the maximum length of the private chain in a successful attack reflects the minimum confirmation depth that is needed to guarantee security of the protocol. We chose $\omega=0.5$ with a total PoS block rate being 1.58 block/s and a total PoW block rate being 2.70 block/s for \minotaurnosp. 
Parameters $\beta_s$ and $\beta_w$ are both iterated in $[0,1]$ with a step size of $0.1$ for \minotaur experiments, and we present the results as a 2D heatmap in Figure~\ref{fig:private-heatmap}. We also launched the same attack on Ouroboros Praos (\bitcoin resp.), where the attacker controls $\beta_s$ fraction of stake ($\beta_w$ fraction of mining power resp.) and the results are shown as 1D heatmaps in Figure~\ref{fig:private-heatmap}. 
In these experiments, block generation is set to the same rate as PoS block rate in \minotaur experiments. For each data point, we repeated the experiments 10 times and took an average to minimize randomness effects. From Figure~\ref{fig:private-heatmap} we observe that as long as $\beta_s+\beta_w<1$, the attacker can only succeed with very short private chain even when either $\beta_s$ or $\beta_w$ is close to 1, while Ouroboros Praos and \bitcoin would need $\beta_s < 0.5$ and $\beta_w < 0.5$ respectively. This provides empirical evidence that \minotaur is better than pure PoW or PoS protocols in term of security.

\begin{figure}
    \centering
    \includegraphics[width=0.8\columnwidth]{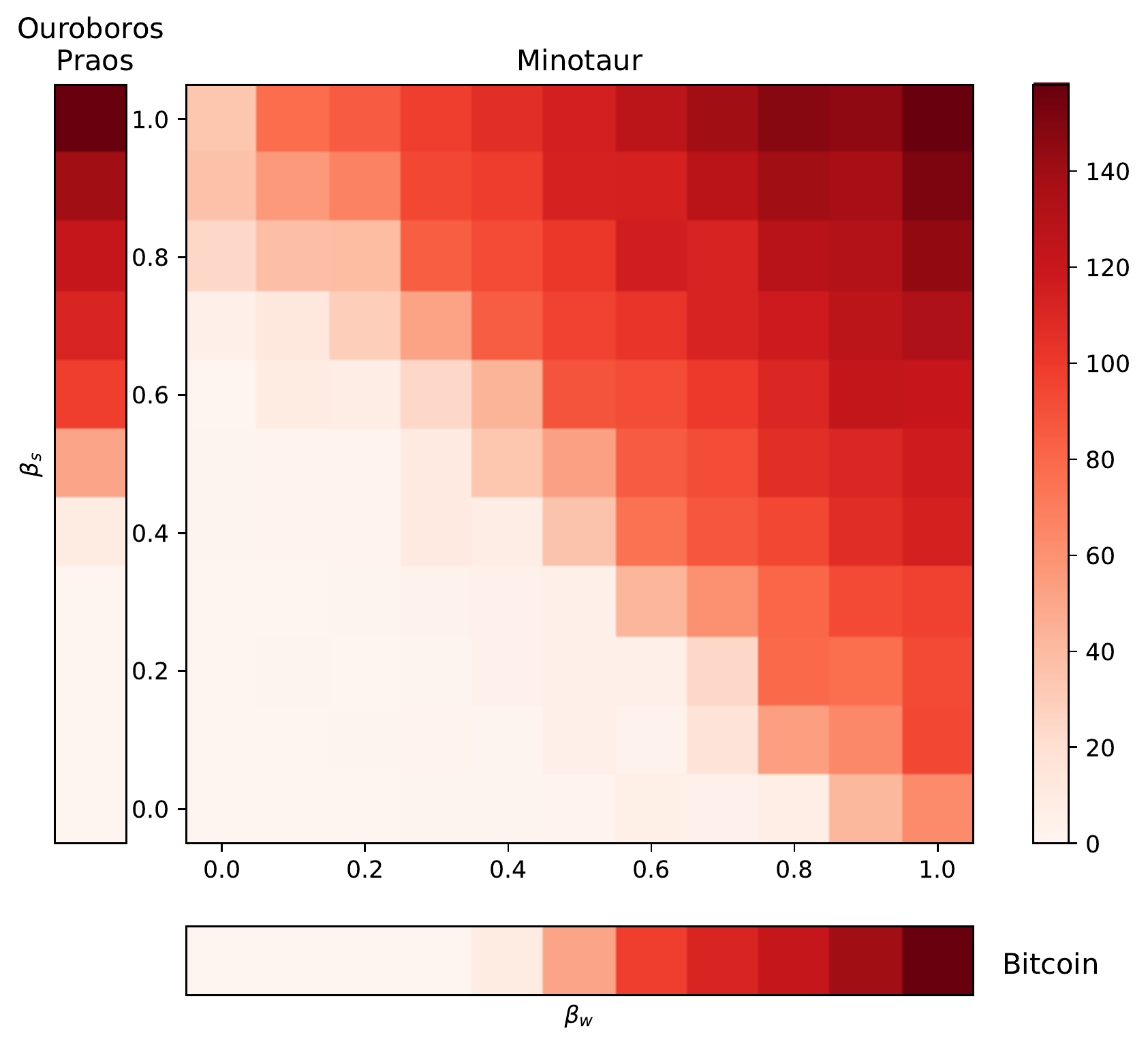}
    \vspace{-0.3cm}
    \caption{Longest private chain in private attacks with various adversarial stake fraction $\beta_s$ and adversarial mining power fraction $\beta_w$ on \minotaurnosp, Ouroboros Praos, and \bitcoinnosp. When $\beta_s + \beta_w < 1$ and either $\beta_s$ or $\beta_w$ is close to 1, the private chain in \minotaur is way shorter, meaning that it remains secure whereas Ouroboros Praos and \bitcoin are insecure.}
    \label{fig:private-heatmap}
    \vspace{-0.4cm}
\end{figure}

}

\noindent {\bf Spamming attack.} In a spamming attack, attackers flood the network with conflicting transactions. Among a bunch of conflicting transactions, at most one of them could be executed successfully. For example, in an account model, transactions with identical sender and nonce are conflicting transactions. Compared to \bitcoinnosp, the Input-Endorsers/\fruitchains variant of \minotaur cannot ensure that no conflicting transactions enter the transaction ledger since PoW are mined \textit{before} picked by main chain. 
Therefore, goodput (i.e., throughput of successfully executed transactions) of \minotaur transaction ledger will be comprised if conflicting transactions enter PoW blocks. To mitigate this attack, we propose a spam filter that operates as follows: a miner validates transactions with respect to the transaction ledger, PoW blocks that haven't been picked by main chain, and preceding pending transactions of this miner. In this way, if one of the conflicting transactions enters a PoW block and propagated throughout the network before other conflicting ones enter a PoW block, miners will filter out other conflicting transactions and increase goodput.

We implemented the spam filter and did experiments with four nodes in a line topology. Our metrics is \textit{normalized spam}, which is the count of unsuccessfully executed transactions normalized by the count of transactions in ledger. In our experiments, we make transaction generators create conflicting transactions on purpose, and send them to two nodes sitting at the ends of the line topology. And other nodes receive non-conflicting transactions. Hence, without any spam precaution, normalized spam is $0.25$. We set the block generation rate to be $0.44$ block/s for PoS blocks and $1.39$ block/s for PoW blocks. The results of our experiments with additional peer-to-peer latency is shown in Figure~\ref{fig:spam1}, and we observe that although normalized spam increases as additional latency increases, it is still lower than that without spam filter. The highest spam in our experiments is $0.063$, far lower than $0.25$. Noticed that in that experiment we apply an additional peer-to-peer latency of $0.3$s, which is larger than most peer-to-peer latency in \bitcoin and Ethereum~\cite{kim2018measuring}. This means that our spam filter can reduce spam to a minimal effect.

\begin{figure}
    \centering
    \includegraphics[width=0.8\columnwidth]{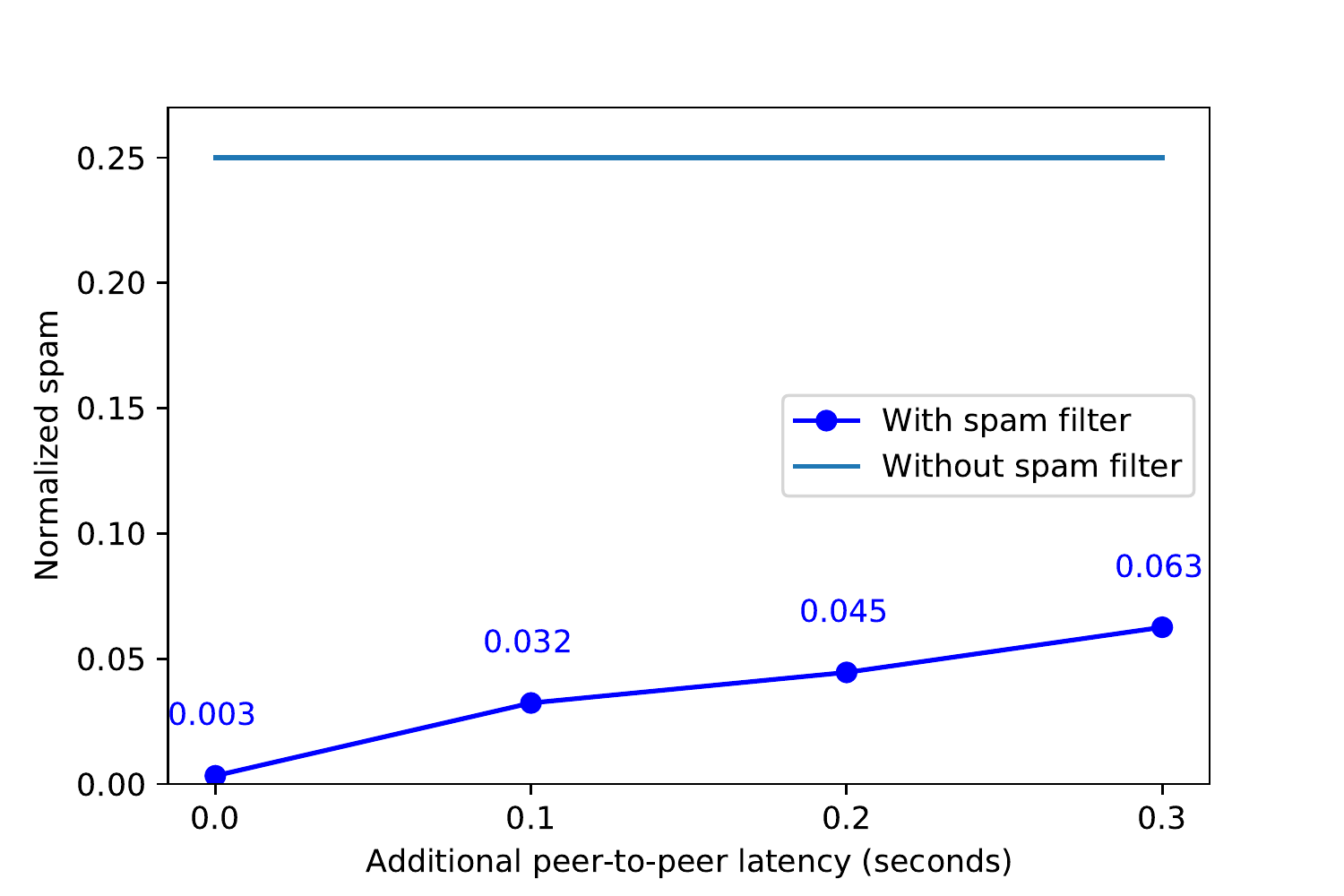}
    \vspace{-0.3cm}
    \caption{Normalized spam in 4-node experiments with additional peer-to-peer latency. Spam is reduced mostly under various additional latencies.}
    \label{fig:spam1}
    \vspace{-0.4cm}
\end{figure}

\ignore{ 
\mf{I am skeptical about the general effectiveness of this spam filter under arbitrary adversarial attacks.}

\xw{We cannot simulate all attacks. But I think there is at least no harm to add this spam filter.}

\mf{Yes, it does not hurt.}

\mf{(Comment now at the right place :-). This comparison is unfair against Minotaur, as we can run Minotaur by including the transacti<ons in the PoS blocks---just as in Bitcoin. Actually, I already redacted all previous statements in the paper that claimed that transactions go into PoW blocks for a different reason: we would also want PoS transaction (/input) blocks to be produced in order to achieve fairness and good throughput.}

\xw{At the beginning of section~\ref{sec:arch}, I mentioned that we implemented the Endorser/Fruitchains variant of \minotaur to enjoy optimal throughput.}
}

\noindent {\bf Selfish mining attack.} 
It has been known that \bitcoin is vulnerable to the selfish mining attack~\cite{nayak2016stubborn,sapirshtein2016optimal,eyal2014majority}, where a selfish miner withhold its mined blocks and release them later at an appropriate time to take the place of honest blocks in the longest chain. This attack hurts the fairness of the protocol, in the sense that the selfish miner with $\beta_w$ fraction of mining power can have more than $\beta_w$ faction of blocks in the main chain so that it will reap higher revenue. The \fruitchains protocol~\cite{pass2017analysis} was proposed as a solution to selfish mining. We observe that the selfish mining attack will not work on \minotaur Since it has similar blockchain structure as \fruitchainsnosp. \revadd[Request~\ref{req:fccomparison}\label{ans:fccomparison}]{Moreover, \minotaur will achieve fairness even when $\beta_w \geq 0.5$ (as long as there is an honest super-majority in actual stake), while \fruitchains would fail.} To verify these observations, we implemented the following selfish mining attacks on our \bitcoinnosp/\fruitchainsnosp/\minotaur code base:
\begin{itemize}[leftmargin=6mm]
    \item {\bf On \bitcoinnosp:} The selfish miner always mines on the block at the tip of the longest chain, whether the chain is private or public. Upon successful mining, the adversary maintains the block in private to release it at an appropriate time. In particular, when an honest miner publishes a block, the selfish miner will release a previously mined block at the same level (if it has one).
    \item \revadd[Request~\ref{req:fccomparison}\label{ans:fccomparison:2}]{{\bf On \fruitchainsnosp:} The selfish miner plays the same withholding-and-releasing strategy on its main-chain blocks as above, while its main-chain blocks contain fruits mined by itself exclusively. }
    \item {\bf On \minotaurnosp:} The selfish miner plays the same withholding-and-releasing strategy on its PoS blocks as above, while its PoS blocks contain PoW blocks mined by itself exclusively. 
\end{itemize}

We assume honest nodes will choose the attacker's block with probability $p$ whenever there is a tie. In our experiments, we set $p > 0.5$ to simulate the case that the attacker has better network connection than honest nodes so its blocks are usually transmitted faster.
Table~\ref{tab:selfish_mining} presents fractions of PoW blocks (or fruits) in the main chain of \bitcoinnosp/\fruitchainsnosp/\minotaur under various adversarial power (combinations of $\beta_w$ and $p$). On \minotaurnosp, the attacker controls $1/3$ of the virtual stake, i.e., $\omega \beta_w + (1-\omega) \beta_s = 1/3$, where $\beta_s$ is the fraction of the actual stake controlled by the attacker. In Table~\ref{tab:selfish_mining}, we can see that \bitcoin is indeed vulnerable to selfish mining, meaning that the attacker has more fraction of PoW blocks in the main chain (and thus more block rewards in \bitcoinnosp) than its fraction of mining power, particularly when $\beta_w \geq 0.5$; \revadd[Request~\ref{req:fccomparison}\label{ans:fccomparison:3}]{and \fruitchains only resists the attack when $\beta_w < 0.5$.} Meanwhile, in \minotaurnosp, the attacker can only have $\beta_w$ faction of PoW blocks in the main chain, no matter how much mining power it has (even when $\beta_w \geq 0.5$), and how the network favors it. This property is crucial to the security of \minotaur as we have seen in \S\ref{sec:fruits}.

\ignore{ 
\mf{I would consider analytical evidence for fairness by Fruitchains sufficient---not needing additional backing by experiments. Also, it seems that $\beta_s\neq 0$, which seems to make both settings incomparable.}

\xw{In our analysis, we implicitly assume that PoS blocks have unlimited size so that one honest PoS block can refer all unpicked PoW blocks. This is not true in practice. So the attacks are more like stress tests on our implementation. Also from my past experience, the CCS reviewers like to ask about selfish mining even if we already have full security analysis.

I set $\beta_s \neq 0$ to show the sharp comparison when $\beta_w < 0.5$. Indeed this seems unfair to Bitcoin. But this is the main advantage of Minotaur over Bitcoin.}
}

\begin{table}[]
    \centering
    \begin{tabular}{c|c|c|c|c|c|c}
    \hline
        \multicolumn{2}{c|}{Attacker's power} & \multicolumn{5}{c}{Fraction of PoW blocks} \\
    \hline
    \hline
        \multicolumn{2}{c|}{$\beta_w$} & 0.75 & 0.67 & 0.50 & 0.33 & 0.25  \\
    \hline
        \multirow{3}{4em}{$p=1$} & \bitcoin & 0 & 0 & 0.007 & 0.513 & 0.669  \\
    \cline{2-7}
    &  \fruitchains &   0	& 0.007 & 0.011 &	0.661 & 	0.739  \\
    \cline{2-7}
       & \minotaur & 0.248 &	0.332 & 0.498	& 0.665 &	0.746    \\
    \hline
        \multirow{3}{4em}{$p=0.7$} &  \bitcoin &  0 & 0.001 & 0.098 & 0.618 & 0.72 \\
    \cline{2-7}
    &  \fruitchains &  0	& 0.003 &	 0.056 &	0.665 & 	0.753  \\
    \cline{2-7}
       & \minotaur &  0.248 &	0.333	& 0.499 &	0.666	& 0.750   \\
    \hline
    \end{tabular}
    \caption{Fractions of honest PoW blocks (or fruits) in the main chain under selfish mining attacks in \bitcoinnosp/\fruitchainsnosp/\minotaurnosp. In these experiments, the attacker controls $\beta_w$ fraction of mining power and the tie breaking rule favors the attacker's block with probability $p$.}
    \vspace{-0.6cm}
    \label{tab:selfish_mining}
\end{table}

\section{Discussion}
\label{sec:discuss}

\noindent {\bf Variable weighing parameter.} In our security analysis, we assume a fixed weighing parameter $\omega$ through the whole execution of the protocol. However, we point out that $\omega$ can change over time as long as we put proper assumptions on the adversarial stake/mining power. Suppose $\omega(e)$ is the weighing parameter of epoch $e$, i.e., the ratio of work stake and actual stake in epoch $e$ is set to be $\omega(e)/(1-\omega(e))$. Note that the function $\omega(e)$ may be decided by the protocol designers and hard-coded in the genesis block, but the players can also reach an agreement (off-chain) to update it by doing a soft fork.

We give a brief security sketch, deferring the full analysis to future work.
Recall that, in our protocol, the virtual stake of a player in epoch $e$ composes of two parts: the actual stake drawn from the last PoS block in epoch $e-2$; and the work stake that is proportional to the amount of work it produced during epoch $e-2$. Thus, to guarantee the security in epoch $e$, we need the adversary to be $(1/2-2\sigma,2R,\omega(e))$-bounded (see Def.~\ref{def:adv}) in epoch $e-2$ and epoch $e-1$. And similarly, the adversary needs to be $(1/2-2\sigma,2R,\omega(e+1))$-bounded in epoch $e-1$ and epoch $e$. Therefore, the adversarial stake/mining power must be restricted by both above bounds for epoch $e-1$. As long as $\omega(e+1)$ does not differ too much from $\omega(e)$, this restriction on the adversary is reasonable to assume, and the weighing parameter can transit smoothly from $\omega(e)$ to $\omega(e+1)$. Figure~\ref{fig:transit} gives an example of how the assumptions shift when $\omega(e)$ is updated at the onset of epoch $e_0+1$.

\begin{figure}
    \centering
    \includegraphics[width=0.32\textwidth]{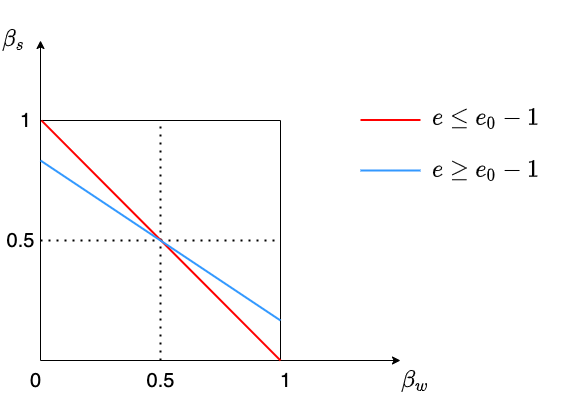}
    \caption{An example of variable weighing parameter $\omega$ in \minotaurnosp. Suppose the value of $\omega$ changes from $1/2$ to $2/5$ at the onset of epoch $e_0+1$. Then the adversarial stake/mining power needs to be restricted below the red line in epochs $e\leq e_0-1$ and below the blue line in epochs $e\geq e_0-1$. In particular, the adversary is restricted by both lines in epoch $e_0-1$. When $\omega$ changes smoothly over time, it is reasonable to assume exactly this.}
    \vspace{-0.5cm}
    \label{fig:transit}
\end{figure}

We point out that such flexible weighing between work and stake is very useful in practice. For example, in general, PoS blockchains are easy to launch with existing techniques, such as proof- of burn~\cite{karantias2020proof}, initial coin offering~\cite{li2018initial} and airdrop~\cite{airdrop}. Therefore, \minotaur can be launched as a pure PoS blockchain and later transit into a hybrid PoW/PoS one or a pure PoW blockchain.
In addition, the security of \minotaur can be enhanced by assigning higher weight to the more decentralized resource. 

\noindent {\bf Removing the initial constraint.}
As discussed in \S\ref{sec:prelim}, we  use an honest majority of stake without relying on PoW for the two initial epochs. 
We note that we can relax this initial constraint 
harmonizing even the initial two epochs with the rest of the execution by employing the smooth-interpolation technique  (\S\ref{sec:base}, Idea~2) where work blocks are mined based on $2$-for-$1$ PoW~\cite{garay2020full} producing both, main-chain blocks (at a rate defined by $\omega$) and endorser blocks (at a sufficient rate to guarantee fairness).
This allows to start the protocol with any initial weighing by means of $\omega$.
\revreplace[Request~\ref{req:initialstake}\label{ans:initialstake}]{We defer a formal analysis to future work.}{We do not pursue this further as, (I) for most practical purposes we expect the present construction  to be sufficient as is (namely it will be reasonable  to  launch \minotaur with PoW ``turned off'', at $\omega=0$ and then begin to incrementally adjust $\omega$ to higher values at some time after two epochs), and, (II) as explained in \S\ref{sec:base}, the analysis of the variation described above that allows $\omega>0$ from launch will follow in a straightforward manner from the forkable string analysis and input-endorser technique of \cite{kiayias2017ouroboros,badertscher2018ouroboros} as well as the fairness argument of \cite{pass2017fruitchains}. 
This is because the main difficulty in the analysis of PoW/PoS hybrid protocols is dealing with variable difficulty --- and adopting the $\omega=0$ convention for the initial two epochs allows us to focus on that. 
}
%


\noindent {\bf Generalization to multiple resources.}
Following the idea in~\cite{TNDF+18}, \minotaur can be extended to more than two resources. In contrast to their construction, this is achievable without fundamentally changing the structure of the protocol.

Assume $M\geq 1$ different resources, and for each such resource, an independent lottery mechanism among the contributors to assign `successes' proportionally to a party's ratio of the total contributed resource.
%
Defining a fixed amount of virtual stake $V$ and weights $\omega_i>0$,
$\sum_{i=1}^{M}\omega_i=1$, we have resource $i$ control $\omega_i\cdot V$ of
the virtual stake.
%
During each epoch $e$, the different lotteries are run concurrently, wherein each success allows for the release of a respective block (to be eventually picked up by a block of the main chain), tied to the respective resource.

Main-chain block leadership for epoch
$e+2$ is assigned to resource $i$ based on relative virtual stake $\omega_i$, and further split among the contributors of that resource based on their production
of respective `resource blocks' during epoch $e$.
%
%
Naturally, this protocol tolerates
\begin{equation}\label{eq:multi-resource}
 \sum_{i=1}^M \omega_i \beta_i < \frac{1}{2}\,,
\end{equation}
where $\beta_i\in[0,1]$ is the fraction of resource $i$ held by the adversary.
A pictorial example for $M=3$ is given in Figure~\ref{fig:multi-resource}.

Although the underlying blockchain protocol is PoS-based, stake (in the classical, non-virtual sense) is not required to be one of the $M$ resources, i.e., the protocol can be run solely on virtual stake. Also, full adversarial control of some of the resources can be tolerated as long as Eq.~(\ref{eq:multi-resource}) is satisfied.

Finally, note that the resources can also be of the same type, e.g., \minotaur can combine work emanating from different hash functions such as SHA$256$, scrypt, and ethash---answering an open question raised in \S\ref{sec:intro}.


\begin{figure}
\centering
\begin{tikzpicture}[scale=0.16]
 \def\sc{0.8} 
 \filldraw[black] (0,0) circle(2pt);
 \filldraw[black] (12,0) circle(2pt);
  \filldraw (12,-0.5) node[anchor=north] {$1$};
  \filldraw (12+1.5,-1) node[anchor=west] {$\beta_1\ [\omega_1=\frac{1}{4}]$};
 \filldraw[black] (0,12) circle(2pt);
  \filldraw (-0.5,12+0.5) node[anchor=east] {$1$};
  \filldraw (0,12+2.5) node[anchor=west] {$\beta_2\ [\omega_2=\frac{1}{3}]$};
 \filldraw[black] (-\sc*9,-\sc*6) circle(2pt);
  \filldraw (-\sc*9-0.5,-\sc*6+0.5) node[anchor=east] {$1$};
  \filldraw (-\sc*9-3.5,-\sc*6-3) node[anchor=west] {$\beta_3\ [\omega_3=\frac{5}{12}]$};
 \draw[->,gray,thin] (0,0) -- (0,12+1);
 \draw[->,gray,thin] (0,0) -- (12+1,0);
 \draw[->,gray,thin] (0,0) -- (-9*\sc-0.75,-6*\sc-0.5);

 \coordinate (t11) at (12,0);
 \coordinate (t12) at (12,9);
 \coordinate (t13) at (12-\sc*3/5*9,-\sc*3/5*6);

 \coordinate (t22) at (0,12);
 \coordinate (t21) at (8,12);
 \coordinate (t23) at (-\sc*2/5*9,12-\sc*2/5*6);

 \coordinate (t33) at (-\sc*9,-\sc*6);
 \coordinate (t31) at ({\sc*(-9+1/3*12)},-\sc*6);
 \coordinate (t32) at (-\sc*9,{\sc*(-6+1/4*12)});

 \filldraw[draw=black,fill=gray!5] (t11) -- (t12) -- (t13) -- cycle;
 \filldraw[draw=black,fill=gray!5] (t21) -- (t22) -- (t23) -- cycle;
 \filldraw[draw=black,fill=gray!5] (t31) -- (t32) -- (t33) -- cycle;

 \filldraw[draw=black,fill=gray!5] (t12) -- (t21) -- (t23) -- (t32) -- (t31) -- (t13) -- cycle;

 \draw[gray,thin,dashed] (0,0) -- (0,12);
 \draw[gray,thin,dashed] (0,0) -- (12,0);
 \draw[gray,thin,dashed] (0,0) -- (-9*\sc,-6*\sc);

 \draw[gray,thin,dotted] (0,12) -- (12,12); 
 \draw[gray,thin,dotted] (12,12) -- (12,0);
 \draw[gray,thin,dotted] (12,0) -- (12-\sc*9,0-\sc*6); 
 \draw[gray,thin,dotted] (12-\sc*9,0-\sc*6) -- (-\sc*9,-\sc*6);
 \draw[gray,thin,dotted] (-\sc*9,-\sc*6) -- (-\sc*9,-\sc*6+12); 
 \draw[gray,thin,dotted] (-\sc*9,-\sc*6+12) -- (0,12);
 \draw[gray,thin,dotted] (-\sc*9,-\sc*6+12) -- (-\sc*9+12,-\sc*6+12);
 \draw[gray,thin,dotted] (-{\sc*9+12},-\sc*6) -- (-{\sc*9+12},-\sc*6+12);
 \draw[gray,thin,dotted] (12,12) -- (-{\sc*9+12},-\sc*6+12);

 \ignore{
 \draw[blue,thin,dotted] (6-\sc*9/2,-\sc*6/2) -- (6-\sc*9/2,-\sc*6/2+6); 
\draw[blue,thin,dotted] (24,0) -- (0,18); %
\draw[blue,thin,dotted] (0,18) -- (-\sc*9*6/5,-\sc*6*6/5); %
\draw[blue,thin,dotted] (-\sc*9*6/5,-\sc*6*6/5) -- (24,0); %
\draw[blue,thin,dotted] (12,0) -- (24,0); %
\draw[blue,thin,dotted] (-\sc*9,-\sc*6) -- (-\sc*9*6/5,-\sc*6*6/5); %
\draw[blue,thin,dotted] (0,12) -- (0,18); %
}

\end{tikzpicture}
\vspace{-0.2cm}
\caption{Tolerable resource corruption for %
$\mathbf{\omega}=(\frac{1}{4},\frac{1}{3},\frac{5}{12})$. The volume enclosed by the gray surfaces is tolerated by the protocol: $\sum_{i=1}^3 \omega_i \beta_i <\frac{1}{2}$.} 
\vspace{-0.45cm}
\label{fig:multi-resource}
\end{figure}
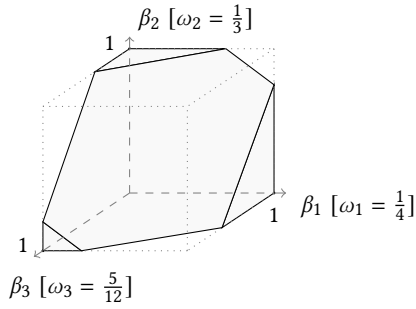

\noindent {\bf Fairness with respect to the combined resources.}
Fairness with respect to PoW blocks (Definition~\ref{def:fairness}) is a crucial property of \minotaur to guarantee fair assignment of work stake to the miners. However, this is not sufficient for aspects like fair reward sharing~\cite{pass2017fruitchains}, since, for this purpose, fairness needs to be achieved for overall block production with respect to the \emph{combined} resources.
We note that fairness with respect to the combined resources can be achieved by scheduling endorser blocks for every involved resource, \emph{including stake}. A given epoch reward can now be shared by assigning an $\omega_i$ fraction of the reward to the contributors of the $i$-th resource, and distributing each such fraction proportionally to the parties' contributions of endorser blocks for that resource.


\revadd[Request~\ref{req:comparison}\label{ans:comparison}]{
\noindent {\bf Comparison to state-of-the-art protocols.  }
In comparison to previous hybrid PoW-PoS protocols,~\cite{FDKT+20} already generalized the conditions under which a permissionless blockchain can be securely operated by combining PoW and PoS. Our paper improves over~\cite{FDKT+20} by demonstrating how to tolerate a strictly better bound, i.e., \emph{any} adversarial minority of the combined resources, and furthermore proving this bound tight, thus settling the hybrid PoW-PoS question. 

We also note that \minotaurnosp's performance is comparable to previous state-of-the-art protocols:

$\bullet$ \emph{State-of-the-art PoS}. As \minotaur resembles a PoS protocol (while operating on virtual stake derived from PoW), the only overhead it introduces over its underlying PoS protocol is the extra (insignificant) effort required for virtual-stake conversion---which impacts system throughput or transaction latency only marginally.
The performance of our explicit \minotaur construction, based on Ouroboros, is thus directly inherited from the Ouroboros protocol.

$\bullet$ \emph{State-of-the-art PoW}. Note that Nakamoto-style PoS mimics the workings of PoW Nakamoto consensus and thus yields similar stochastics as in the PoW case, and thus also comparable performance characteristics.
}


\ignore{ 
\noindent {\bf Bootstrapping.}
\mf{Do we want to keep this one? I don't think that this is very interesting. Xuechao seems to agree (see below).}
Now we discuss how to bootstrap \minotaur in a decentralized manner. \xw{How the bootstrapping would be different from a pure PoS blockchain? } In \minotaurnosp, the genesis block stores all the information that is needed to start the execution, e.g., initial stake distribution, mining target, block formation, etc. Therefore, the problem becomes how to securely generate one unique genesis block that is agreed by all the users. Many existing PoS blockchain is launched with an initial coin offering (ICO) performed by a third-party authority (usually a blockchain-based startup), and then the genesis block is simply hardcoded into the client. However, this somehow disobeys the decentralization principle of a blockchain system. Actually, the role of the third-party authority can be replaced by a well-established blockchain platform like \bitcoin or Ethereum. We explore a couple of schemes in detail: 1) {\it Proof-of-burn:} users send Bitcoins to a verifiably unspendable address by a time limit, and those Bitcoins are called burnt. Then the \minotaur client reads the \bitcoin blockchain to generate the genesis block and assigns initial stake to users proportionally to the amount of bitcoin they have burnt. 2) {\it Airdrop:} the users are asked to send Bitcoins to addresses satisfying a certain pattern (e.g., having a certain number of zeros as a prefix/suffix) and these ``special'' transaction outputs need to remain unspent for a while. The genesis block is generated deterministically from one \bitcoin block at a certain height that will appear in the near future, and particularly the initial stake distribution in \minotaur is decided by all the special unspent transaction output in that \bitcoin block. \xw{what if the users disagree on the height? Ask the users to commit the right client.}  Note that usually the airdrop would require promotional activities by the users, such as tweeting a snapshot of the required \bitcoin transactions. But to generate the genesis block, it is enough to just query the \bitcoin blockchain rather than a social media forum.

\mf{Some further thoughts on this:

Proof-of-burn seems more involved than "previously," e.g.~in the KKZ20 paper, as the target system is not yet running. In particular, in KKZ20, the target system is capable of allocating "stake" based on proof-of-burn while, in our situation, the proof-of-burn needs to be processed into the genesis block of \minotaur on the "base ledger (e.g., \bitcoinnosp) without the help of \minotaur itself (or another operating ledger). This seems to imply that burning needs to be performed openly, and thus the censorship problem gets reintroduced. This problem could be mitigated by generating the \minotaur genesis block on a base ledger that allows for non-native tokens, and burning non-native tokens (e.g., wrapped \bitcoinnosp, which may be less prone to censorship).

Also, an issue with \bitcoin as the "base ledger" is that, in absence of smart contracts, only full nodes can compute the genesis block, while a smart-contact based ledger could support light clients.

Finally, the requirement about genesis-block bootstrapping is still unclear: does it have to appear without any certification by an existing party, or is it tolerable to have the "project" signed (prior to permissionless stake allocation) by an institution like IOHK?

Summary:
\begin{itemize}
 \item Censorship problem
 \item Light-client support
 \item Initial certification
\end{itemize}
}

} 

\section{Acknowledgements}
We would like to thank Thomas Kerber and Alexander Russell for insightful discussions.

This research is supported in part by the US National Science Foundation under grants CCF-1705007 and CNS-1718270 and the US Army Research Office under grant W911NF1810332.

\bibliographystyle{acm}
\bibliography{ref}

\appendices
\section*{Appendix}

\section{Security regions of hybrid protocols in Figure~1b}
\label{app:defend}

We give details of the security regions plotted in Figure~\ref{fig:pow_pos}. Recall that $\beta_w$ is the proportion of adversarial hash power and $\beta_s$ is the proportion of the adversarial stake (or the proportion of Byzantine nodes if a permissioned BFT protocol is adopted).

\noindent{\bf 2-hop blockchain.} We derived the security region from the assumptions in Theorem 2\&3 of~\cite{FDKT+20}. When $\Delta = 0$ (which gives the largest security region), the security proof of~\cite{FDKT+20} assumes $(1-\beta_w)(1-\beta_s) > \beta_s$, i.e., $\beta_s < \frac{1-\beta_w}{2-\beta_w}$ implying $\beta_s < 1/2$.

\noindent{\bf Checkpointed ledger.} \cite{karakostas2021securing} uses a synchronous BFT protocol (with $1/2$ fault tolerance) to regularly issuing {\em checkpoints} on a PoW longest chain. The protocol is proven to be safe and live when $\beta_s<1/2$.

\noindent{\bf Finality gadgets.} \cite{neu2021ebb,sankagiri2020blockchain,buterin2017casper,stewart2020grandpa} use an asynchronous/partial synchronous BFT protocol (with $1/3$ fault tolerance) to build a finality gadget/layer on the top of a PoW longest chain to achieve important properties such as finality (a.k.a deterministic safety under asynchrony) and accountability. These protocols are proven to be safe and live when $\beta_s < 1/3$ and $\beta_w < 1/2$.

\section{Mathematical facts}
\label{app:math}

\begin{theorem}[from \cite{garay2020full}]
\label{thm:martingale}
Let $(X_1,X_2,\ldots)$ be a martingale with respective the sequence $(Y_1,Y_2,\ldots)$, if an event $G$ implies $X_k - X_{k-1} \leq b$ and $V = \sum_{k}var[X_k - X_{k-1}| Y_1, \ldots, Y_{k-1}] \leq v$, then for non-negative $n$ and $t$
\begin{align*}
    P(X_n -X_0 \geq  t, G ) \leq e^{-\frac{t^2}{2v + \frac{2bt}{3}}}.
\end{align*}
\end{theorem}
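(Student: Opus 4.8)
The plan is to prove this by the exponential moment method (a Bernstein/Freedman-type argument), reducing everything to an expectation bound on a single carefully chosen supermartingale. Write $\mathcal F_k=\sigma(Y_1,\dots,Y_k)$, let $d_k=X_k-X_{k-1}$ (so that $\E[d_k\mid\mathcal F_{k-1}]=0$), and let $\sigma_k^2=\operatorname{var}[d_k\mid\mathcal F_{k-1}]$, which is $\mathcal F_{k-1}$-measurable. For a parameter $\lambda>0$ put $g(\lambda)=(e^{\lambda b}-1-\lambda b)/b^2\ge 0$. The only delicate point---and the step I expect to need the most care---is that the hypotheses $d_k\le b$ and $\sum_k\sigma_k^2\le v$ hold only on the trajectory-level event $G$, which is not adapted, so $\exp\bigl(\lambda(X_k-X_0)-g(\lambda)\sum_{j\le k}\sigma_j^2\bigr)$ is not literally a supermartingale. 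I would get around this by truncating: set $d_k'=\min(d_k,b)$ and $M_k=\exp\bigl(\lambda\sum_{j\le k}d_j'-g(\lambda)\sum_{j\le k}\sigma_j^2\bigr)$, with $M_0=1$, so that $M_k=M_n$ on $G$ while $(M_k)$ becomes a genuine supermartingale everywhere.

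Next I would verify the supermartingale property unconditionally. Using $\E[d_k'\mid\mathcal F_{k-1}]\le\E[d_k\mid\mathcal F_{k-1}]=0$ together with $\lambda>0$ gives $\E[e^{\lambda d_k'}\mid\mathcal F_{k-1}]\le 1+\E[e^{\lambda d_k'}-1-\lambda d_k'\mid\mathcal F_{k-1}]$; then, since $d_k'\le b$ and $x\mapsto(e^x-1-x)/x^2$ is nondecreasing on $\mathbb R$, we have $e^{\lambda d_k'}-1-\lambda d_k'\le g(\lambda)(d_k')^2\le g(\lambda)d_k^2$, so $\E[e^{\lambda d_k'}\mid\mathcal F_{k-1}]\le 1+g(\lambda)\sigma_k^2\le e^{g(\lambda)\sigma_k^2}$ (using $1+u\le e^u$). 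Multiplying through yields $\E[M_k\mid\mathcal F_{k-1}]\le M_{k-1}$, hence $\E[M_n]\le 1$. On $G$ we have $d_k'=d_k$ for every $k$ and $\sum_{j\le n}\sigma_j^2\le v$, so there $M_n\ge\exp(\lambda(X_n-X_0)-g(\lambda)v)$; therefore $\E[\mathbf 1_G\,e^{\lambda(X_n-X_0)}]\le e^{g(\lambda)v}\,\E[M_n]\le e^{g(\lambda)v}$.

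Finally I would apply Markov's inequality, $P(X_n-X_0\ge t,\,G)\le e^{-\lambda t}\,\E[\mathbf 1_G\,e^{\lambda(X_n-X_0)}]\le\exp\bigl(-\lambda t+v\,(e^{\lambda b}-1-\lambda b)/b^2\bigr)$, and optimize over $\lambda\in(0,3/b)$. Using the series bound $e^{\lambda b}-1-\lambda b=(\lambda b)^2\sum_{k\ge 2}(\lambda b)^{k-2}/k!\le\tfrac{(\lambda b)^2}{2}\sum_{k\ge 2}(\lambda b/3)^{k-2}=\tfrac{(\lambda b)^2}{2}\cdot\tfrac1{1-\lambda b/3}$, the exponent is at most $-\lambda t+\tfrac{v\lambda^2}{2(1-\lambda b/3)}$; substituting the near-optimal choice $\lambda=t/(v+bt/3)$, for which $1-\lambda b/3=v/(v+bt/3)$, collapses the exponent to $-\lambda t/2=-t^2/(2v+2bt/3)$, which is exactly the claimed bound. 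Every ingredient here is routine once the truncation in the first paragraph is in place; that construction---legitimately handling the non-adapted event $G$ while retaining the variance term $\sum_k\sigma_k^2$---is the crux of the argument.
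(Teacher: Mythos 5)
The paper does not actually prove this statement: it is imported verbatim as a black-box ``mathematical fact'' from the cited reference \cite{garay2020full} (it is a Freedman/Bernstein-type martingale inequality restricted to a good event $G$), so there is no in-paper proof to compare against. Your blind derivation is correct and is essentially the standard proof of Freedman's inequality as adapted in the backbone literature. The one genuinely delicate issue --- that $G$ is a trajectory-level, non-adapted event, so the increment bound $d_k\le b$ and the variance bound $\sum_k\sigma_k^2\le v$ cannot be used inside conditional expectations --- is handled correctly by your truncation $d_k'=\min(d_k,b)$ together with keeping the adaptive compensator $g(\lambda)\sum_{j\le k}\sigma_j^2$ in the supermartingale and only invoking $\sum_j\sigma_j^2\le v$ at the very end on $G$. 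The individual steps all check out: $\E[d_k'\mid\mathcal F_{k-1}]\le 0$ since $d_k'\le d_k$; $(d_k')^2\le d_k^2$ (using $b>0$, which is implicit in the statement); the monotonicity of $x\mapsto(e^x-1-x)/x^2$ gives $e^{\lambda d_k'}-1-\lambda d_k'\le g(\lambda)(d_k')^2$; the bound $1/k!\le\tfrac12\,3^{-(k-2)}$ for $k\ge2$ yields $g(\lambda)\le\lambda^2/(2(1-\lambda b/3))$; and the choice $\lambda=t/(v+bt/3)\in(0,3/b)$ collapses the exponent to exactly $-t^2/(2v+2bt/3)$. The only cosmetic caveat is that the theorem as quoted implicitly assumes $b>0$ and that the increment bound on $G$ holds for all $k\le n$; your proof uses both, as does the original.
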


The following is known as the Berry-Esseen Theorem. See \cite{feller2} as a standard reference and \cite{shevtsova} for improvements with respect to the constant 1/2.
\begin{theorem}
	Let the $X_i$ be independent variables with common distribution such that
	$\E[X_i]=\mu$, $\V[X_i]=\sigma^2>0$, $\E[|X_i-\mu|^3]=\rho<\infty$.
	If $F_n$ is the distribution of $(X_1+\cdots+X_n-\mu n)/\sqrt{n\sigma^2}$
	and $\Phi$ the standard normal, then
	\[
		|F_n(x)-\Phi(x)|\le\frac\rho{2\sigma^3\sqrt n},\quad
		\text{for all $x$ and $n$.}
	\]
\end{theorem}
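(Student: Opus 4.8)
Although this is a classical statement that the paper simply quotes, a self-contained proof would follow the standard Fourier-analytic route built on Esseen's smoothing inequality. First I would reduce to the standardized case: replacing each $X_i$ by $(X_i-\mu)/\sigma$ changes neither side, so we may take $\mu=0$ and $\sigma^2=1$, whence the power-mean inequality gives $\rho=\E[|X_i|^3]\ge 1$ and $F_n$ is the distribution function of $S_n=(X_1+\cdots+X_n)/\sqrt n$. A short reduction also disposes of the regime where $\rho/\sqrt n$ is not small (say $\rho/\sqrt n\ge c$ for a suitable absolute $c$), since there the claimed bound already dominates a crude Chebyshev-type estimate on $\sup_x|F_n(x)-\Phi(x)|$; so it suffices to prove $\sup_x|F_n(x)-\Phi(x)|\le \tfrac{\rho}{2\sqrt n}$ when $\rho/\sqrt n$ is small.

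The main tool is Esseen's smoothing lemma: for every distribution function $F$ with characteristic function $f$ and every $T>0$,
\[
 \sup_x|F(x)-\Phi(x)|\le\frac1\pi\int_{-T}^{T}\Bigl|\frac{f(t)-e^{-t^2/2}}{t}\Bigr|\,dt+\frac{c_0}{T},
\]
with $c_0$ an absolute constant governed by $\sup_x|\Phi'(x)|=1/\sqrt{2\pi}$. I would establish it by convolving $F-\Phi$ with a non-negative kernel whose Fourier transform is supported in $[-T,T]$ (the Fej\'er kernel works), so that Fourier inversion turns the smoothed difference into the displayed frequency-side integral, while the smoothing error is controlled by the uniform bound on the density of $\Phi$.

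It then remains to bound $f_n(t)-e^{-t^2/2}$, where $f_n(t)=\psi(t/\sqrt n)^n$ and $\psi(s)=\E[e^{isX_1}]$. A third-order Taylor expansion of $\psi$ with integral remainder, using $\E[X_1]=0$ and $\E[X_1^2]=1$, gives $\bigl|\psi(s)-(1-\tfrac{s^2}{2})\bigr|\le \tfrac{\rho|s|^3}{6}$. Choosing $T=\tfrac{4\sqrt n}{3\rho}$ ensures that for $|t|\le T$ the argument $s=t/\sqrt n$ is small enough to yield, say, $|\psi(t/\sqrt n)|\le e^{-t^2/3}$; then the elementary identity $|a^n-b^n|\le n\,|a-b|\,\max(|a|,|b|)^{n-1}$ with $a=\psi(t/\sqrt n)$ and $b=e^{-t^2/(2n)}$ gives, after absorbing lower-order terms, $|f_n(t)-e^{-t^2/2}|\le C\,\tfrac{\rho|t|^3}{\sqrt n}\,e^{-t^2/3}$ on $|t|\le T$. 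Dividing by $|t|$ and integrating over $\mathbb{R}$ contributes at most $\tfrac{C\rho}{\sqrt n}\int_{\mathbb{R}}t^2 e^{-t^2/3}\,dt$, an absolute constant times $\rho/\sqrt n$, and the term $c_0/T$ is also $O(\rho/\sqrt n)$ by the choice of $T$; summing the two pieces yields $\sup_x|F_n(x)-\Phi(x)|=O(\rho/\sqrt n)$.

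The qualitative $O(\rho/\sqrt n)$ rate falls out of the above with little effort; the real difficulty is pinning down the explicit constant $1/2$ (and, a fortiori, the sharper constants of Shevtsova). That requires quantitatively tight versions of every ingredient --- the exact constant in the smoothing lemma, a sharp third-order remainder for $\psi$, a precise exponential majorant for $|\psi(t/\sqrt n)|^n$ on the chosen range, and a careful optimization of $T$ together with the split between the two terms. Accordingly I would follow Feller's treatment for the clean constant-$1/2$ version and invoke Shevtsova's refinements only if a sharper constant were needed.
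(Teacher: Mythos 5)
The paper does not prove this statement at all: it is quoted verbatim as the classical Berry--Esseen theorem, with Feller cited as the standard reference and Shevtsova for the constant. So there is no in-paper proof to compare against, and your decision to sketch the standard Fourier-analytic argument is the right call. Your outline --- standardization, Esseen's smoothing inequality with a Fej\'er-type kernel, the third-order Taylor bound $|\psi(s)-(1-s^2/2)|\le\rho|s|^3/6$, the cutoff $T=4\sqrt n/(3\rho)$, and the telescoping bound on $|\psi(t/\sqrt n)^n-e^{-t^2/2}|$ --- is exactly Feller's route and correctly yields $\sup_x|F_n(x)-\Phi(x)|=O(\rho/(\sigma^3\sqrt n))$, which is all the paper actually uses (the tightness analysis in Appendix~\ref{app:tight} only needs the $O(k^{-1/2})$ error term).

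Two caveats. First, a small slip: on $|t|\le T$ you want $|\psi(t/\sqrt n)|\le e^{-t^2/(3n)}$, so that the $n$-th power is $e^{-t^2/3}$; as written the inequality is dimensionally wrong. Second, and more substantively, Feller's treatment does not give the constant $1/2$ --- it gives $3$. The route you describe, even carefully optimized, historically bottoms out near $0.8$ (van Beek); pushing the constant below $1/2$ is precisely the content of the Shevtsova-line refinements, which rest on much more delicate characteristic-function inequalities and are not an optional add-on. So your plan to ``follow Feller for the clean constant-$1/2$ version and invoke Shevtsova only if a sharper constant were needed'' has the dependency backwards: to prove the theorem exactly as stated you must already invoke the modern refinements. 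Since the paper itself defers to the literature for the constant, this is a presentational rather than a mathematical defect, but it should be stated accurately.
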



\section{A long range attack}
\label{app:long_range}

We point out that \minotaur is insecure with the longest chain rule due to a long range attack. Let the weighing parameter $\omega = 0.5$ and define $f$ to be PoS block production rate by a stakeholder who controls all actual stake in the system. Suppose the adversary controls 0.8 fraction of stake and 0.1 fraction of mining power at some slot (not at the beginning of the execution), then after behaving honestly for some time, the adversary will control 0.8 fraction of actual stake and 0.1 fraction of work stake at the beginning of some epoch $e$. Now, the adversary starts to grow a private chain $\mathcal{C}_1$, while honest nodes grow a public chain $\mathcal{C}_2$. Suppose $\mathcal{C}_1$ and $\mathcal{C}_2$ include the same set of transactions. Then at the beginning of epoch $e+2$, the adversary will control 0.8 fraction of actual stake and all the work stake (after normalization) on its chain $\mathcal{C}_1$ because $\mathcal{C}_1$ only refers adversarial PoW blocks in epoch $e$. So the growth rate of $\mathcal{C}_1$ after epoch $e+2$ will be $0.8f +f = 1.8f$. Similarly, at the beginning of epoch $e+2$, honest nodes will control 0.2 fraction of actual stake and all the work stake (after normalization) on the chain $\mathcal{C}_2$ and the growth rate of $\mathcal{C}_2$ after epoch $e+2$ will be $0.2f +f = 1.2f$. Therefore, $\mathcal{C}_1$ will catch up with $\mathcal{C}_2$ eventually if the longest chain rule is adopted. See Figure~\ref{fig:long_range}.

This long range attack is similar to the stake grinding attack on Ouroboros and it can be prevented by the new chain selection rules {\sf maxvalid-mc} and {\sf maxvalid-bg}. For {\sf maxvalid-mc}, honest nodes won't accept $\mathcal{C}_1$ because it forks too long from $\mathcal{C}_2$; for {\sf maxvalid-bg}, honest nodes won't accept $\mathcal{C}_1$ because it grows slower than $\mathcal{C}_2$ in epoch $e$ (right after the fork). 

\begin{figure*}
    \centering
    \includegraphics[width=0.7\textwidth]{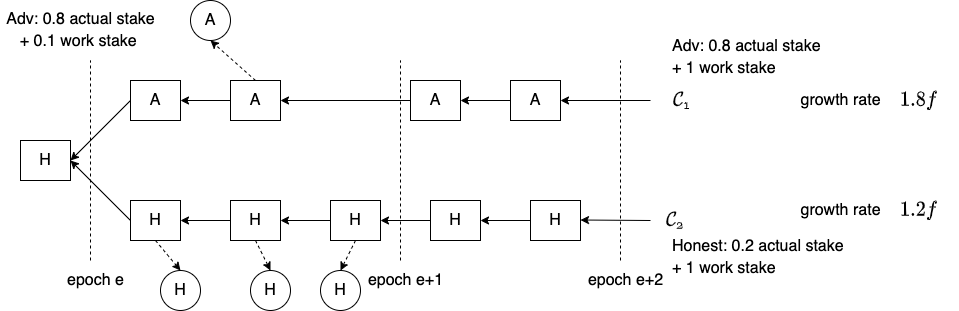}
    \caption{A long range attack on the longest chain rule.}
    \label{fig:long_range}
\end{figure*}

\section{Proof for \S6}
\label{app:proof}

\begin{proof}[Proof of Lemma~\ref{lem:typical}]
The proof for (a) is the same as part (a) of Theorem 1 in \cite{garay2020full}. 
For (b), by the condition, for each query in $S$ made by either an honest node or the adversary at slot $r$ in epoch $e$, the target $T$ must satisfy $(1-\delta)\alpha_0 f/\gamma^2 \leq p n_r T \leq (1+\delta)\gamma^2 f/\alpha_0$
Therefore, the proof is also similar to part (a) of Theorem 1 in \cite{garay2020full}. 

For (c), let the execution be partitioned into parts such that each part has at least $\ell/\phi$ and at most $s = 2R$ slots. We prove that the statement fails with a probability less than $e^{-\lambda}$ for each part. 
Let $J$ denote the queries made by $\mathcal{H}$ in slots $S$. We have $|J| =n^{\mathcal{H}}(S)= \phi n(S) $. For $k \in [|J|]$, let $Z_i$ be the difficulty of any block obtained from query $j \in J$ and we write $\mathcal{E}_{j-1}$ for the execution just before this query. Then
\begin{align*}
    &X_0 = 0 \\
    &X_k = \sum_{i \in [k]} Z_i - \sum_{i \in [k]} \mathbb{E}[Z_i| \mathcal{E}_{i-1}]  
\end{align*}
is a martingale with respect to $\mathcal{E}_0, \ldots, \mathcal{E}_{k}$. We have 
\begin{align*}
   X_k - X_{k-1} &= X_k - \mathbb{E}[X_k | \mathcal{E}_{k-1}] = Z_k - \mathbb{E}[Z_k | \mathcal{E}_{k-1}] \\
                 &\leq \frac{1}{T_k}  = \frac{p h_k}{p h_k T_k} \leq \gamma^3 ph(S)/(1-\delta)\alpha_0 f|S| \\
                 &\leq \gamma^3 pn(S)/(1-\delta)\alpha_0 f|S| := b.
\end{align*}
Similarly 
\begin{align*}
    V &= \sum_k var [X_k - X_{k-1} | \mathcal{E}_{k-1}] \leq \sum_k \mathbb{E}[Z_k^2 | \mathcal{E}_{k-1}] \\
    & = \sum_k p T_k \frac{1}{T_k^2} \leq \gamma^3 p^2|J|h(S)/(1-\delta)\alpha_0 f|S| \\
    &\leq \gamma^3 p^2|J|n(S)/(1-\delta)\alpha_0 f|S| := v.
\end{align*}
Let the deviation $t = \epsilon p|J| = \epsilon \phi p n(S)$, then we have $b = \frac{ \gamma^3 t}{(1-\delta)\alpha_0 \epsilon \phi f |S|}$ and $v = \frac{\gamma^3t^2}{(1-\delta)\alpha_0 \epsilon^2\phi  f |S|}$. Using the minimum value of $|S|$ is $\ell/\phi$ and applying Theorem \ref{thm:martingale} to $-X_{|J|}$, we have
\begin{align*}
    P[D(S) < (1-\epsilon)pn^{\mathcal{H}}(S)] \leq \exp({-\frac{\epsilon t}{2b(1 + \epsilon/3)}}) \leq \exp({-\lambda}).
\end{align*}
This concludes the proof.

\end{proof}

\begin{proof}[Proof of Lemma~\ref{lem:good}]
We prove the lemma by induction.
For epoch $e = 1$, it is trivial to just set $T_1 = f/p \tilde h_1$ by Assumption~\ref{ass:main1}.2.
Now we assume all epochs are good until epoch $e-1$ ($e \geq 2$), we will show epoch $e$ is good.

Let $S_0 = \{u:r_1\leq u \leq r_2\}$ be the window that will be used to determine $T_e$, i.e., $r_1 =\max{(0,(e-2)R-\kappa)}+1$ and $r_2 = (e-1)R-\kappa$. Let $\mathcal{C}(S_0)$ be the segment of $\mathcal{C}$ containing PoS blocks with timestamps in $S_0$, let $\B$ be all PoW blocks included in $\mathcal{C}(S_0)$, and $d$ be the total difficulty of all PoW blocks in $\B$. Then we have the following facts:
\begin{itemize}
    \item {\bf Fact 1.} For any PoW block $B \in \B$, $B$ is mined after $r_1 - 4\kappa - \Delta$. Indeed by recency condition, $B$ must refer to a confirmed PoS block $B_s$ with timestamp at least $r_1 - sl_{\rm re}$. By $\exists$CQ, the last honest ancestor block of $B_s$ has timestamp at least $r_1 - sl_{\rm re} - \kappa$. So $B$ must be mined after $r_1 - 4\kappa - \Delta$.
    \item {\bf Fact 2.} For any PoW block $B \in \B$, $B$ is mined before $r_2 + \kappa$. Indeed the PoS block (denoted as $B_s$) including $B$ has timestamp at most $r_2$, and again by $\exists$CQ, the first honest descendant block of $B_s$ has timestamp at most $r_2 + \kappa$. So $B$ must be mined before $r_2 + \kappa$.
    \item {\bf Fact 3.} If a PoW block $B$ is mined by an honest miner after $r_1$ and before $r_2 - 3\kappa -\Delta$, then $B \in \B$. Indeed, by $\exists$CQ, the last honest block in $\mathcal{C}(S_0)$ has timestamp at least $r_2 - \kappa$. Hence by Lemma~\ref{lem:fresh}, all honest PoW blocks mined after $r_1$ and before $r_2 - \kappa - r_{\rm wait}$ will be included into a PoS block in $\mathcal{C}(S_0)$.  
\end{itemize} 

Let $S_1 = \{u: r_1-(4\kappa + \Delta) \leq u \leq r_2 + \kappa \}$, $S_2 = \{u: r_1 \leq u \leq r_2 - (3\kappa +\Delta)\}$, and $J$ be the set of adversary queries associated with $\B$ in $S_1$. Then by Fact 1 and Fact 2, we have all PoW blocks in $\B$ are mined in $S_1$; by Fact 3, we have all PoW blocks mined by honest nodes in $S_2$ are in $\B$. 
Hence, $D(S_2) \leq D_{e-1} \leq D(S_1) + A(J)$.
By Proposition~\ref{prop:good_slot}, for each query in $S_1$ made by an honest node at slot $r$ in epoch $e$, the target $T$ must be $T_e$, so we have $(1-\delta)\alpha_0 f/\gamma^2 \leq p h_r T \leq (1+\delta)\gamma^2 f$.
For each query in $J$ made by the adversary at slot $r$ in epoch $e$, the target $T$ may be either $T_{e-1}$ or $T_e$, still we have $(1-\delta)\alpha_0 f/\gamma^2 \leq p h_r T \leq (1+\delta)\gamma^2 f$ in both cases (under a $(\gamma,2R)$-respecting environment). By the fact that $\alpha_0 n_r \leq h_r \leq n_r$, we have $(1-\delta)\alpha_0 f/\gamma^2 \leq p n_r T \leq (1+\delta)\gamma^2 f/\alpha_0$.

Under a typical execution, we have
\begin{equation*}
    D(S_2) > (1-\epsilon)ph(S_2),
\end{equation*}
and
\begin{align*}
    D(S_1) + A(J) &< (1+\epsilon) p(h(S_1)+|J|) \\
    &= (1+\epsilon) pn(S_1) \leq (1+\epsilon) ph(S_1)/\alpha_0.
\end{align*}
Therefore, by Proposition~\ref{prop:envbounds}, we have
\begin{equation*}
    (1-\epsilon) ph^{(e)}|S_2|/\gamma < D_{e-1}^{\rm total} <  (1+\epsilon) \gamma ph^{(e)}|S_1|/\alpha_0.
\end{equation*}

By the difficulty adjustment rule, we have $T_e = f|S_0|/D_{e-1}^{\rm total}$. Then $T_e$ can be bounded as follows:
\begin{itemize}
    \item {\bf Lower bound:}
    \begin{align*}
        ph^{(e)}T_e &\geq \frac{|S_0|}{(1+\epsilon)|S_1|} \frac{\alpha_0 f}{\gamma} \geq \frac{(1-\epsilon)|S_0|}{|S_0| + 5\kappa+\Delta}\frac{\alpha_0 f}{\gamma} \\
        & \geq \frac{(1-\epsilon)(R-\kappa)}{R+ 4\kappa+\Delta}\frac{\alpha_0 f}{\gamma} \overset{(C_1)}{\geq} \frac{(1-\epsilon)(R-\epsilon R)}{R+ \epsilon R}\frac{\alpha_0 f}{\gamma} \\
        & \geq (1-\epsilon)^3 \frac{\alpha_0 f}{\gamma} \geq (1-4\epsilon) \frac{\alpha_0 f}{\gamma} \overset{(C_2)}{\geq} (1-\delta)\frac{\alpha_0 f}{\gamma}.
    \end{align*}
    \item {\bf Upper bound:}
    \begin{align*}
        ph^{(e)}T_e &\leq \frac{|S_0|}{(1-\epsilon)|S_2|} \gamma f = \frac{|S_0|}{(1-\epsilon)(|S_0| - 3\kappa-\Delta)}\gamma f \\
        &\leq \frac{R-\kappa}{(1-\epsilon)(R- 4\kappa-\Delta)}\gamma f  \overset{(C_1)}{\leq} \frac{R}{(1-\epsilon)(R- \epsilon R)}\gamma f \\
        &= \frac{1}{(1-\epsilon)^2} \gamma f \leq \frac{1}{1-2\epsilon} \gamma f \leq (1+4\epsilon) \gamma f \overset{(C_2)}{\leq} (1+\delta)\gamma f.
    \end{align*}
\end{itemize}
This concludes the proof.
\end{proof}

\begin{proof}[Proof sketch of Theorem~\ref{thm:re_main}]
When moving from the single-epoch setting to the multiple-epoch setting, two new aspects need to be considered.
\begin{itemize}
    \item {\bf Virtual stake distribution updates.} For epochs $e=1,2$, the virtual stake has the same distribution as the initial stake. Since we assume the initial stake has honest majority (Assumption~\ref{ass:main1}.1), by Theorem~\ref{thm:single}, CP and $\exists$CQ are guaranteed in epochs $1\&2$. For epoch $e \geq 3$, the virtual stake of a node $h$ composes of two parts, the actual stake recorded on the blockchain up to the last block of the epoch $e-2$ (by Ouroboros~\cite{kiayias2017ouroboros}) and the work stake decided by the amount of work $h$ has contributed in epoch $e-2$. Let $S_s$ and $S_w$ be the total actual stake and total work stake in epoch $e$. Recall that we set $\omega S_s = (1-\omega)S_w$ in \minotaurnosp, i.e., the total virtual stake is $S_v = S_s/(1-\omega)$. Denote that the adversary controls $\beta_s S_s$ actual stake and let $\beta_w$ be maximum fraction of adversarial mining power in epoch $e-2$, i.e., $\beta_w = 1 - \min_{r \in [(e-3)R+1,(e-2)R]}{(h_r/n_r)}$. By Assumption~\ref{ass:main2}.1, we have $\omega\beta_w + (1-\omega)\beta_s \leq 1/2-2\sigma$. And by Theorem~\ref{thm:fairness}, honest nodes control at least $(1-\sigma)(1-\beta_w)S_w$ work stake for epoch $e$. Therefore, the virtual stake controlled by the adversary is at most 
    \begin{align*}
          &\;\;\;\; \beta_s S_s + (1-(1-\sigma)(1-\beta_w))S_w \\
          &= ((1-\omega) \beta_s + \omega(\beta_w +\sigma(1-\beta_w)))) S_v \\
          &\leq (1/2-\sigma)S_v.
    \end{align*}
    Therefore, by an induction argument, we can guarantee an honest majority in the virtual stake for all epochs. The analysis critically relies on the fact that the CP property is immutable: specifically, when all honest parties agree on a common prefix $\mathcal{C}^{(t)}$ at some slot $t$ and, as {\sf maxvalid-mc} can only revise the last $k$ blocks of a currently adopted chain, $\mathcal{C}^{(t)}$ will be a prefix of all future chains held by the honest parties. Check the proof of Theorem 5.3 in~\cite{kiayias2017ouroboros} for details.
    \item {\bf Randomness updates.} Every epoch needs new public randomness to be used for sampling slot leaders from the above virtual stake distribution. In Ouroboros~\cite{kiayias2017ouroboros}, elected slot leaders (one per slot) from epoch $e-1$ runs a publicly verifiable secret sharing (PVSS) protocol to generate the randomness for epoch $e$. The core idea is the following: given that we have guaranteed that an honest majority among elected leaders in epoch $e$ will hold with very high probability, we have that the PVSS protocol suitably simulates a beacon with the relaxation that the output may become known to the adversary before it is known to the honest parties. However, as long as the distribution of virtual stake is determined prior to this leakage, the sampling of leaders in epoch $e$ will still be unbiased.
\end{itemize}

\revreplace[Request~\ref{req:qualityandgrowth}\label{ans:qualityandgrowth}]{At last, it is not hard to see that, the CP property is equivalent to the persistence of the ledger, while CP together with $\exists$CQ implies liveness.}{At last, by our definition, the CP property with parameter $\ell_{cp} = \kappa$ is equivalent to the persistence of the ledger (when the confirmed ledger is defined as $\mathcal{C}^{\lceil \kappa}$). Meanwhile, the CP property with parameter $\ell_{cp} = \kappa$, together with the $\exists$CQ property with parameter $\ell_{cq} = \kappa$, implies liveness with parameter $u = 2\kappa$. Indeed, for a chain $\mathcal{C}$ held by an honest node at slot $r$, there must be at least one honest block $B^*$ in $\mathcal{C}^{\lceil \kappa}$ with timestamp in $[r-2\kappa,r-\kappa]$ (by $\exists$CQ), then any transaction that appears before $r-2\kappa$ should be included by either $B^*$ or its ancestors, thus the transaction should be confirmed (by CQ/persistence).}
\end{proof}

\section{Details on the Ouroboros Protocol Family}\label{app:ouroboros}
We give a summary of the different PoS protocols we explicitly base
our generic hybridization construction on in the main part of the paper,
Ouroboros Classic~\cite{kiayias2017ouroboros}, Praos~\cite{david2018ouroboros},
and Genesis~\cite{badertscher2018ouroboros}.
In order of this sequence, each version of the protocol gives stronger security guarantees.

We first give quick summaries of the respective protocol guarantees and their
underlying assumptions.
Finally, in Appendix~\ref{app:praos}, we give a more detailed description of
Ouroboros Praos as the reference protocol; and sketch how the other variants differ from it.

\paragraph{Ouroboros Classic.}
Classic is secure against a minority of adversarially controlled stake under the following
assumptions:
\begin{description}
 \item{Network.} The communication network is synchronous.
 \item{Corruption.} The adversary is `moderately' adaptive (participant corruption only
   takes effect after a certain delay).
 \item{Stake shift.} There is an upper bound on the stake shift, i.e., the stake
   distribution among the stake holders does not change too fast.
 \item{Offline tolerance.} The protocol participants only go offline for short periods
   of time.
\end{description}

\paragraph{Ouroboros Praos assumptions:}
Praos is secure against a minority of adversarially controlled stake under the following
assumptions:
\begin{description}
 \item{Network.} The communication network  is semi-synchronous, i.e., that the network
   delay is bounded bu some delay $\Delta$ not known to the participants.
 \item{Corruption.} The adversary is fully adaptive.
 \item{Stake shift.} As in Classic.
\end{description}

\paragraph{Ouroboros Genesis assumptions:}
Classic is secure against a minority of adversarially controlled stake among all
participants who are active in the system---under the following assumptions:
\begin{description}
 \item{Network.} As in Praos.
 \item{Corruption.} As in Praos.
 \item{Offline tolerance.} The protocol participants can join later during any stage of
   the protocol, or go offline for extended periods of time during participation.
\end{description}

\subsection{Ouroboros Praos}\label{app:praos}
The protocol proceeds in epochs of $R$ slots, each slot representing a given `unit of time',
say, $1$ second of the protocol run-time. For ease of exposition, let the genesis block
represent epoch $0$ of the protocol. We now describe how the protocol operates per epoch $e>0$.

\paragraph{Slot-leader election.}
During each slot $\slot_j$, a slot-leader election is held among the stakeholders, and a
winning stakeholder is allowed to publish a new block associated with this slot.

The slot-leader election during epoch $e$ is based on the stake distribution at the end of
epoch $\max(0,e-2)$, i.e., the stake distribution that results after the processing of the
last block of epoch $e-2$ (or the stake distribution from the genesis block).

Let $\alpha_i$ be the relative stake held by stake holder $\party_i$ (holding $\alpha_i\cdot S$
of the total stake $S$) at the end of epoch $\max(e-2,0)$. Per slot, the probability $p_i$ for
stakeholder $\party_i$ to be a block leader is defined as
\begin{equation}\label{eq:praos:phi}
 p_i=\phi_f(\alpha_i)\defeq 1-(1-f)^{\alpha_i}
\end{equation}
for some appropriate \emph{active-slots coefficient} $f$ (the probability that, among the total
stake, at least one slot leader is elected during any given slot).

Slot leadership is pseudo-randomly assigned based on the \emph{epoch nonce} $\eta$, a seed
calculated in epoch $e-1$ (as described further below). To become a slot leader for slot $\slot_m$,
the stakeholder $\party_i$ evaluates a verifiable random function (VRF) (bound to a public key registered
by $\party_i$) $(y,\pi)=\VRF_i(\eta,\slot_m)$. Slot leadership is satisfied iff
\[
 y\stackrel{!}{<}T\defeq 2^{\vrflengthout}\cdot\phi_f(\alpha_i),
\]
where $\VRF_i$ produces outputs of $\vrflengthout$ bits, and $T$ is a threshold to enforce the
desired probabilities.

\paragraph{Block production.}
Besides a hash of its predecessor and the payload, a block contains
\begin{itemize}
    \item the slot number $m$;
    \item the above proof of leadership $(y,\pi)$ (such that  $y<T$);
    \item an additional, independent, VRF output $(y_\rho,\pi_\rho)$ contributing to the epoch-nonce
      generation; and
    \item a signature on $(m,(y,\pi),(y_\rho,\pi_\rho))$ by $\party_i$ of a key-evolving signature
      scheme (KES).\footnote{The stakeholders update the private keys of their KES instance after
        every slot.}
\end{itemize}

\paragraph{Block settlement.}
A block is considered settled if it sits at least $k$ blocks deep in a node's main chain where $k$
is the \emph{prefix parameter} of the protocol.

\paragraph{Chain selection rule: {\sf maxvalid-mc}.}
Upon the arrival of a new block, each node chooses, as their main chain, the longest
chain that does not fork from the previous main chain by more than $k$ blocks where
$k$ is the prefix parameter.

\paragraph{Epoch-nonce generation.}
The epoch nonce $\eta$ for epoch $e>1$ is computed as a hash of all VRF outputs $y_\rho$ included
in the blocks of the main chain up to $2/3$ of epoch $e-1$.
The epoch nonce for epoch $e=1$ is included in the genesis block.

\subsection{Differences in Classic and Genesis}

\paragraph{Classic.}
The main difference to Praos is that the epoch randomness is obtained from a coin-flip protocol
based on a publicly verifiable secret sharing protocol---with the disadvantage that the leader
schedule is public and thus allows for adaptive corruption.

\paragraph{Genesis}
Genesis differs from Praos by applying a different chain selection rule to allow protocol
participants to be offline for extended periods of time. The modified chain-selection rule
{\sf maxvalid-bg} works as follows:
A longer chain is adopted if
\begin{itemize}
    \item it forks from the current main chain by at most $k$ blocks; or
    \item if forks by more than $k$ blocks but contains more blocks in the $s$ slots following
      the last common block of the main chain and the longer chain.
\end{itemize}

\ignore{
\begin{mfboxfig}
  {\textbf{Protocol }$\ProtPCpos^\chaincount$, part 1.}
  {fig:protpc1}
    \noindent
    The protocol $\ProtPCpos$ is run by stakeholders, initially equal to
    $\party_{1},\ldots,\party_{n}$, interacting among themselves and with ideal
    functionalities $\Finit$, $\Fdiffuse$, $\Fvrf$, $\Fsig$, $\Fkes$, $\RO$ over a sequence of
    $L=E\epochlen$ slots $S=(\slot_1,\ldots,\slot_{L})$ consisting of $E$
    epochs with $\epochlen$ slots each.  Let
    $\thres_i^j \eqdef 2^{\vrflengthout} \phi_\actvsl(\relstake_i^j)$.
    Then $\ProtPCpos$ proceeds as follows for each stakeholder $\party_i$:
	
    \begin{enumerate}
      \item \textbf{Initialization.} 
      \begin{enumerate}
        \item 
          $\party_i$ sends $\msg{KeyGen}{sid,\party_i}$ to $\Fvrf$, $\Fkes$ and $\Fsig$; receiving
          $\msg{Verification Key}{sid, v_i}$ for
          $v_i\in\{\vvrf_i,\vkes_i,\vsig_i\}$, respectively.
          If this is the first round, $\party_i$ sends
          $\msg{ver\_keys}{sid,\party_i,\vvrf_i,\vkes_i,\vsig_i}$ 
          to $\Finit$ to claim stake from the genesis block.  In any case, it
          terminates the round by returning $(\party_i,\vvrf_i,\vkes_i,\vsig_i)$
          to  $\env$.
        \item
          In the next round, $\party_i$ sends $\msg{genblock\_req}{sid,\party_i}$ to
          $\Finit$, receiving $\msg{genblock}{sid,\Stakedist_0,\nonce}$. 
              If $\party_i$ is initialized in the first round, it sets the local
              blockchains $\Chains\defeq\chaintuple{\chain_\chainindex}$ to
              $\chain_{\chainindex} := \genesis \defeq (\Stakedist_0, \nonce)$ 
              and block sets
              $\Blocks\defeq\chaintuple{\blockset_\chainindex}$ to
              $\blockset_\chainindex=\{\genesis\}$ for each
              $\chainindex\in[\chaincount]$,
              otherwise it receives
              the local blockchains $\Chains=\chaintuple{\chain_\chainindex}$ 
              and sets $\Blocks=\chaintuple{\blockset_\chainindex}$
              from the environment.
      \end{enumerate}
  \end{enumerate}
  After initialization, in each slot $\slotnow \in S$ (of epoch $e_\jnow$),
  $\party_i$ performs the following:
  \begin{enumerate}
    \setcounter{enumi}{1}
    \item \textbf{Epoch Update.} 
      If a new epoch $e_\jnow$ with $\jnow\geq 2$ has started,
      $\party_i$ computes $\Stakedist_\jnow$ and $\nonce_\jnow$ as follows:
      \begin{enumerate}
        \item
          $\Stakedist_\jnow$ is the stake distribution recorded in the state
          $
          \GetValidTX
            \left(
              \TruncByEpoch
                {\Chains}
                {\jnow-2},
              \TruncByEpoch
                {\Blocks}
                {\jnow-2}
              \right)
          $
          where
          the inputs are the currently held chains and block sets
          $(\Chains,\Blocks)$ truncated up to the last slot of epoch~$\jnow-2$.
        \item 
          To compute $\nonce_\jnow$, collect the blocks
          $B=(\chainindex,st,\Included,\data,\slot,\blkproof,\nonceseed,\kessig) \in
          \chain_{\chainindex}$ belonging to epoch $e_{\jnow-1}$ up to the slot
          with timestamp $(\jnow-2)\epochlen + 2\epochlen/3$ in any of the
          currently held $\chainindex$-chains $\chain_{\chainindex}$ for
          $\chainindex\in[\chaincount]$, concatenate the values
          $\vrfout_\nonceseed$ (from each $\nonceseed$) 
          into a value $v$ in some fixed predetermined order, and let
          $\nonce_\jnow=\ro(\eta_{\jnow-1}\concat\jnow\concat v)$.
      \end{enumerate}
    \item \textbf{Chains Update.}
      For all $\chainindex\in[\chaincount]$, $\party_i$ performs the following
      steps:
      \begin{enumerate}
        \item
          $\party_i$ processes every new $\chainindex$-block 
          $B=(\chainindex,st,\Included,\data,\slot,\blkproof,\nonceseed,\kessig)$
          with
            $\blkproof=(\party_s,\vrfout,\vrfproof)$,
            $\nonceseed=(\vrfout_{\nonceseed},\vrfproof_{\nonceseed})$,
            and $\slot$ belonging to some epoch $e_j$;
          received via diffusion as follows: 
          The block is added to $\blockset_\chainindex$ if all the following
          conditions are satisfied, otherwise it is dropped:
          \begin{enumerate}[label={(\roman*)}]
            \item
              $\slot\leq \slotnow$,
              $\vrfout < \thres^{j}_s$,
              and 
              there is no block with the same $(\party_s,\slot)$ in
              $\blockset_\chainindex$,
            \item
              $\Fvrf$ answers $\msg{Verify}{sid,\nonce_j
              \concat\chainindex\concat
              \slot\concat\mathtt{TEST},\vrfout,\vrfproof,\vvrf_s}$ by
              $\msg{Verified}{sid,\nonce_j \concat\chainindex\concat
              \slot\concat\mathtt{TEST},\vrfout,\vrfproof,1}$;
            \item 
              $\Fvrf$ answers $\msg{Verify}{sid,\nonce_j \concat
              \chainindex\concat
              \slot\concat\mathtt{NONCE},\vrfout_{\nonceseed},\vrfproof_{\nonceseed},\vvrf_s}$
              by
              $\msg{Verified}{sid,\nonce_j \concat\chainindex\concat
              \slot\concat
              \mathtt{NONCE},\vrfout_{\nonceseed},\vrfproof_{\nonceseed},1}$;
          \item
              $\Fkes$ answers 
              $\msg{Verify}{sid,(\chainindex,st,\Included,\data,\slot,\blkproof,\nonceseed),\slot,\kessig,\vkes_s}$
              by
              $\msg{Verified}{sid,(\chainindex,st,\data,\slot,\blkproof,\nonceseed),\slot,1}$.
          \end{enumerate}
        \item
          $\party_i$ determines
          $\conblockset_\chainindex$ as the set of all \emph{connected} blocks in
          $\blockset_\chainindex$
          and collects all $\chainindex$-chains that
          can be constructed (respecting the previous-block hashes
          $st$) from the blocks in $\conblockset_\chainindex$ into a
          set $\chainset_\chainindex$.
        \item
          $\party_i$ computes 
          $\chain_{\chainindex} :=
          \maxvalid(\chain_{\chainindex},\chainset_{\chainindex})$, 
          and updates
          $\Chains=\chaintuple{\chain_{\chainindex}}$.
       \end{enumerate}
 \end{enumerate}
\end{mfboxfig}

\begin{mfboxfig}
 {\textbf{Protocol }$\ProtPCpos^\chaincount$, part 2.}
 {fig:protpc2}
  \label{fig:getstate}
  \label{fig:maxvalid}
   \begin{enumerate}
     \setcounter{enumi}{3}
    \item \textbf{Chains Extension.} 
      $\party_i$ receives from the environment the transaction data
      $\data^* \in \bits^*$ to be inserted into the ledger.  For
      all $\chainindex\in[\chaincount]$, $\party_i$ performs the following steps: 
      \begin{enumerate}
        \item 
          Send 
          $\msg{EvalProve}{sid,\nonce_j\concat\chainindex\concat\slot^*\concat\mathtt{NONCE}}$ 
          to $\Fvrf$, get
          $\msg{Evaluated}{sid,\vrfout_{\nonceseed}^\chainindex,\vrfproof_{\nonceseed}^\chainindex}$.
        \item 
          Send $\msg{EvalProve}{sid,\nonce_j\concat\chainindex\concat\slot^*\concat\mathtt{TEST}}$ 
          to $\Fvrf$, get
          $\msg{Evaluated}{sid,\vrfout^\chainindex             , \vrfproof^\chainindex}$.
        \item
          $\party_i$ checks whether 
          $\vrfout^\chainindex < \thres^j_i$.  If yes,  it chooses a maximal
          sequence $\data'$ of $\chainindex$-transactions in $\data^*$
          that can be appended to 
          $\GetValidTX\left(\Chains,\Blocks\right)$
          without invalidating it and fit
          into a block,
          and attempts to include $d'$ into $\chain_\chainindex$ as follows: It
          generates a new block
          $B=(\chainindex,st_\chainindex,\Included,d',\slotnow,\blkproof,\nonceseed,\kessig)$ where
          $st_\chainindex=H(\head(\chain_\chainindex))$, 
          $\Included$ is a list of hash references to all leaf blocks (referenced neither by chain links
          nor by inclusion references) in
          $\conblockset_\chainindex$ that are not in $\past(\head(\chain_\chainindex))$,
          $\blkproof=(\party_i,\vrfout^\chainindex,\vrfproof^\chainindex)$,
          $\nonceseed=(\vrfout_{\nonceseed}^\chainindex,\vrfproof_{\nonceseed}^\chainindex)$
          and $\kessig$ is a signature obtained by sending $\msg{USign}{sid,
          \party_i,
          (\chainindex,st_\chainindex,\Included,d',\slotnow,\blkproof,\nonceseed),\slotnow}$
          to $\Fkes$ and receiving $\msg{Signature}{sid,
          (\chainindex,st_\chainindex,\Included,d',\slotnow,\blkproof,\nonceseed),\slotnow,\sigma}$.
          $\party_i$ computes  $\chain_\chainindex := \chain_\chainindex \concat
          B$,  sets $\chain_\chainindex$ as the new local $\chainindex$-chain
          and diffuses $B$.
      \end{enumerate}
    \item \textbf{Signing Transactions.}
      Upon receiving $\msg{sign\_tx}{sid', tx}$ from the environment,
      $\party_i$ sends 
      $\msg{Sign}{sid, \party_i, tx}$
      to $\Fsig$, receiving 
      $\msg{Signature}{sid, tx, \sigma}$.
      Then, $\party_i$ sends
      $\msg{signed\_tx}{sid', tx, \sigma}$
      back to the environment.
  \end{enumerate}
\vskip -2mm
\noindent\rule{\textwidth}{0.4pt}
  \textbf{Procedure $\maxvalid(\chain,\chainset)$:}
\begin{enumerate}
  \item
    Drop all chains $\chain'$ from $\chainset$ that fork from $\chain$ more than
    $k$ blocks (i.e., more than $k$ blocks of $\chain$ would be discarded if
    $\chain'$ was adopted).
  \item 
    Return the longest of the remaining chains.  If multiple such chains remain,
    return either $\chain$ if this is one of them, or return the one that is
    listed first in $\chainset$. 
\end{enumerate}
\vskip -2mm
\noindent\rule{\textwidth}{0.4pt}
  {\textbf{Procedure $\GetValidTX(\chain_1,\ldots,\chain_{\chaincount},
      \blockset_1,\ldots,\blockset_\chaincount)$:}}
\begin{enumerate}
  \item 
   Take all blocks in
   chains $\chain_1,\ldots,\chain_{\chaincount}$ and
   order these blocks in an increasing order according to 
   their slot index $\slot$, breaking ties using the chain number $\chainindex$,
    obtaining a sequence $B_1,\ldots,B_\ell$.
 \item
     For each $i\in[\ell]$, prepend block $B_i$ in this sequence by all blocks from 
     $
      \bigcup_{i\in[m]}\blockset_i
      \cap
      \left(
        \past(B_i)\setminus\past(B_{i-1})
      \right)\ 
     $
     sorted topologically, breaking ties by block hash.
 \item
    Take all transactions from the resulting sequence of blocks, in the order as they
    appear.
    Remove all
    transactions that are invalid with respect to the ledger state formed by all
    the preceding transactions.
\end{enumerate}
\end{mfboxfig}
}

\section{Analysis tightness}
\label{app:tight}


The analysis of our main security theorem, Theorem~\ref{thm:main1}, draws from \cite{kiayias2017ouroboros}.
A more careful and improved analysis appears in \cite{bkmqr2020}, from which we may obtain the refined bound
$1 - RL(e^{-\Omega(\delta^3\kappa)}+e^{-\lambda})$ for the statement of Theorem~\ref{thm:main1}.
In this section we analyze a private chain attack and show a corresponding lower bound.
It reveals that $\delta^2\kappa$ needs to be bounded below by a constant and, consequently,
the dependency on $\delta^3$ cannot be improved much further.

\begin{prop}
	Common prefix does not hold against a
	$(1/2-\delta,m,\omega)$-bounded adversary, for $\delta\le1/\sqrt{\ell_{\rm cp}}$ and
	sufficiently large $\ell_{\rm cp}$.
\end{prop}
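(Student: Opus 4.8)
The plan is to establish this lower bound via a private-chain (Nakamoto-style) attack against the underlying PoS chain, showing that the attack succeeds with \emph{constant} probability --- which already defeats common prefix, since CP is required to hold except with negligible probability. It suffices to exhibit one $(1/2-\delta,m,\omega)$-bounded adversary that violates CP with parameter $\ell_{\rm cp}$. For clarity I would first take $\omega=0$ (pure PoS), so that a $(1/2-\delta,m,0)$-bounded adversary is simply one holding a $\beta=1/2-\delta$ fraction of stake; for general $\omega$ the same argument runs with the adversary controlling a $1/2-\delta$ fraction of \emph{both} resources, which by fairness (Theorem~\ref{thm:fairness}) leaves its adversarial \emph{virtual}-stake fraction essentially $1/2-\delta$, and the (lower-order) fairness slack only helps the attack. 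We work in the lock-step synchronous model ($\Delta=1$), which only strengthens the impossibility, and a flat environment. The adversary behaves honestly up to some slot $r_0$, then from $r_0$ onward grows a private chain $\mathcal{C}'$ off the current common prefix while honest parties keep extending the public chain $\mathcal{C}$.

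Let $H_t$ be the number of blocks the longest honest chain advances over the $t$ slots following $r_0$, and $A_t$ the length of the private chain the adversary assembles over the same slots. By the slot-leader election of Ouroboros, $A_t$ dominates a sum of $t$ i.i.d.\ Bernoulli indicators of adversarial-leader slots while $H_t$ is dominated by a sum of $t$ i.i.d.\ indicators of honest-leader slots; taking the active-slot coefficient $f$ small makes these bounds tight up to lower-order terms, and honest ``wasted'' slots (empty, or with more than one honest leader) only help the adversary. Hence $D_t := H_t - A_t$ is, up to lower-order corrections, a sum of $t$ i.i.d.\ bounded increments with per-step mean $\mu = c_1\delta$ and variance $v = c_2$ for absolute constants $c_1,c_2>0$ (the $\Theta(\delta)$ drift is exactly what $\beta=1/2-\delta$ produces). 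Fix a deterministic horizon $T=\Theta(\ell_{\rm cp})$ chosen so that, with overwhelming probability, $H_T\ge\ell$ for some $\ell$ with $\ell_{\rm cp}<\ell<k$ (possible since $H_t$ grows at a constant rate; here $k$ is the fork-bound of {\sf maxvalid-mc}/{\sf maxvalid-bg}, and we use that the regime of interest has $k$ a large enough multiple of $\ell_{\rm cp}$). Applying the Berry--Esseen theorem of Appendix~\ref{app:math} to $D_T/\sqrt{vT}$,
\[
\Pr[A_T \ge H_T] \;=\; \Pr[D_T \le 0] \;\ge\; \Phi\!\left(-\frac{\mu T}{\sqrt{v T}}\right) - \frac{\rho}{2 v^{3/2}\sqrt{T}} \;=\; \Phi\big(-c_3\,\delta\sqrt{\ell_{\rm cp}}\big) - O\big(\ell_{\rm cp}^{-1/2}\big).
\]
Since $\delta \le 1/\sqrt{\ell_{\rm cp}}$, the argument of $\Phi$ is bounded by the absolute constant $c_3$, so the first term is at least $\Phi(-c_3)>0$; for $\ell_{\rm cp}$ large enough the error term drops below $\tfrac12\Phi(-c_3)$, giving $\Pr[A_T\ge H_T]\ge\tfrac12\Phi(-c_3)$, a constant. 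On this event the adversary releases $\mathcal{C}'$, which is at least as long as $\mathcal{C}$ and forks $\ell>\ell_{\rm cp}$ blocks deep, so honest parties switch to it; an honest party holding $\mathcal{C}$ just before the release and one holding $\mathcal{C}'$ just after then witness $\mathcal{C}^{\lceil\ell_{\rm cp}}\not\prec\mathcal{C}'$. Thus CP with parameter $\ell_{\rm cp}$ fails with constant probability, i.e.\ it does not hold; equivalently, $\delta^2\ell_{\rm cp}$ must be $\Omega(1)$ for CP to hold, matching the positive bound up to the gap between $\delta^3$ and $\delta^2$.

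The step I expect to be the main obstacle is the reduction from the idealized random walk $D_t$ to an actual protocol execution: one must argue carefully that the longest honest chain advances by at most (a constant times) the number of honest-leader slots and that the private chain advances by exactly the number of adversarial-leader slots, accounting for transient forks among honest parties, multi-leader slots, the epoch-boundary nonce refresh, and network delay --- issues tamed by working in lock-step synchrony with small $f$, but still needing care to certify $\mu=\Theta(\delta)$ and $v=\Theta(1)$. A secondary point to verify is that the catch-up event genuinely yields a reorg of depth strictly between $\ell_{\rm cp}$ and $k$ that honest parties adopt under {\sf maxvalid-mc}/{\sf maxvalid-bg}: fixing $\ell$ in that interval handles {\sf maxvalid-mc}, and for {\sf maxvalid-bg} one additionally notes that the private chain, built at full adversarial rate, also wins the ``grows more quickly in the next $s$ slots'' test immediately after the fork.
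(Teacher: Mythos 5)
Your proposal is correct and takes essentially the same route as the paper: a private-chain attack whose success probability is lower-bounded via the Berry--Esseen theorem applied to a drift-$\Theta(\delta)$, variance-$\Theta(1)$ difference walk, with $\delta\sqrt{\ell_{\rm cp}}\le 1$ forcing the normal term to be a positive constant. The only cosmetic difference is that the paper indexes the race per block height with a three-valued variable $X_i\in\{-1,0,1\}$ (normalizing by $f$ to condition on a block at each height), whereas you index per slot over a horizon $T=\Theta(\ell_{\rm cp})$; the issues you flag (slot-vs-height bookkeeping, the chain-selection rules) are real but are glossed over by the paper's own proof as well.
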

\begin{proof}
	The adversary follows the protocol using his mining power to produce PoW
	blocks. The attack begins when the virtual stake of the adversary is
	$\omega\beta_w+(1-\omega)\beta_s\ge\frac12-\delta$.
	
	Define the random variable $X_i$ taking values in $\{-1,0,1\}$ according to who produced 
	blocks of height $i$. If it was the adversary only, then $X_i=1$; if both
	the adversary and honest parties, then $X_i=0$; if honest parties only, then
	$X_i=-1$.
	Let $p=1-(1-f)^{1/2+\delta}$ and $q=1-(1-f)^{1/2-\delta}$.
	We have
	\begin{align*}
		&\Pr[X_i=-1]=p(1-q)/f,\quad \\
		&\Pr[X_i=1]=q(1-p)/f,\quad\text{and}\quad \\
		&\Pr[X_i=0]=pq/f.
	\end{align*}
	Note that the adversary will create a fork of length $k$, if $X_1+\cdots+X_k\ge0$.
	We will use the Berry-Esseen bound to lower bound the probability he
	succeeds. To that end we compute
	\begin{align*}
		\mu&
			=\E[X_i]=(q-p)/f
			=[(1-f)^{1/2+\delta}-(1-f)^{1/2-\delta}]/f
		    \\&
		    =[(1-f)^{1+\delta}-(1-f)^{1-\delta}]/[f(1-f)^{1/2}]
		    \\&
		    >[(1-f)^{1+\delta}-(1-f)^{1-\delta}]/f
			>-2\delta
		,\\
		\sigma^2&
			=\E[X_i^2]-\mu^2
			=1-pq/f-\mu^2
			>1-f-4\delta^2
		,\\
		\rho&
			=\E[|X_i-\mu|^3]
			<(1+2\delta)^3
	.\end{align*}
	Observe that for $\delta\le1/3$ and $f\le1/9$.

	Let $F_k=(X_1+\cdots+X_k-\mu k)/\sqrt{k\sigma^2}$.
	By the Berry-Essen Theorem 
	we have that, for $\delta\le1/\sqrt k$,
	\begin{align*}
		\Pr\Bigl[F_k\ge-\mu k/\sqrt{k\sigma^2}\Bigr]
		&\ge\Phi^c\bigl(3)-O\bigl(k^{-1/2}\bigr)
	.\end{align*}
\end{proof}

Although it might be possible to strengthen the above attack and analysis to obtain slight improvements,
it remains an open question whether the exponential drop in the probability of security failure 
is in the order of $\delta^3\kappa$ or $\delta^2\kappa$.

\end{document}